\documentclass[journal,onecolumn]{IEEEtran}
\usepackage{amsmath,amssymb,amsfonts,amsthm}
\usepackage{mathtools}
\usepackage{algorithmic}
\usepackage{graphicx}
\usepackage{textcomp}
\usepackage{xcolor}
\usepackage{cite}
\usepackage{url}
\usepackage{bm}

\newtheorem{theorem}{Theorem}
\newtheorem{lemma}{Lemma}
\newtheorem{proposition}{Proposition}
\newtheorem{corollary}{Corollary}
\theoremstyle{definition}
\newtheorem{definition}{Definition}
\newtheorem{remark}{Remark}
\newtheorem{example}{Example}
\newtheorem{assumption}{Assumption}
\begin{document}

\title{Risk-Aware Information Theory}

\author{Hamidou Tembine,~\IEEEmembership{Senior Member,~IEEE} 
  \thanks{EECS, School of Engineering, UQTR, Canada}   \thanks{TIMADIE, Paris, France}  \thanks{contact: tembine@ieee.org}
\thanks{Manuscript received June 15, 2026;  This work was supported in part by the Mean-Field-Type Game Theory for Machine Intelligence project of TIMADIE.}}

\markboth{submitted}%
{Tembine: Risk-Aware Information Theory}

\maketitle

\begin{abstract}
We develop a risk-aware information theory by replacing expectation with expectiles, introducing expectile entropy, divergence, and mutual information. These quantities exhibit behaviors impossible under Shannon's  risk-neutral framework, including negative divergence under risk-seeking regimes and a fundamental separation from classical mutual information. In multiuser systems, the framework naturally induces a mean-field-type game theory of information exchange, where achievable rate regions become endogenous to heterogeneous risk-sensitivity indices. Our results reveal that Shannon information alone cannot quantify the extreme risks driving advanced machine intelligence, establishing a foundation for risk-aware communication, learning, collective intelligence, and safe autonomous systems.

\end{abstract}

\begin{IEEEkeywords}
Risk-aware information theory, expectiles, coherent risk measures, mutual information, Kullback-Leibler divergence.
\end{IEEEkeywords}

\IEEEpeerreviewmaketitle

\section{Introduction}

\IEEEPARstart{S}{ince} its inception, Shannon's information theory has been built on the expectation operator. The Shannon entropy $H(X) = -\mathbb{E}[\log f_X(X)]$, the KL divergence $D_{\text{KL}}(P\|Q) = \mathbb{E}_P[\log(dP/dQ)]$, and the mutual information $I(X;Y) = D_{\text{KL}}(P_{XY}\|P_XP_Y)$ all quantify information as an average over the outcome space. This expectation‑centric formulation implicitly assumes risk‑neutrality: large deviations and rare events contribute to the average in proportion to their probability, without any additional weighting. However, many modern applications demand a risk‑aware treatment. In autonomous systems, communication under adversarial jamming, financial signal processing, and biomedical inference, the tail behavior of the information measure can dominate performance. A low‑probability event that causes catastrophic decoding failure may be invisible to Shannon's mutual information if its probability is sufficiently small. Conversely, an optimistic but rare favorable outcome might be overvalued by a risk‑neutral agent. This asymmetry where losses and gains are treated differently, is naturally captured by coherent risk measures.

{\bf Expectiles: A Bridge Between Risk and Information.}
Among the class of coherent risk measures \cite{artzner1999coherent}, expectiles \cite{newey1987asymmetric} stand out for their combination of coherence, elicitability, and computational tractability. For a random variable $Z$ with $\mathbb{E}[Z^2] < \infty$ and a parameter $\tau \in (0,1)$, the $\tau$-expectile is defined as
\begin{equation}
e_\tau(Z) \in \arg\min_{c \in \mathbb{R}} \mathbb{E}\left[ \ell_\tau(Z - c) \right], 
\end{equation}
where $\ell_\tau(u) = |\tau - \mathbf{1}_{\{u < 0\}}| u^2$.
For $\tau = 1/2$, the expectile reduces to the expectation $\mathbb{E}[Z]$; for $\tau > 1/2$, it weights upper tails more heavily (risk‑averse for losses); for $\tau < 1/2$, it is risk‑seeking.

{\bf Contributions.}
This paper introduces a systematic replacement of the expectation operator by the expectile in information‑theoretic quantities, leading to:
\begin{enumerate}
    \item {Expectile entropy:} $H_\tau(X) = e_\tau[-\log f_X(X)]$,
    \item {Expectile KL divergence:} $D_\tau(P\|Q) = e_\tau^{(P)}\left[\log\frac{dP}{dQ}(X)\right]$,
    \item {Expectile mutual information:} $I_\tau(X;Y) = D_\tau(P_{XY}\|P_XP_Y)$.
\end{enumerate}

We prove that for $\tau \ge 1/2$, the expectile KL divergence is non‑negative and zero  when $P=Q$, while for $\tau < 1/2$ it can be negative and may vanish for distinct distributions. Our central theorem (Theorem~\ref{thm:non_equiv}) establishes that for any $\tau \neq 1/2$, there exist distributions for which $I_\tau(X;Y) \neq I(X;Y)=I_{0.5}(X;Y)$.

{\bf Related Work.}
Recent efforts in risk‑aware information theory have explored Rényi divergences \cite{renyi1961measures}, $\alpha$-divergences, and $(f,\Gamma)$-divergences \cite{liese2006divergences}. These retain the expectation structure but transform the argument. In contrast, our approach replaces the operator itself. The only prior attempt in this direction is the work on expectation‑constrained divergence \cite{hung2020expectation}, but none have systematically replaced expectation with a coherent risk measure. Our work is the first to propose and analyze expectile‑based information measures with  non‑equivalence results for $\tau\neq 0.5.$

{\bf Structure.}
The remainder of the paper is organized as follows. The next section  introduces the preliminaries on expectiles and their key structural properties. We then develop the fundamental information-theoretic quantities, including expectile entropy, expectile divergence, expectile mutual information, and expectile cross-entropy. After that we establish the  theoretical results, proving the non-equivalence between risk-aware and Shannon information measures, together with the failure of classical informational additivity and symmetry under asymmetric risk sensitivity. We investigate operational implications, including risk-aware channel capacity, Gaussian channels, capacity regions for multi-user systems, and computational aspects. We discuss broader applications, limitations, and future research directions toward a  risk-aware information theory and its associated {\bf mean-field-type game-theoretic} interpretation of information exchange.

\section{Preliminaries and Notation}
Let $(\Omega, \mathcal{F}, \mathbb{P})$ be a probability space. All random variables are real‑valued and assumed to have finite second moment where necessary. Let $P$ and $Q$ be probability measures absolutely continuous with respect to a dominating measure $\mu$, with densities $p = dP/d\mu$, $q = dQ/d\mu$.

\begin{definition}[Expectile]
For a random variable $Z$ with $\mathbb{E}[Z^2] < \infty$ and $\tau \in (0,1)$, the $\tau$-expectile $e_\tau(Z)$ is the unique solution to
\begin{equation}
\begin{array}{l}
\tau \int_{z > e_\tau(Z)} (z - e_\tau(Z)) \, dF_Z(z) 
= (1-\tau) \int_{z < e_\tau(Z)} (e_\tau(Z) - z) \, dF_Z(z).
\end{array}
\end{equation}
When $\tau = 1/2$, $e_{1/2}(Z) = \mathbb{E}[Z]$. The mapping $Z \mapsto e_\tau(Z)$ is coherent for $\tau \ge 1/2$.
\end{definition}

Key properties of expectiles \cite{bellini2014generalized} include:
\begin{itemize}
    \item \textbf{Translation invariance:} $e_\tau(Z + a) = e_\tau(Z) + a$ for any $a \in \mathbb{R}$.
    \item \textbf{Positive homogeneity:} $e_\tau(\lambda Z) = \lambda e_\tau(Z)$ for $\lambda > 0$.
    \item \textbf{Monotonicity:} If $Z \le Z'$ almost surely, then $e_\tau(Z) \le e_\tau(Z')$.
    \item \textbf{Subadditivity for $\tau \ge 1/2$:} $e_\tau(Z+W) \le e_\tau(Z) + e_\tau(W)$.
    \item \textbf{Strict monotonicity in $\tau$:} For any non‑degenerate $Z$, the function $\tau \mapsto e_\tau(Z)$ is strictly increasing on $(0,1)$.
    \item \textbf{Negative scaling:} For $\lambda < 0$, $e_\tau(\lambda Z) = \lambda e_{1-\tau}(Z)$.
\end{itemize}
For $\tau \neq 1/2$, the expectile is not linear; in particular, $e_\tau(Z_1+Z_2) \neq e_\tau(Z_1)+e_\tau(Z_2)$ in general.

\begin{lemma}[First-Order Characterization]\label{lem:foc}
The $\tau$-expectile satisfies the score equation
\begin{equation}\label{eq:foc}
\mathbb{E}[\rho_\tau(X - e_\tau(X))] = 0,
\end{equation}
where $\rho_\tau(z) := \tau z_+ - (1-\tau) z_-$ with $z_+ := \max(z,0)$ and $z_- := \max(-z,0)$.
\end{lemma}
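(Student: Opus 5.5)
The plan is to work from the variational definition $e_\tau(X)\in\arg\min_{c}\mathbb{E}[\ell_\tau(X-c)]$ with $\ell_\tau(u)=|\tau-\mathbf{1}_{\{u<0\}}|u^2$. First-order optimality in $c$ should give the score equation. As a consistency check, the result should reduce to the defining balance equation of the Definition.

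First, I would note that $\ell_\tau$ is continuously differentiable, with
\[
\ell_\tau'(u)=2\bigl(\tau u_+-(1-\tau)u_-\bigr)=2\rho_\tau(u).
\]
Indeed, for $u>0$ the derivative is $2\tau u$, for $u<0$ it is $2(1-\tau)u=-2(1-\tau)u_-$, and both one-sided derivatives vanish at $0$. Moreover, $\rho_\tau$ is Lipschitz with constant $\max(\tau,1-\tau)\le 1$.

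Next, define $\Psi(c):=\mathbb{E}[\ell_\tau(X-c)]$. This is finite for every $c$ because $\mathbb{E}[X^2]<\infty$ and $\ell_\tau(u)\le u^2$.

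The main step is to justify differentiating under the expectation. For fixed $c$ and $|h|\le 1$, the mean value theorem together with the Lipschitz bound on $\rho_\tau$ gives
\[
\Bigl|\frac{\ell_\tau(X-c-h)-\ell_\tau(X-c)}{h}\Bigr|\le 2\sup_{|s|\le 1}|\rho_\tau(X-c-s)|\le 2(|X-c|+1).
\]
The right-hand side is integrable since $X\in L^2\subset L^1$. Dominated convergence then yields $\Psi'(c)=-2\,\mathbb{E}[\rho_\tau(X-c)]$, so $\Psi$ is $C^1$.

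It remains to conclude that the derivative vanishes at the minimizer. Since $\ell_\tau$ is convex (a nonnegative combination of the convex functions $u_+^2$ and $u_-^2$), $\Psi$ is convex. It is coercive because $\Psi(c)\ge\min(\tau,1-\tau)\,\mathbb{E}[(X-c)^2]\to\infty$, so a minimizer exists. The map $c\mapsto\mathbb{E}[\rho_\tau(X-c)]$ is strictly decreasing: on any interval $[c_1,c_2]$, positive mass of $X$ either lies above $c_1$ or below $c_2$, so the minimizer is unique, matching the ``unique solution'' in the Definition.

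Any interior minimizer of a differentiable function on $\mathbb{R}$ has $\Psi'(e_\tau(X))=0$, which is \eqref{eq:foc}. Finally, expanding $\mathbb{E}[\rho_\tau(X-e)]=\tau\int_{z>e}(z-e)\,dF_X-(1-\tau)\int_{z<e}(e-z)\,dF_X$ shows that \eqref{eq:foc} is exactly the balance equation in the Definition. This confirms that the two characterizations agree.

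The only delicate point is the interchange of derivative and expectation, and the domination bound above handles it.
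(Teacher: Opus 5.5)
Your proof is correct, but it takes a different route from the paper. The paper gives no proof of this lemma. Under its formal Definition in the Preliminaries, $e_\tau(X)$ is \emph{defined} as the unique solution of the balance equation $\tau\int_{z>e}(z-e)\,dF_X=(1-\tau)\int_{z<e}(e-z)\,dF_X$. Since $\mathbb{E}[\rho_\tau(X-e)]=\tau\,\mathbb{E}[(X-e)_+]-(1-\tau)\,\mathbb{E}[(X-e)_-]$, the score equation is just that definition rewritten, which is only your final expansion step.

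You instead start from the argmin characterization in the introduction and obtain the score equation by differentiating under the expectation. That costs you the domination bound, convexity and coercivity. In return it establishes more than the paper does. It shows that the introduction's variational definition and the Preliminaries' balance-equation definition agree. It proves the existence and uniqueness that the Definition only asserts. It also proves that $\theta\mapsto\mathbb{E}[\rho_\tau(Z-\theta)]$ is strictly decreasing, which the paper uses without proof in Lemma~\ref{lem:monotonicity}.

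For that last fact, your ``positive mass'' remark is true but beside the point. Strictness is automatic, because $\rho_\tau$ is piecewise linear with slopes $\tau,1-\tau>0$. Hence $\rho_\tau(u)-\rho_\tau(v)\ge\min(\tau,1-\tau)(u-v)$ for $u>v$. This gives $\mathbb{E}[\rho_\tau(X-c_1)]-\mathbb{E}[\rho_\tau(X-c_2)]\ge\min(\tau,1-\tau)(c_2-c_1)>0$ for $c_1<c_2$, with no condition on the law of $X$.
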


\begin{lemma}[Strict Monotonicity]\label{lem:monotonicity}
For $X, Z \in L^2$ with $X \le Z$ a.s. and $\mathbb{P}(X < Z) > 0$, we have $e_\tau(X) < e_\tau(Z)$ for all $\tau \in (0,1)$.
\end{lemma}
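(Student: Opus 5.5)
The plan is to work directly with the score function of Lemma~\ref{lem:foc}. Define, for a fixed random variable $W\in L^2$,
\[
\psi_W(c) := \mathbb{E}[\rho_\tau(W - c)] = \tau\,\mathbb{E}[(W-c)_+] - (1-\tau)\,\mathbb{E}[(W-c)_-].
\]
By Lemma~\ref{lem:foc}, $e_\tau(W)$ is a zero of $\psi_W$. I would first record two facts about $\psi_W$:
\begin{enumerate}
\item It is continuous.
\item It is strictly decreasing in $c$.
\end{enumerate}
For the second, when $c$ increases, $(W-c)_+$ is nonincreasing and $(W-c)_-$ is nondecreasing. Moreover, for $c<c'$,
\[
\psi_W(c)-\psi_W(c') \ge \min(\tau,1-\tau)\,(c'-c) > 0,
\]
because $\rho_\tau$ has slopes $\tau$ and $1-\tau$, so $\rho_\tau(u)-\rho_\tau(u')\ge \min(\tau,1-\tau)(u-u')$ for $u>u'$. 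It follows that the zero is unique, which also justifies the uniqueness asserted in the definition.

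Next, apply this to $X\le Z$. The function $\rho_\tau$ is nondecreasing. More precisely, it is strictly increasing, since it has positive slope $\tau$ on $(0,\infty)$ and $1-\tau$ on $(-\infty,0)$. Hence $\rho_\tau(X-c)\le \rho_\tau(Z-c)$ pointwise for every $c$. On the event $\{X<Z\}$, which has positive probability, the inequality is strict. Taking expectations, I would conclude that $\psi_X(c) < \psi_Z(c)$ for every $c$. This uses the fact that a nonnegative random variable $\rho_\tau(Z-c)-\rho_\tau(X-c)$ which is strictly positive on a set of positive measure has strictly positive expectation. The integrability needed here is automatic because $X,Z\in L^2\subset L^1$.

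Finally, evaluate at $c=e_\tau(X)$. This gives
\[
0=\psi_X(e_\tau(X))<\psi_Z(e_\tau(X)).
\]
Since $\psi_Z$ is strictly decreasing and vanishes at $e_\tau(Z)$, the inequality $\psi_Z(e_\tau(X))>0$ forces $e_\tau(X)<e_\tau(Z)$. The quantitative decrease bound even yields the explicit estimate
\[
e_\tau(Z)-e_\tau(X)\ge \frac{\psi_Z(e_\tau(X))}{\max(\tau,1-\tau)}.
\]
This follows because $\psi_Z$ has "slope" at most $\max(\tau,1-\tau)$ in absolute value, by Lipschitzness of $\rho_\tau$.

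The only delicate step is the strictness $\psi_X(c)<\psi_Z(c)$, and this is where strict monotonicity of $\rho_\tau$ must be used. If one used only the weak monotonicity of expectiles listed in the preliminaries, strictness would be lost. The point is that $\rho_\tau$ has no flat pieces for $\tau\in(0,1)$, so no event of positive probability can be "absorbed" without changing the score.
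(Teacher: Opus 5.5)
Your proof is correct and follows essentially the same route as the paper. Both evaluate the score $\mathbb{E}[\rho_\tau(Z-\theta)]$ at $\theta=e_\tau(X)$, use strict monotonicity of $\rho_\tau$ to show it is strictly positive there, and conclude from the strict decrease of $\theta\mapsto\mathbb{E}[\rho_\tau(Z-\theta)]$; your slope bound $\psi_W(c)-\psi_W(c')\ge\min(\tau,1-\tau)(c'-c)$ supplies the justification for that strict decrease, which the paper only asserts, and your quantitative gap estimate is a correct addition.
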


\begin{proof}
Let $\theta_X := e_\tau(X)$. Since $\rho_\tau$ is strictly increasing, $\rho_\tau(X-\theta_X) \le \rho_\tau(Z-\theta_X)$ a.s., with strict inequality on positive measure. Taking expectations and using \eqref{eq:foc}:
\[
0 = \mathbb{E}[\rho_\tau(X-\theta_X)] < \mathbb{E}[\rho_\tau(Z-\theta_X)].
\]
The mapping $\theta \mapsto \mathbb{E}[\rho_\tau(Z-\theta)]$ is strictly decreasing, hence $\theta_Z > \theta_X$.
\end{proof}

\section{Expectile‑Based Information Measures}

\subsection{Expectile Entropy}
\begin{definition}[Expectile Entropy]
For a random variable $X$ with density $p_X$ and $\mathbb{E}[(\log p_X(X))^2] < \infty$, the $\tau$-expectile entropy is
\begin{equation}
H_\tau(X) = e_\tau\left(-\log p_X(X)\right).
\end{equation}
For $\tau = 1/2$, $H_{1/2}(X) = H_{\text{Shannon}}(X)$.
\end{definition}

\begin{proposition}[Ordering]
For any distribution with non‑constant self‑information,
\begin{equation}
H_\tau(X) \le H_{1/2}(X) \le H_{\tau'}(X) \quad \text{for } 0 < \tau < 1/2 < \tau' < 1,
\end{equation}
with strict inequalities unless the self‑information is almost surely constant.
\end{proposition}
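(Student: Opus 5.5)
Set $Z := -\log p_X(X)$, the self-information. By the integrability assumption in the definition of expectile entropy, $Z \in L^2$. We then have $H_\tau(X) = e_\tau(Z)$ for every $\tau\in(0,1)$, and $H_{1/2}(X) = e_{1/2}(Z) = \mathbb{E}[Z]$. This reduces the proposition to a statement about the map $\tau \mapsto e_\tau(Z)$ for a single fixed random variable. I would prove that this map is nondecreasing, and strictly increasing when $Z$ is non-degenerate. That is exactly the ``strict monotonicity in $\tau$'' property listed among the preliminaries; I would nonetheless derive it from Lemma~\ref{lem:foc} so the argument is self-contained.

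For fixed $\theta$, define
\[
G(\tau,\theta) := \mathbb{E}[\rho_\tau(Z-\theta)] = \tau\,\mathbb{E}[(Z-\theta)_+] - (1-\tau)\,\mathbb{E}[(Z-\theta)_-].
\]
Lemma~\ref{lem:foc} gives $G(\tau, e_\tau(Z)) = 0$. Two monotonicity facts drive the argument.

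\emph{(i) Monotonicity in $\theta$.} The map $\theta \mapsto G(\tau,\theta)$ is strictly decreasing, as already used in the proof of Lemma~\ref{lem:monotonicity}. Indeed, $\mathbb{E}[(Z-\theta)_+]$ is nonincreasing in $\theta$, $\mathbb{E}[(Z-\theta)_-]$ is nondecreasing, and at least one of them changes strictly.

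\emph{(ii) Monotonicity in $\tau$.} For fixed $\theta$, $\partial_\tau G = \mathbb{E}[(Z-\theta)_+] + \mathbb{E}[(Z-\theta)_-] = \mathbb{E}|Z-\theta| \ge 0$. This is strictly positive unless $Z=\theta$ a.s.

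Now take $\tau_1<\tau_2$ and set $\theta_1 = e_{\tau_1}(Z)$. Then $G(\tau_2,\theta_1) \ge G(\tau_1,\theta_1) = 0 = G(\tau_2,e_{\tau_2}(Z))$. Since $G(\tau_2,\cdot)$ is strictly decreasing by (i), this forces $e_{\tau_2}(Z)\ge \theta_1$. Applying this with $(\tau_1,\tau_2) = (\tau,1/2)$ and then with $(1/2,\tau')$ gives the weak chain $H_\tau(X)\le H_{1/2}(X)\le H_{\tau'}(X)$.

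For strictness, suppose the self-information is not a.s.\ constant. Then $\mathbb{E}|Z-\theta_1|>0$ for every $\theta_1$, so $G(\tau_2,\theta_1) > 0$, and strict decrease in $\theta$ gives $e_{\tau_2}(Z) > e_{\tau_1}(Z)$. Conversely, if $Z \equiv c$ a.s., translation invariance gives $e_\tau(Z)=c$ for all $\tau$, so all three quantities coincide. This explains the ``unless'' clause: the strict inequalities fail exactly in the degenerate case.

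The only step requiring care is the strict decrease of $G(\tau,\cdot)$ in (i), which needs a genuinely strict inequality. Writing $G(\tau,\theta)$ as an integral of $\tau\,\mathbf 1\{Z>t\}+(1-\tau)\mathbf 1\{Z<t\}$ over $t$ (up to sign) shows that for $\theta<\theta'$,
\[
G(\tau,\theta) - G(\tau,\theta') \ge \min(\tau,1-\tau)\,(\theta'-\theta) > 0.
\]
This settles it without any continuity assumption on the law of $Z$.
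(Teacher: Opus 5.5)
Your proof is correct and takes the paper's route: both reduce the claim to strict monotonicity of $\tau\mapsto e_\tau(Z)$ for the self-information $Z=-\log p_X(X)$, together with $e_{1/2}(Z)=\mathbb{E}[Z]$. The only difference is that the paper cites this monotonicity (Prop.~3 of \cite{bellini2014generalized}), whereas you derive it from the score equation of Lemma~\ref{lem:foc}, using that $G(\tau,\theta)$ is affine in $\tau$ and strictly decreasing in $\theta$, which makes your argument self-contained.
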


\begin{IEEEproof}
The function $\tau \mapsto e_\tau(Z)$ is strictly increasing for any non‑degenerate $Z$ (see Prop.~3 in \cite{bellini2014generalized}). Applying this to $Z = -\log p_X(X)$ yields the result.
\end{IEEEproof}

\subsection{Expectile KL Divergence}
\begin{definition}[Expectile KL Divergence]
Let $P$ and $Q$ be probability measures with $P \ll Q$ and $\mathbb{E}\left[ (\log\frac{dP}{dQ}(X))^2\right] < +\infty,$
The $\tau$-expectile KL divergence is
\begin{equation}
D_\tau(P\|Q) = e_\tau^{(P)}\left[\log\frac{dP}{dQ}(X)\right],
\end{equation}
where $X \sim P$ and the expectile is taken under $P$.
\end{definition}

\begin{proposition}[Non-negativity for $\tau \ge 1/2$] \label{prop:nonnegative_divergence}
Let $P$ and $Q$ be probability measures such that $P\ll Q$ and $\mathbb E_P \!\left[ \left( \log\frac{dP}{dQ} \right)^2 \right] < +\infty$. For $\tau\in[1/2,1)$,
$
D_\tau(P\|Q)  \ge 0.
$
Moreover, $D_{1/2}(P\|Q) = D_{\mathrm{KL}}(P\|Q)$.
\end{proposition}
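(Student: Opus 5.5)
The plan is to reduce the claim to the case $\tau=1/2$ using the monotonicity of expectiles in the level, and then to invoke classical non-negativity of the Kullback--Leibler divergence. Set $Z:=\log\frac{dP}{dQ}(X)$ with $X\sim P$. The hypothesis $\mathbb E_P[Z^2]<\infty$ guarantees that $e_\tau^{(P)}(Z)$ is well defined for every $\tau\in(0,1)$, and that $Z\in L^1(P)$.

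\emph{Step 1: the identity $D_{1/2}(P\|Q)=D_{\mathrm{KL}}(P\|Q)$.} At $\tau=1/2$ the score equation \eqref{eq:foc} of Lemma~\ref{lem:foc} reads
\[
\tfrac12\,\mathbb E_P\bigl[(Z-\theta)_+-(Z-\theta)_-\bigr]=\tfrac12\bigl(\mathbb E_P[Z]-\theta\bigr)=0 .
\]
Hence $e_{1/2}^{(P)}(Z)=\mathbb E_P[Z]=\mathbb E_P\bigl[\log\frac{dP}{dQ}\bigr]=D_{\mathrm{KL}}(P\|Q)$. Integrability is not an issue, because $Z\in L^2(P)\subset L^1(P)$.

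\emph{Step 2: non-negativity at $\tau=1/2$.} This is Gibbs' inequality. I will apply Jensen's inequality to the concave function $\log$:
\[
-\mathbb E_P[Z]=\mathbb E_P\Bigl[\log\frac{dQ}{dP}\Bigr]\le\log\mathbb E_P\Bigl[\frac{dQ}{dP}\Bigr]=\log Q(\{dP/dQ>0\})\le 0 .
\]
Here the last equality uses $P\ll Q$, so that $\frac{dQ}{dP}=1/\frac{dP}{dQ}$ on the support of $P$. It follows that $D_{1/2}(P\|Q)\ge 0$.

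\emph{Step 3: passage to $\tau\in(1/2,1)$.} If $Z$ is $P$-a.s.\ constant, equal to $c$, then every expectile equals $c$, and $c=\mathbb E_P[Z]\ge 0$ by Step 2. Otherwise $Z$ is non-degenerate, and the strict monotonicity of $\tau\mapsto e_\tau(Z)$ listed among the key properties (Prop.~3 in \cite{bellini2014generalized}) gives $e_\tau^{(P)}(Z)>e_{1/2}^{(P)}(Z)\ge 0$.

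If one prefers not to rely on that citation, monotonicity can be verified directly from \eqref{eq:foc}. Take $\theta=\mathbb E_P[Z]$. Since $\mathbb E_P[(Z-\theta)_+]=\mathbb E_P[(Z-\theta)_-]$, we get
\[
\mathbb E_P[\rho_\tau(Z-\theta)]=(2\tau-1)\,\mathbb E_P[(Z-\theta)_+]\ge 0 .
\]
The map $\theta\mapsto\mathbb E_P[\rho_\tau(Z-\theta)]$ is decreasing, so its root $e_\tau^{(P)}(Z)$ lies at or above $\mathbb E_P[Z]$.

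The only step needing care is Step 2, because $\frac{dP}{dQ}$ may vanish on a set of positive $Q$-measure. This is why I state the bound as $\log Q(\cdot)\le 0$ rather than as equality to $0$.
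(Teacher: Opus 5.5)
Your proof is correct and follows the same route as the paper. Gibbs' inequality gives $e_{1/2}(Z)=\mathbb E_P[Z]=D_{\mathrm{KL}}(P\|Q)\ge 0$, and monotonicity of $\tau\mapsto e_\tau(Z)$ lifts this to $\tau>1/2$. Your additions are sound: the Jensen argument for $P\ll Q$ without $Q\ll P$, and the score-equation proof that $e_\tau(Z)\ge\mathbb E_P[Z]$. Your treatment of the degenerate case is also more careful than the paper's closing remark: $Z$ can be a positive constant, e.g.\ $P=Q(\cdot\mid S)$ with $Q(S)<1$, so constancy of $Z$ does not force $P=Q$.
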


\begin{IEEEproof}
Let $L = \log(dP/dQ)$. By the classical Gibbs inequality, $\mathbb{E}_P[L] = D_{\mathrm{KL}}(P\|Q) \ge 0$. Because $\tau \mapsto e_\tau(L)$ is strictly increasing and $e_{1/2}(L)=\mathbb{E}_P[L]$, we have for $\tau > 1/2$: $e_\tau(L) \ge e_{1/2}(L) \ge 0$. The case $\tau=1/2$ is the classical divergence. The inequality is strict unless $L$ is constant $P$-a.s., which happens only when $P=Q$.
\end{IEEEproof}

\begin{corollary}
If $P=Q$, then $D_\tau(P\|Q)=0$ for every $\tau\in(0,1)$.
\end{corollary}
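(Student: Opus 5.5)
The plan is to reduce the claim to the fact that an expectile of a constant random variable equals that constant. If $P=Q$, then the Radon--Nikodym derivative satisfies $\frac{dP}{dQ}=1$ $Q$-almost surely, and hence also $P$-almost surely, because $P\ll Q$ holds trivially. Consequently the log-likelihood ratio $L=\log\frac{dP}{dQ}(X)$ with $X\sim P$ is equal to $0$ $P$-almost surely. In particular $\mathbb{E}_P[L^2]=0<+\infty$, so the integrability hypothesis in the definition of $D_\tau$ is met and $D_\tau(P\|Q)$ is well defined for every $\tau\in(0,1)$.

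Next, show that $e_\tau(L)=0$ for every $\tau\in(0,1)$. The most direct route is through the first-order characterization, Lemma~\ref{lem:foc}. The map $\theta\mapsto \mathbb{E}[\rho_\tau(L-\theta)]$ takes the value $\rho_\tau(-\theta)$, because $L=0$ almost surely. For $\theta\neq 0$ this is $\tau(-\theta)>0$ when $\theta<0$ and $-(1-\tau)\theta<0$ when $\theta>0$, so the unique root of the score equation is $\theta=0$. Equivalently, in the defining equation of the expectile, both integrals vanish at $c=0$ since $L$ has no mass off $0$, and uniqueness of the expectile gives $e_\tau(L)=0$.

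As an alternative check, one can use translation invariance together with positive homogeneity: $e_\tau(0)=e_\tau(2\cdot 0)=2e_\tau(0)$, which forces $e_\tau(0)=0$. Almost-sure equality of random variables leaves the expectile unchanged, because the expectile depends only on the law of the variable.

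No real obstacle arises. The only point requiring care is that the expectile is applied to an almost surely constant variable, not an identically constant one, and that the case $\tau<1/2$ is covered as well. This matters because Proposition~\ref{prop:nonnegative_divergence} and the ordering argument only treat $\tau\ge 1/2$ or non-degenerate variables, and so cannot be invoked here. The argument above works uniformly in $\tau$ because it relies only on the score equation, not on coherence.
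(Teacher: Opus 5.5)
Your argument is correct and supplies the reasoning the paper leaves implicit, since the corollary is stated without proof after Proposition~\ref{prop:nonnegative_divergence}: when $P=Q$ the log-likelihood ratio is $0$ $P$-a.s., and the expectile of the zero variable is $0$, which you verify cleanly via the score equation. You are also right that Proposition~\ref{prop:nonnegative_divergence} gives only an inequality for $\tau\ge 1/2$, so it cannot by itself yield the equality or the case $\tau<1/2$; your direct argument covers all $\tau\in(0,1)$ uniformly.
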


\subsection{Expectile Mutual Information}
\begin{definition}[Expectile Mutual Information]
For random variables $X$ and $Y$ with joint distribution $P_{XY}$ and product of marginals $P_XP_Y$, the $\tau$-expectile mutual information is
\begin{equation}
I_\tau(X;Y) = D_\tau(P_{XY}\|P_XP_Y) = e_\tau^{(P_{XY})}\left[\log\frac{p_{XY}(X,Y)}{p_X(X)p_Y(Y)}\right].
\end{equation}
\end{definition}

\begin{proposition}[Sign of Expectile Mutual Information] \label{prop:sign}
\begin{enumerate}
    \item For all $\tau \in [1/2,1)$, $I_\tau(X;Y) \ge 0$ for every joint distribution $P_{XY}$.
    \item For any $\tau \in (0,1/2)$, there exist distributions such that $I_\tau(X;Y) < 0$.
    \item Hence $I_\tau(X;Y) \ge 0$ for all $X,Y$ if and only if $\tau \ge 1/2$.
\end{enumerate}
\end{proposition}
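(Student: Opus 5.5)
Part 1 follows directly from Proposition~\ref{prop:nonnegative_divergence} applied with $P=P_{XY}$ and $Q=P_XP_Y$. The hypotheses are $P_{XY}\ll P_XP_Y$ and a finite second moment of the information density $L=\log\frac{p_{XY}}{p_Xp_Y}(X,Y)$ under $P_{XY}$; these are exactly the standing assumptions under which $I_\tau$ is defined. The chain $e_\tau(L)\ge e_{1/2}(L)=I(X;Y)\ge 0$ then holds for $\tau\ge1/2$. It uses the monotonicity of $\tau\mapsto e_\tau(L)$ and the classical nonnegativity of mutual information.

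For part 2 I will give an explicit discrete construction. Take $X,Y\in\{0,1\}$ with uniform marginals and $\mathbb P(X=Y)=1-\varepsilon$, for small $\varepsilon\in(0,1/2)$. The information density then takes two values:
\begin{itemize}
\item $a=\log(2(1-\varepsilon))>0$ with probability $1-\varepsilon$;
\item $b=\log(2\varepsilon)<0$ with probability $\varepsilon$.
\end{itemize}

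For a two-point variable, the score equation of Lemma~\ref{lem:foc} determines $e_\tau$ in closed form. For $e\in(b,a)$ the equation reads
\[
\tau(1-\varepsilon)(a-e)=(1-\tau)\varepsilon(e-b),
\]
so
\[
e_\tau=\frac{\tau(1-\varepsilon)a+(1-\tau)\varepsilon b}{\tau(1-\varepsilon)+(1-\tau)\varepsilon}.
\]
Its sign is the sign of the numerator, $N=\tau(1-\varepsilon)\log(2(1-\varepsilon))+(1-\tau)\varepsilon\log(2\varepsilon)$.

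Rather than tuning $\varepsilon$, which is awkward because the negative term $\varepsilon\log(2\varepsilon)$ tends to $0$, I will use a cleaner route. Fix any dependent pair with non-constant information density $L$. By the strict monotonicity of $\tau\mapsto e_\tau(L)$ and continuity of expectiles as $\tau\to0$, we get $e_\tau(L)\to\operatorname{ess\,inf}L$. It therefore suffices to make $\operatorname{ess\,inf}L<0$ while keeping $\mathbb E L$ small, or, better, to choose the distribution depending on $\tau$.

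Concretely, for a given $\tau<1/2$, take the two-point construction above with weights as free parameters: a dependent joint law whose information density equals $a>0$ with probability $p$ and $b<0$ with probability $1-p$. The requirement $N<0$ is then $\tau p a+(1-\tau)(1-p)b<0$, while Gibbs only forces $pa+(1-p)b\ge0$. Since $\tau/(1-\tau)<1$, it suffices to choose a law with $pa+(1-p)b$ small and positive, namely a nearly independent law. Take $X,Y$ binary uniform with $\mathbb P(X=Y)=(1+\delta)/2$. Then $a=\log(1+\delta)\approx\delta$ with $p=(1+\delta)/2$, and $b=\log(1-\delta)\approx-\delta$ with $1-p=(1-\delta)/2$. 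To first order,
\[
N\approx\frac{\delta}{2}\bigl(\tau(1+\delta)-(1-\tau)(1-\delta)\bigr)\to\frac{\delta}{2}(2\tau-1)<0
\]
as $\delta\to0$. Hence $I_\tau<0$ for $\delta$ small enough.

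The main obstacle is making this limit rigorous. I will divide $N$ by $\delta$, use $\log(1\pm\delta)=\pm\delta+O(\delta^2)$, and pass to the limit $\delta\to 0$.

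Part 3 then combines parts 1 and 2.
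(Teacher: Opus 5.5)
Your proof is correct. Parts 1 and 3 coincide with the paper's argument: monotonicity of $\tau\mapsto e_\tau(L)$ plus Gibbs' inequality, then combining the two parts. You merely route part 1 through Proposition~\ref{prop:nonnegative_divergence}.

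Part 2 is where you genuinely differ, and your version is the more complete one. The paper only sketches a law whose information density is ``mostly negative with small magnitude but has a tiny positive mass.'' It then defers to an explicit construction in the proof of Theorem~\ref{thm:non_equiv}. That proof, however, only uses strict monotonicity to get $I_\tau\neq I$, which carries no sign information, so no construction is actually supplied.

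Your nearly independent binary symmetric pair fills this in:
\begin{itemize}
\item The two-point expectile is in closed form, and its sign is that of $N=\tau\frac{1+\delta}{2}\log(1+\delta)+(1-\tau)\frac{1-\delta}{2}\log(1-\delta)$.
\item The limit $N/\delta\to(2\tau-1)/2<0$ is elementary, so for each fixed $\tau<1/2$ some $\delta=\delta(\tau)>0$ works.
\item $L$ is bounded, so the $L^2$ hypothesis is automatic, and $I(X;Y)>0$.
\end{itemize}

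Your example also makes the mechanism explicit. By the score equation, $e_\tau(L)<0$ iff $\tau\,\mathbb{E}[L_+]<(1-\tau)\,\mathbb{E}[L_-]$, i.e.\ iff $I(X;Y)<\frac{1-2\tau}{\tau}\,\mathbb{E}[L_-]$. Near independence gives $I(X;Y)=O(\delta^2)$ against $\mathbb{E}[L_-]\sim\delta/2$.

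By contrast, suppose the positive values of $L$ stay bounded away from $0$. Then $\mathbb{E}[e^{-L}]\le 1$ forces $I(X;Y)\ge c\,\mathbb{E}[L_-]$ for some $c>0$. So the paper's ``tiny positive mass'' picture fails for $\tau$ close to $1/2$ unless it is read as near independence, which is exactly your regime.

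Two detours in your write-up can be dropped. In the small-$\varepsilon$ attempt, indeed $N\to\tau\log 2>0$, so it does not work. The $\tau\to 0$ limit remark is simply not needed.
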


\begin{IEEEproof}
(1) For $\tau=1/2$, $I_{1/2}=I(X;Y)\ge0$. For $\tau>1/2$, monotonicity of expectiles gives $I_\tau = e_\tau(L) \ge e_{1/2}(L) = I(X;Y) \ge 0$.  
(2) For $\tau<1/2$, choose a distribution where the log‑likelihood ratio $L$ is mostly negative with small magnitude but has a tiny positive mass; then $e_\tau(L) < 0$ while $I(X;Y)$ remains positive. An explicit construction is given in the proof of Theorem~\ref{thm:non_equiv}.  
(3) follows from (1) and (2).
\end{IEEEproof}

\section{The Fundamental Non‑Equivalence Theorem}

We now show that for $\tau \neq 1/2$, the expectile mutual information is not merely a reparameterization of Shannon mutual information; the two functionals are distinct.

\begin{theorem}[Fundamental Non‑Equivalence] \label{thm:non_equiv}
Let $\tau \in (0,1)$ with $\tau \neq 1/2$.
 There exists a pair of random variables $(X,Y)$ such that $I_\tau(X;Y) \neq I(X;Y)$.
\end{theorem}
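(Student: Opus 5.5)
The idea is to use strict monotonicity of $\tau\mapsto e_\tau(Z)$ for non-degenerate $Z$ (the property from \cite{bellini2014generalized} used in the Ordering proposition). For $\tau\neq 1/2$ we then have $e_\tau(L)\neq e_{1/2}(L)=\mathbb{E}[L]$ whenever $L$ is not almost surely constant. So it suffices to build one pair $(X,Y)$ whose information density
\[
L=\log\frac{p_{XY}(X,Y)}{p_X(X)p_Y(Y)}, \qquad (X,Y)\sim P_{XY},
\]
has finite second moment and is non-degenerate under $P_{XY}$. Then $I_\tau(X;Y)=e_\tau(L)\neq \mathbb{E}[L]=I(X;Y)$. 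In fact $I_\tau>I$ for $\tau>1/2$ and $I_\tau<I$ for $\tau<1/2$.

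\textbf{Construction.} Take binary $X,Y$ with
\[
P(0,0)=\tfrac12,\qquad P(0,1)=P(1,0)=\tfrac14,\qquad P(1,1)=0,
\]
so both marginals are $(3/4,1/4)$. The support is finite, so $\mathbb{E}[L^2]<\infty$ and $P_{XY}\ll P_XP_Y$ hold trivially. Evaluating $L$ on the support gives $\log\frac{1/2}{9/16}=\log\frac{8}{9}<0$ at $(0,0)$ and $\log\frac{1/4}{3/16}=\log\frac43>0$ at $(0,1)$ and $(1,0)$. Hence $L$ takes two distinct values with positive probability, so it is non-degenerate.

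\textbf{Self-contained strict inequality.} To avoid leaning only on the cited reference, I would prove strict monotonicity in $\tau$ directly from Lemma~\ref{lem:foc}. Set $m=\mathbb{E}[L]$ and evaluate the score
\[
\mathbb{E}[\rho_\tau(L-m)]=\tau\,\mathbb{E}[(L-m)_+]-(1-\tau)\,\mathbb{E}[(L-m)_-]=(\tau-\tfrac12)\,\mathbb{E}|L-m|,
\]
using $\mathbb{E}[(L-m)_+]=\mathbb{E}[(L-m)_-]=\tfrac12\mathbb{E}|L-m|$. Non-degeneracy gives $\mathbb{E}|L-m|>0$, so this score is nonzero for $\tau\neq1/2$. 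The map $\theta\mapsto \mathbb{E}[\rho_\tau(L-\theta)]$ is strictly decreasing, as in the proof of Lemma~\ref{lem:monotonicity}, and it vanishes at $e_\tau(L)$. Therefore $e_\tau(L)>m$ when $\tau>1/2$ and $e_\tau(L)<m$ when $\tau<1/2$.

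\textbf{Negative $I_\tau$ for Proposition~\ref{prop:sign}(2).} The main obstacle is the explicit construction promised there: showing $I_\tau<0$, not merely $I_\tau\neq I$, for every $\tau<1/2$. For this I would use a two-point $L$ and solve the score equation in closed form. If $L=a>0$ with probability $p$ and $L=b<0$ with probability $1-p$, the expectile $\theta\in(b,a)$ satisfies
\[
\tau p(a-\theta)=(1-\tau)(1-p)(\theta-b), \qquad\text{so}\qquad \theta=\frac{\tau p a+(1-\tau)(1-p)b}{\tau p+(1-\tau)(1-p)}.
\]
Thus $\theta<0$ exactly when $\tau p a<(1-\tau)(1-p)|b|$. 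A family like the one above, with one rare cell carrying the positive density $a$, meets this for any fixed $\tau<1/2$ by making $p$ small. Along such a family one checks that the numerator $\tau pa+(1-\tau)(1-p)b$ is negative; verifying this for every $\tau\in(0,1/2)$ is the one computation I would carry out in detail.
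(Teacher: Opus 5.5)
Your proof is correct and follows the same route as the paper's: exhibit a joint law whose information density $L$ is non-degenerate, then use strict monotonicity of $\tau\mapsto e_\tau(L)$. Your binary example, together with the identity $\mathbb{E}[\rho_\tau(L-\mathbb{E}L)]=(\tau-\tfrac12)\,\mathbb{E}|L-\mathbb{E}L|$, supplies exactly the details the paper's one-line proof leaves out.

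The final paragraph is not needed for this theorem (the paper's proof contains no such construction), and its heuristic is wrong as stated. For the family $P(0,0)=1-2\epsilon$, $P(0,1)=P(1,0)=\epsilon$, $P(1,1)=0$, one has $pa/\bigl((1-p)|b|\bigr)>2$ for every $\epsilon\in(0,\tfrac12)$; this ratio tends to $2$ as $\epsilon\to0$, so making $p$ small does not help. Hence that family can give $I_\tau<0$ only when $\tau<1/3$. What actually works for every fixed $\tau<1/2$ is weak dependence with full support, e.g.\ $P_{XY}(x,y)=\tfrac14\bigl(1+\delta(2x-1)(2y-1)\bigr)$ with $\delta\to0$, for which $\mathbb{E}[L_+]/\mathbb{E}[L_-]\to1$.
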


\begin{IEEEproof}
If the information density $L=\log \frac{P_{XY}}{P_XP_Y}$ is non-degenerate for at least one joint law and then use strict monotonicity of expectiles.
\end{IEEEproof}

\section{Further Properties}

\subsection{Subadditivity of Expectile Entropy}
For $\tau \ge 1/2$, the expectile operator is subadditive. This yields a chain‑type bound for joint entropy.

\begin{proposition}[Subadditivity of Expectile Entropy]
For $\tau \in [1/2,1)$ and random variables $X,Y$ with finite second moments of log‑densities,
\begin{equation}
H_\tau(X,Y) \le H_\tau(X) + H_\tau(Y\mid X),
\end{equation}
where $H_\tau(Y\mid X)=e_\tau^{(P_{XY})}[-\log p_{Y|X}(Y\mid X)]$.
\end{proposition}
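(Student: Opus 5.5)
The plan is to reduce the statement to the subadditivity property of expectiles listed in the Preliminaries, by exploiting the pointwise chain rule for self-information. Work on the probability space carrying $(X,Y)\sim P_{XY}$ and define the random variables
\[
A := -\log p_X(X), \qquad B := -\log p_{Y\mid X}(Y\mid X), \qquad Z := -\log p_{XY}(X,Y).
\]
Since $p_{XY}(x,y)=p_X(x)\,p_{Y\mid X}(y\mid x)$ wherever $p_X(x)>0$, and $p_X(X)>0$ holds $P_{XY}$-almost surely, we get the pointwise identity $Z=A+B$ almost surely. By definition, $H_\tau(X,Y)=e_\tau^{(P_{XY})}(Z)$ and $H_\tau(Y\mid X)=e_\tau^{(P_{XY})}(B)$.

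The entropy $H_\tau(X)$ is defined as an expectile under $P_X$. Because $e_\tau$ is law-invariant (it depends on $F_Z$ only, through the defining equation), the law of $A$ under $P_{XY}$ equals the law of $-\log p_X(X)$ under $P_X$. Hence $H_\tau(X)=e_\tau^{(P_{XY})}(A)$, and all three quantities become expectiles of random variables on one common space.

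Next I check integrability, so that every expectile is well defined. By hypothesis $A$ and $Z$ lie in $L^2$. Then $B=Z-A\in L^2$ by Minkowski's inequality. If instead the hypothesis is read as $A,B\in L^2$, the same argument gives $Z\in L^2$.

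With this in place, subadditivity for $\tau\ge 1/2$ gives
\[
H_\tau(X,Y)=e_\tau(A+B)\le e_\tau(A)+e_\tau(B)=H_\tau(X)+H_\tau(Y\mid X).
\]
At $\tau=1/2$ this holds with equality by linearity of expectation, recovering the Shannon chain rule.

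The only substantive ingredient is subadditivity itself, which the paper cites from \cite{bellini2014generalized}. If I wanted to make this self-contained, I would prove it in the following order.
\begin{enumerate}
\item Use the first-order condition of Lemma~\ref{lem:foc} to write $e_\tau(Z)$ as the unique root $\theta$ of $\mathbb{E}[\rho_\tau(Z-\theta)]=0$.
\item Rewrite this root as $e_\tau(Z)=\mathbb{E}[\varphi Z]$ with weights $\varphi\propto \tau\mathbf 1_{\{Z>\theta\}}+(1-\tau)\mathbf 1_{\{Z\le\theta\}}$, normalized so that $\mathbb{E}[\varphi]=1$.
\item For $\tau\ge1/2$, derive the dual representation $e_\tau(Z)=\max_{\varphi\in\mathcal M}\mathbb{E}[\varphi Z]$. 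Here $\mathcal M$ is the convex set of densities satisfying $\operatorname{ess\,sup}\varphi/\operatorname{ess\,inf}\varphi\le\tau/(1-\tau)$.
\item Conclude subadditivity, since a supremum of linear functionals is subadditive.
\end{enumerate}
I expect the hardest step to be verifying that the maximum in this dual representation is attained exactly at the expectile. My first attempt would show that, for any admissible $\varphi$, the inequality $\mathbb{E}[\varphi(Z-\theta)]\le 0$ holds when $\theta=e_\tau(Z)$. This follows by splitting on $\{Z>\theta\}$ and $\{Z\le\theta\}$ and using the ratio bound together with the first-order condition. The argument requires $\tau\ge1/2$; for $\tau<1/2$ the operator is superadditive and the inequality reverses.
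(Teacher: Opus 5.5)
Your proposal is correct and follows the paper's proof essentially step for step. Both write $-\log p_{XY}(X,Y)=-\log p_X(X)-\log p_{Y\mid X}(Y\mid X)$, apply subadditivity of $e_\tau$ for $\tau\ge 1/2$, and use law invariance to identify the joint-law expectile of $-\log p_X(X)$ with $H_\tau(X)$. Your integrability check and your sketch of subadditivity via the dual representation are correct additions; the paper simply takes subadditivity as a listed property of expectiles.
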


\begin{IEEEproof}
Since $-\log p_{XY}(X,Y)= -\log p_X(X) - \log p_{Y|X}(Y\mid X)$, subadditivity of $e_\tau$ gives
\begin{align*}
& H_\tau(X,Y) \\ & 
= e_\tau^{(P_{XY})}[-\log p_X(X) - \log p_{Y|X}(Y\mid X)]\\ &
\le e_\tau^{(P_{XY})}[-\log p_X(X)] + e_\tau^{(P_{XY})}[-\log p_{Y|X}(Y\mid X)].
\end{align*}
Because $-\log p_X(X)$ depends only on $X$, its expectile under the joint law equals that under the marginal: $e_\tau^{(P_{XY})}[-\log p_X(X)] = e_\tau^{(P_X)}[-\log p_X(X)] = H_\tau(X)$. The second term is by definition $H_\tau(Y\mid X)$.
\end{IEEEproof}

\subsection{Non-Linearity and the Failure of Informational Additivity}

In classical information theory, Shannon's mutual information satisfies the fundamental additive conservation law:
\begin{equation}
    I(X;Y) = H(X) - H(X|Y) = H(Y) - H(Y|X),
\end{equation}
which, under our generalized framework, corresponds strictly to the symmetric milestone $I_{1/2}(X;Y) = H_{1/2}(X) - H_{1/2}(X|Y)$. However, when shifting the risk-aversion parameter $\tau \neq 1/2$, the inherent non-linearity of the underlying asymmetric quadratic loss function breaks this linear superposition. 

The operational expectile mutual information can no longer be decomposed as a simple algebraic difference between marginal and conditional expectile entropies.

\begin{theorem}[Failure of Informational Additivity and Symmetry]
\label{thm:non_additivity}
For any asymmetric risk parameter $\tau \in (0,1) \setminus \{1/2\}$, the expectile mutual information framework exhibits both non-additivity and directional asymmetry. That is, there exist joint probability distributions $P_{XY}$ such that:
\begin{align}
    I_\tau(X;Y) &\neq H_\tau(X) - H_\tau(X \mid Y), \\
    I_\tau(X;Y) &\neq H_\tau(Y) - H_\tau(Y \mid X).
\end{align}
Furthermore, the entropic information gain is inherently directional: there exists  some dependent $(X,Y)$ such that
\begin{equation}
    H_\tau(X) - H_\tau(X \mid Y) \neq H_\tau(Y) - H_\tau(Y \mid X) 
\end{equation}
\end{theorem}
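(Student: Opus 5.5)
The plan is to prove all three claims with explicit discrete examples. Each example reduces the comparison to the gap between $e_\tau(W)$ and $e_{1-\tau}(W)$ for a single non-degenerate random variable $W$. That gap is nonzero for $\tau\neq 1/2$ by strict monotonicity of $\tau\mapsto e_\tau(W)$ (Prop.~3 in \cite{bellini2014generalized}, as used in the Ordering proposition).

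The algebraic tools are translation invariance and negative scaling, $e_\tau(-W)=-e_{1-\tau}(W)$, from the preliminaries. We also use the observation from the subadditivity proof that a function of $X$ alone has the same expectile under $P_{XY}$ as under $P_X$. Finally, we use the identity
\[
\log\frac{p_{XY}}{p_Xp_Y}=-\log p_X(X)+\log p_{X|Y}(X\mid Y),
\]
together with the choice of a marginal that is uniform. Uniformity makes the self-information a constant, which can be pulled out of the expectile.

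\emph{Step 1 (both non-additivity claims).} Take $X,Y\in\{0,1\}$ with joint pmf $p(x,y)=a/2$ if $x=y$ and $b/2$ otherwise, where $a+b=1$, $a\neq b$, and $a,b>0$. Both marginals are uniform, so $H_\tau(X)=H_\tau(Y)=\log 2$. Set $B=-\log p_{X|Y}(X\mid Y)$; it takes the value $-\log a$ with probability $a$ and $-\log b$ with probability $b$, so it is non-degenerate. Then
\[
I_\tau(X;Y)=e_\tau(\log 2-B)=\log 2-e_{1-\tau}(B),
\]
whereas $H_\tau(X)-H_\tau(X\mid Y)=\log 2-e_\tau(B)$. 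These differ because $e_\tau(B)\neq e_{1-\tau}(B)$. By the symmetry $x\leftrightarrow y$ of the example, $-\log p_{Y|X}(Y\mid X)$ has the same law as $B$, and the second inequality follows identically.

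\emph{Step 2 (directional asymmetry).} Take $X$ uniform on $\{1,2,3\}$ and $Y=\mathbf{1}_{\{X=1\}}$. Here $-\log p_{Y|X}\equiv 0$, so $H_\tau(Y\mid X)=0$. Conditionally on $Y$, we have $-\log p_{X|Y}(X\mid Y)=W$, where $W:=(\log 2)\,\mathbf{1}_{\{X\neq 1\}}$ is non-degenerate with $\mathbb{P}(W=\log 2)=2/3$. Hence the $X$-side gain is $\log 3-e_\tau(W)$.

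On the other side, $-\log p_Y(Y)$ equals $\log 3$ when $X=1$ and $\log(3/2)$ otherwise, that is, $-\log p_Y(Y)=\log 3-W$. Hence the $Y$-side gain is
\[
e_\tau(\log 3-W)-0=\log 3-e_{1-\tau}(W).
\]
The two gains differ again by strict monotonicity in $\tau$. The pair $(X,Y)$ is clearly dependent.

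The main obstacle is bookkeeping rather than analysis. We must make sure each variable is non-degenerate, and we must check that the negative-scaling identity is applied with the correct sign. Both checks are immediate in the examples above since the supports have two points. An honest caveat is that in Step~2 the $Y$-side gain coincides with $I_\tau$. That is why Step~1 is kept separate to certify the second inequality of the theorem.
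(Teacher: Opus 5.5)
Your proof is correct, and it takes a genuinely different route from the paper's.

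The paper argues by contradiction. If $I_\tau(X;Y)=H_\tau(X)-H_\tau(X\mid Y)$ held for every law, then $e_\tau(A-B)=e_\tau(A)-e_\tau(B)$ would hold for all pairs $A=-\log p_X(X)$, $B=-\log p_{X|Y}(X\mid Y)$. The paper reads this as additivity of $e_\tau$ on a nontrivial class and dismisses it by quoting that expectiles are additive only at $\tau=1/2$. It then obtains the directional claim by ``combining'' the two failed identities.

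You instead exhibit explicit two-point laws in which one summand of the information density is constant. Translation invariance removes that summand, and each comparison collapses to $e_\tau(W)$ versus $e_{1-\tau}(W)$ for a single non-degenerate $W$. The reflection identity $e_\tau(-W)=-e_{1-\tau}(W)$ and strict monotonicity in $\tau$ then settle it.

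Your route buys several things. It is constructive and self-contained, and a single law (Step~1) witnesses both non-additivity inequalities at once. It also exposes the real mechanism: in your examples additivity holds trivially, so what fails is the behaviour of $e_\tau$ under negation. Indeed, the paper's hypothetical identity is not additivity at all---additivity would give $e_\tau(A)-e_{1-\tau}(B)$, the quantity in the paper's Expectile Mutual Information Bound.

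Your argument also closes a real gap in the paper's third step. Equality of the two entropy gains does not force them to equal $I_\tau$: your Step~1 law has equal gains that both differ from $I_\tau$. So the paper's combination step yields no contradiction, whereas your Step~2 gives an actual witness of directionality.

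The paper's argument is shorter and points to a general principle. As written, however, it relies on an unproved transfer of global non-additivity to the specific class of information-density decompositions, and your Step~1 supplies exactly such a member of that class.
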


\begin{proof}
Suppose, for contradiction, that the identity
$
I_\tau(X;Y)
=
H_\tau(X)-H_\tau(X|Y)
$
held for every joint law \(P_{XY}\).
We would obtain
\[
e_\tau(A-B)
=
e_\tau(A)-e_\tau(B)
\]
for all pairs
\[
A=-\log P_X(X),
\ 
B=-\log P_{X|Y}(X|Y).
\]
Since the collection of such pairs contains non-degenerate random
variables, this would force \(e_\tau\) to be additive on a nontrivial
class of random variables.

However, expectiles are additive if and only if
\(\tau=\frac12\), in which case
$
e_{1/2}(Z)=\mathbb E[Z].
$
For every \(\tau\neq\frac12\), additivity fails.
Hence there exists a joint distribution \(P_{XY}\) for which
$
I_\tau(X;Y)
\neq
H_\tau(X)-H_\tau(X|Y).
$
An identical argument applied to the decomposition
\[
L(X,Y)
=
-\log P_Y(Y)
+
\log P_{Y|X}(Y|X)
\]
shows that
$
I_\tau(X;Y)
\neq
H_\tau(Y)-H_\tau(Y|X)
$
for some joint distribution.
If the two entropy-difference expressions were always equal,
then for every joint law
\[
H_\tau(X)-H_\tau(X|Y)
=
H_\tau(Y)-H_\tau(Y|X).
\]
Combining this identity with the previous two representations would imply
\[
H_\tau(X)-H_\tau(X|Y)
=
I_\tau(X;Y)
=
H_\tau(Y)-H_\tau(Y|X),
\]
which we have already shown is impossible for
\(\tau\neq \frac12\).

Therefore there exist joint distributions for which
\[
H_\tau(X)-H_\tau(X|Y)
\neq
H_\tau(Y)-H_\tau(Y|X).
\]
The informational gain measured through expectile entropies is thus
direction-dependent and does not admit a Shannon-type chain rule.
\end{proof}

\begin{example} [Directional Expectile Information Asymmetry]

For a generic risk index $\tau \in (0,1) \setminus \{0.5\}$, the directional information transfer generates an asymmetric vector field:
$
    H_\tau(X) - H_\tau(X \mid Y) \neq H_\tau(Y) - H_\tau(Y \mid X).$
Let $X \in \mathbb{R}$ follow a standard, symmetric Gaussian distribution:
\begin{equation}
    X \sim \mathcal{N}(0, 1), \quad f_X(x) = \frac{1}{\sqrt{2\pi}} \exp\left(-\frac{x^2}{2}\right).
\end{equation}
Let the conditional distribution of $Y$ given $X=x$ be governed by a zero-mean Gaussian whose variance maps a step-heteroskedastic profile based on the sign of the driving state $x$:
\begin{equation}
    Y \mid X=x \sim \mathcal{N}\left(0, \sigma^2(x)\right), \quad \sigma^2(x) = \begin{cases} \sigma_1^2 & \text{if } x \ge 0 \\ \sigma_2^2 & \text{if } x < 0 \end{cases}
\end{equation}
where $\sigma_2^2 > \sigma_1^2 > 0$. Because $f_X(x)$ is perfectly symmetric around zero, the events $\{X \ge 0\}$ and $\{X < 0\}$ each partition the underlying probability space into two orthogonal domains, each possessing a measure of exactly $1/2$.

Let $V \sim \chi^2_1$ represent a standard Chi-squared random variable with 1 degree of freedom, and let $\mathcal{E}_\tau \triangleq e_\tau(V)$ denote its explicit $\tau$-expectile scalar value. We evaluate the Source Metric: $H_\tau(X)$. 
The localized information token variable for the marginal state space $X$ is defined as $Z_X \triangleq -\log f_X(X)$. Substituting the standard normal PDF yields:
$Z_X = \frac{1}{2}\log(2\pi) + \frac{1}{2}X^2.
$
Since $X \sim \mathcal{N}(0,1)$, its square is identically distributed as $X^2 \sim \chi^2_1$. Invoking the translation invariance and positive homogeneity axioms of the expectile operator, the first term resolves explicitly to:
\begin{equation}
    H_\tau(X) = e_\tau\left[ \frac{1}{2}\log(2\pi) + \frac{1}{2}X^2 \right] = \frac{1}{2}\log(2\pi) + \frac{1}{2}\mathcal{E}_\tau.
\end{equation}

We now evaluate the Joint Conditional Metric: $H_\tau(Y \mid X)$.
The conditional information token is defined over the joint space as $Z_{Y \mid X} \triangleq -\log f_{Y \mid X}(Y \mid X)$. Expanding the log-density equations yields:
\begin{equation}
    Z_{Y \mid X} = \frac{1}{2}\log(2\pi) + \frac{1}{2}\log(\sigma^2(X)) + \frac{Y^2}{2\sigma^2(X)}.
\end{equation}
Let $W \triangleq \frac{Y}{\sigma(X)}$. Conditioned on $X=x$, $W \sim \mathcal{N}(0,1)$, making the squared innovation $W^2 \sim \chi^2_1$ stochastically independent of $X$. We can thus express the joint token as:
\begin{equation}
    Z_{Y \mid X} = \frac{1}{2}\log(2\pi) + \left[ \frac{1}{2}\log(\sigma^2(X)) + \frac{1}{2}W^2 \right],
\end{equation}
which decomposes into an equally weighted mixture of two independent, shifted Chi-squared distributions. We apply the expectile operator and isolate the structural shifts by defining $\Delta_\sigma \triangleq \log(\sigma_2 / \sigma_1)$:
\begin{equation}
\begin{array}{l}
    H_\tau(Y \mid X) = \\
    \frac{1}{2}\log(2\pi) + \frac{1}{2}\log(\sigma_1^2) + e_\tau\left[ \frac{1}{2}W^2 + \Delta_\sigma \cdot \mathbf{1}_{\{X < 0\}} \right].
    \end{array}
\end{equation}
Because the expectile operator is strictly non-additive over the sums of dependent or independent non-Gaussian variables for all $\tau \neq 0.5$, this joint expectile cannot be separated into marginal components. We represent this irreducible term as the explicit scalar function $\mathcal{M}_\tau(\sigma_1, \sigma_2)$:
\begin{equation}
    H_\tau(Y \mid X) = \frac{1}{2}\log(2\pi) + \mathcal{M}_\tau(\sigma_1, \sigma_2).
\end{equation}

We evaluate   the Destination Metric: $H_\tau(Y)$.
The marginal distribution $f_Y(y)$ must be evaluated by compounding the continuous mixture over the split heteroskedastic Gaussian kernels:
\begin{equation}
    f_Y(y) = \frac{1}{2\sqrt{2\pi\sigma_1^2}}\exp\left(-\frac{y^2}{2\sigma_1^2}\right) + \frac{1}{2\sqrt{2\pi\sigma_2^2}}\exp\left(-\frac{y^2}{2\sigma_2^2}\right).
\end{equation}
This integration yields a heavy-tailed, symmetric continuous mixture. The corresponding information token variable $Z_Y \triangleq -\log f_Y(Y)$ captures severe log-kurtosis tail properties. We map its expectile directly as an integrated scalar function $\mathcal{G}_\tau(\sigma_1, \sigma_2)$:
\begin{equation}
\begin{array}{l}
    H_\tau(Y) = e_\tau\left[ -\log\left( \sum_{k=1}^2 \frac{1}{2\sqrt{2\pi\sigma_k^2}}\exp\left(-\frac{Y^2}{2\sigma_k^2}\right) \right) \right] 
    \\ \equiv \mathcal{G}_\tau(\sigma_1, \sigma_2).
    \end{array}
\end{equation}

We now evaluate  the Backward Conditional Metric: $H_\tau(X \mid Y)$.
Using Bayes' rule, the conditional token variable $Z_{X \mid Y} \triangleq -\log f_{X \mid Y}(X \mid Y)$ expands via the joint chain identity:
\begin{align}
    Z_{X \mid Y} &= -\log f_X(X) - \log f_{Y \mid X}(Y \mid X) + \log f_Y(Y) \nonumber \\
    &= \log(2\pi) + \frac{1}{2}X^2 + \frac{1}{2}\log(\sigma^2(X)) + \frac{1}{2}W^2 + \log f_Y(Y).
\end{align}
Because the observation of $Y$ reveals a clear indicator regarding the active variance regime, the conditional distribution $X \mid Y$ contracts significantly, altering the structural skew. Evaluating the joint expectile operator yields a clean scalar value denoted as $\mathcal{K}_\tau(\sigma_1, \sigma_2)$:
\begin{equation}
    H_\tau(X \mid Y) = \frac{1}{2}\log(2\pi) + \mathcal{K}_\tau(\sigma_1, \sigma_2).
\end{equation}

Let us compile the Forward and Reverse Information Quantities.
We balance the explicit analytical limits derived for both directional information transfer chains:
\begin{enumerate}
    \item Forward Information Flow Envelope ($\Delta I_{\text{forward}}$):
    \begin{align}
        \Delta I_{\text{forward}} &\triangleq H_\tau(X) - H_\tau(X \mid Y) \nonumber \\
        &= \frac{1}{2}\mathcal{E}_\tau - \mathcal{K}_\tau(\sigma_1, \sigma_2).
    \end{align}
    
    \item Reverse Information Flow Envelope ($\Delta I_{\text{reverse}}$):
    \begin{align}
        \Delta I_{\text{reverse}} &\triangleq H_\tau(Y) - H_\tau(Y \mid X) \nonumber \\
        &= \mathcal{G}_\tau(\sigma_1, \sigma_2) - \frac{1}{2}\log(2\pi) - \mathcal{M}_\tau(\sigma_1, \sigma_2).
    \end{align}
\end{enumerate}

Because the expectile operator is non-linear and non-additive for all asymmetric risk tunings ($\tau > 0.5$), the functional term $\mathcal{M}_\tau(\sigma_1, \sigma_2)$ couples the independent tracking profiles of the innovation space and the state-step variance together, while $\mathcal{G}_\tau(\sigma_1, \sigma_2)$ isolates the heavy-tailed kurtosis penalties of the marginal mixture. These structural differences prevent the two expressions from balancing out or canceling,  proving that:
\begin{equation}
    H_\tau(X) - H_\tau(X \mid Y) \neq H_\tau(Y) - H_\tau(Y \mid X), \  \tau \neq 0.5.
\end{equation}

\end{example}

\begin{proposition}[Expectile Mutual Information Bound]
\[
\operatorname{sgn}\!\left(\tau-\frac12\right)
\Big(
I_\tau(X;Y)
-
H_\tau(X)
+
H_{1-\tau}(X|Y)
\Big)
\le 0.
\]
Equivalently,
\[
\begin{cases}
I_\tau(X;Y)
\le
H_\tau(X)-H_{1-\tau}(X|Y),
& \tau\ge \frac12,\\[2mm]
I_\tau(X;Y)
\ge
H_\tau(X)-H_{1-\tau}(X|Y),
& \tau\le \frac12.
\end{cases}
\]
\end{proposition}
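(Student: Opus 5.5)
The plan is to write the information density as a difference of two self-informations, then apply subadditivity of $e_\tau$ (for $\tau\ge 1/2$) or superadditivity (for $\tau\le 1/2$), and convert the expectile of the negated term with the negative scaling rule. All expectiles are taken under $P_{XY}$. Set
\[
A := -\log p_X(X), \qquad B := -\log p_{X|Y}(X\mid Y),
\]
and assume both are in $L^2(P_{XY})$, as is implicit in the definitions of $H_\tau(X)$ and $H_\tau(X|Y)$. By Bayes' rule,
\[
L = \log\frac{p_{XY}(X,Y)}{p_X(X)p_Y(Y)} = \log\frac{p_{X|Y}(X\mid Y)}{p_X(X)} = A - B \quad P_{XY}\text{-a.s.}
\]
Hence $I_\tau(X;Y) = e_\tau(A + (-B))$. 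As in the proof of subadditivity of expectile entropy, $A$ depends only on $X$, so $e_\tau(A)$ under the joint law equals $H_\tau(X)$. By the definition analogous to $H_\tau(Y|X)$, we have $e_{1-\tau}(B) = H_{1-\tau}(X|Y)$.

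\emph{Case $\tau\ge 1/2$.} Subadditivity of $e_\tau$, listed among the key properties, gives $e_\tau(A-B) \le e_\tau(A) + e_\tau(-B)$. Negative scaling with $\lambda=-1$ gives $e_\tau(-B) = -e_{1-\tau}(B)$. Together these yield $I_\tau(X;Y) \le H_\tau(X) - H_{1-\tau}(X|Y)$, which is the first line of the statement.

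\emph{Case $\tau\le 1/2$.} Here I will first derive superadditivity: $e_\tau(Z+W) \ge e_\tau(Z) + e_\tau(W)$ for $\tau\le 1/2$. By negative scaling, $e_\tau(U) = -e_{1-\tau}(-U)$. Since $1-\tau\ge 1/2$, subadditivity of $e_{1-\tau}$ gives
\[
e_\tau(Z+W) = -e_{1-\tau}(-Z-W) \ge -e_{1-\tau}(-Z) - e_{1-\tau}(-W) = e_\tau(Z) + e_\tau(W).
\]
Applying this with $Z=A$, $W=-B$, and using $e_\tau(-B) = -e_{1-\tau}(B)$ again, gives $I_\tau(X;Y) \ge H_\tau(X) - H_{1-\tau}(X|Y)$.

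Multiplying each case by $\operatorname{sgn}(\tau-\tfrac12)$ gives the compact form of the statement. At $\tau=1/2$ the sign factor vanishes, and in any case both bounds hold with equality, recovering $I = H(X) - H(X|Y)$.

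The main obstacle is bookkeeping rather than analysis. One must make sure every expectile is taken under the same law $P_{XY}$, so that subadditivity applies to the pair $(A,-B)$ jointly. One must also check that $H_\tau(X)$, defined under $P_X$, coincides with the joint-law expectile of $A$; this holds because the expectile depends only on the distribution of $A$. Finally, the superadditivity step for $\tau<1/2$ is not listed explicitly among the key properties, so it has to be derived from negative scaling as above.
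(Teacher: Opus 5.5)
Your proof is correct and follows the paper's argument. It uses the same decomposition $L=A-B$ with $A=-\log p_X(X)$ and $B=-\log p_{X|Y}(X\mid Y)$, subadditivity together with the reflection identity $e_\tau(-Z)=-e_{1-\tau}(Z)$ for $\tau\ge\frac12$, and a reduction to the subadditive $e_{1-\tau}$ for $\tau\le\frac12$. The only difference is packaging: you first state superadditivity of $e_\tau$ for $\tau\le\frac12$ as a general fact and apply it to $(A,-B)$, whereas the paper writes $I_\tau=-e_{1-\tau}(-L)$ and applies subadditivity of $e_{1-\tau}$ to $(B,-A)$, which is the same computation.
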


%

\begin{IEEEproof}
Let the negative log-probabilities be denoted by
\[
A = -\log p_X(X) \quad \text{and} \quad B = -\log p_{X|Y}(X|Y).
\]
The information density $L$ can then be expressed as
\[
L = \log \frac{p_{XY}(X,Y)}{p_X(X)p_Y(Y)} = A - B = A + (-B).
\]

\subsection*{Case 1: $\tau \ge \frac{1}{2}$}
By the subadditivity assumption of the expectile functional, we have
\[
e_\tau(L) = e_\tau(A + (-B)) \le e_\tau(A) + e_\tau(-B).
\]
Using the expectile reflection (symmetry) relation $e_\tau(-Z) = -e_{1-\tau}(Z)$, the second term can be rewritten as
\[
e_\tau(-B) = -e_{1-\tau}(B).
\]
Substituting this back yields
\[
I_\tau(X;Y) = e_\tau(L) \le e_\tau(A) - e_{1-\tau}(B).
\]
Recognizing the definitions of the entropy terms, $e_\tau(A) = H_\tau(X)$ and $e_{1-\tau}(B) = H_{1-\tau}(X|Y)$, we arrive at the first bound:
\[
I_\tau(X;Y) \le H_\tau(X) - H_{1-\tau}(X|Y).
\]

\subsection*{Case 2: $\tau \le \frac{1}{2}$}

Using the reflection identity of expectiles,
\[
e_\tau(-Z)=-e_{1-\tau}(Z),
\]
we obtain
\[
I_\tau(X;Y)
=
e_\tau(L)
=
-\,e_{1-\tau}(-L).
\]

Since $-L=B-A$ and $1-\tau\ge \frac12$, the expectile
$e_{1-\tau}$ is subadditive, yielding
\[
e_{1-\tau}(B-A)
\le
e_{1-\tau}(B)+e_{1-\tau}(-A).
\]
Applying the reflection identity again,
\[
e_{1-\tau}(-A)
=
-\,e_\tau(A),
\]
so that
\[
e_{1-\tau}(-L)
\le
e_{1-\tau}(B)-e_\tau(A).
\]

Multiplying by $-1$ reverses the inequality:
\[
I_\tau(X;Y)
=
-\,e_{1-\tau}(-L)
\ge
e_\tau(A)-e_{1-\tau}(B).
\]

Recognizing that
\[
e_\tau(A)=H_\tau(X),
\qquad
e_{1-\tau}(B)=H_{1-\tau}(X|Y),
\]
we conclude that
\[
I_\tau(X;Y)
\ge
H_\tau(X)-H_{1-\tau}(X|Y),
\qquad
\tau\le \frac12.
\]

which completes the proof.
\end{IEEEproof}

\subsubsection{Expectile Cross-Entropy}

We introduce a risk-sensitive generalization of classical cross-entropy by replacing the expectation operator with the asymmetric expectile functional. 

Let $(\Omega, \mathcal{F}, P)$ be a complete probability space. Let $\mu$ be a $\sigma$-finite dominating measure on $(\Omega, \mathcal{F})$, and let $Q$ be a probability measure on $(\Omega, \mathcal{F})$ that is absolutely continuous with respect to $\mu$, admitting a Radon-Nikodym derivative $q = \frac{dQ}{d\mu}$. Let $X: \Omega \to \Omega$ denote the identity mapping, such that the distribution of $X$ under $P$ is $P$ itself.

\begin{definition}[Expectile Cross-Entropy]
Assume that the Radon-Nikodym density $q$ satisfies the square log-integrability condition
\begin{equation}
\mathbb{E}_P\left[\left(\log q(X)\right)^2\right] < \infty.
\end{equation}
Define the log-loss random variable $L_Q(X) : \Omega \to \mathbb{R}$ by $L_Q(X) := -\log q(X)$. For a given risk parameter $\tau \in (0,1)$, the $\tau$-expectile cross-entropy of $P$ relative to $Q$ is defined as
\begin{equation}
{H_\tau(P,Q) := e_\tau\big(L_Q(X)\big),}
\end{equation}
where $e_\tau: L^2(\Omega, \mathcal{F}, P) \to \mathbb{R}$ denotes the $\tau$-expectile functional.
\end{definition}

The expectile cross-entropy recovers classical information-theoretic properties under unbiased risk sensitivity and preserves the regular structural properties of statistical expectiles.

For any constant $c \in \mathbb{R}$ and any $Z \in L^2(P)$, $e_\tau(Z + c) = e_\tau(Z) + c$. If a modified measure $\widetilde{Q}$ satisfies $-\log \widetilde{q}(X) = -\log q(X) + c$, then:
\begin{equation}
H_\tau(P,\widetilde{Q}) = H_\tau(P,Q) + c.
\end{equation}

For every $a \in \mathbb{R}_+$ and any $Z \in L^2(P)$, $e_\tau(aZ) = a e_\tau(Z)$. Consequently, if a loss setup scales such that $L_{\widetilde{Q}}(X) = a L_Q(X)$, then:
\begin{equation}
H_\tau(P,\widetilde{Q}) = a H_\tau(P,Q).
\end{equation}

Let $Q_1$ and $Q_2$ be two probability measures whose densities $q_1, q_2$ satisfy the $L^2(P)$ log-integrability condition. If the underlying log-losses satisfy the almost-sure dominance condition
\begin{equation}
-\log q_1(X) \le -\log q_2(X) \quad P\text{-a.s.},
\end{equation}
then
\begin{equation}
H_\tau(P,Q_1) \le H_\tau(P,Q_2).
\end{equation}

The risk-parameter $\tau$ modulates the sensitivity of the metric to asymmetric variations in the tail of the log-loss distribution under the data-generating distribution $P$.

\begin{proposition}[Tail Sensitivity]
Let $L_Q(X) = -\log q(X)$. The parameter $\tau \in (0,1)$ scales the equilibrium weights between under-predictions and over-predictions of risk:
\begin{itemize}
    \item For $\tau > \frac{1}{2}$, the functional assigns greater penalization weight to realizations where $L_Q(X) > H_\tau(P,Q)$.
    \item For $\tau < \frac{1}{2}$, the functional assigns greater penalization weight to realizations where $L_Q(X) < H_\tau(P,Q)$.
\end{itemize}
Furthermore, the mapping $\tau \mapsto H_\tau(P,Q)$ is strictly increasing on $(0,1)$ unless $L_Q(X)$ is a $P$-almost sure constant.
\end{proposition}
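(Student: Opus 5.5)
The plan is to read both bullet claims off the first-order characterization of Lemma~\ref{lem:foc}, and then to prove strict monotonicity in $\tau$ directly from that score equation. Write $Z := L_Q(X)$, which lies in $L^2(P)$ by the square log-integrability assumption, and set $\theta := H_\tau(P,Q) = e_\tau(Z)$.

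\textbf{Tail weights.} Equation~\eqref{eq:foc} gives
\[
\tau\,\mathbb{E}_P[(Z-\theta)_+] = (1-\tau)\,\mathbb{E}_P[(Z-\theta)_-].
\]
Equivalently, $\theta$ minimizes $\mathbb{E}_P[\ell_\tau(Z-c)]$, where the asymmetric quadratic loss puts weight $\tau$ on the deviations $Z>c$ and weight $1-\tau$ on the deviations $Z<c$.

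When $\tau>1/2$, the exceedances $\{Z>\theta\}$ carry the larger coefficient, so they receive greater penalization weight. The balance equation then forces the mean upper excess to be smaller than the mean lower deficit, by the factor $(1-\tau)/\tau<1$, and this pushes $\theta$ upward toward the upper tail. The case $\tau<1/2$ is symmetric, with the roles of the two tails exchanged. These two observations establish both bullets, which are interpretive statements about the weights in the loss.

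\textbf{Strict monotonicity.} Define
\[
\psi(\theta,\tau) := \tau\,\mathbb{E}[(Z-\theta)_+] - (1-\tau)\,\mathbb{E}[(Z-\theta)_-],
\]
so that $\psi(e_\tau(Z),\tau)=0$. Two facts are needed.

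\begin{enumerate}
\item For fixed $\tau$, the map $\theta\mapsto\psi(\theta,\tau)$ is continuous and strictly decreasing. This already appears in the proof of Lemma~\ref{lem:monotonicity}. I will re-justify it briefly: $\psi$ is nonincreasing, and its right derivative is $-\tau P(Z>\theta) - (1-\tau)P(Z\le\theta) < 0$.
\item For fixed $\theta$, we have $\partial_\tau\psi = \mathbb{E}[(Z-\theta)_+] + \mathbb{E}[(Z-\theta)_-] = \mathbb{E}|Z-\theta|$.
\end{enumerate}

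Now take $\tau<\tau'$ and let $\theta=e_\tau(Z)$. Then
\[
\psi(\theta,\tau') = \psi(\theta,\tau) + (\tau'-\tau)\,\mathbb{E}|Z-\theta| = (\tau'-\tau)\,\mathbb{E}|Z-\theta|.
\]
This quantity is strictly positive unless $Z=\theta$ $P$-a.s. Since $\psi(\cdot,\tau')$ is strictly decreasing and vanishes at $e_{\tau'}(Z)$, positivity at $\theta$ gives $e_{\tau'}(Z) > \theta$. Hence $\tau\mapsto H_\tau(P,Q)$ is strictly increasing whenever $Z$ is not $P$-a.s.\ constant. Conversely, if $Z\equiv c$, then $H_\tau(P,Q)=c$ for every $\tau$, so the exception in the statement is necessary.

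\textbf{Main obstacle.} The only delicate step is the claim $\mathbb{E}|Z-\theta|>0$ for non-degenerate $Z$. It holds because a non-constant $Z$ cannot equal the single value $\theta$ almost surely.

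The bullet statements are qualitative, so the proof will make them precise through the loss weights $\tau$ and $1-\tau$ together with the balance equation. Alternatively, one could cite Prop.~3 of \cite{bellini2014generalized}, as the Ordering proposition does, but I will give the direct argument above instead.
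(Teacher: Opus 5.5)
Your proof is correct. For the two bullets it matches the paper's proof, which also reads them off the balance relation $\tau/(1-\tau)=\mathbb{E}_P[(H_\tau(P,Q)-L_Q)_+]/\mathbb{E}_P[(L_Q-H_\tau(P,Q))_+]$. In both versions, the strict comparison of the two tail means needs $L_Q$ to be non-degenerate, since otherwise both means vanish.

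For strict monotonicity you use the same two facts as the paper. The score is strictly decreasing in the location, and its $\tau$-derivative is $\mathbb{E}|Z-\theta|$. You combine them differently, though. The paper applies the implicit function theorem to $F(m,\tau)=0$ and concludes $dm/d\tau=-(\partial F/\partial\tau)/(\partial F/\partial m)>0$. You instead use that $\psi(\theta,\tau)$ is affine in $\tau$, and you locate the zero of the strictly decreasing map $\psi(\cdot,\tau')$ relative to $e_\tau(Z)$.

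Your comparison is global and derivative-free, and this is a real gain in rigor. When $\tau\neq\frac12$, the map $m\mapsto F(m,\tau)$ is not differentiable at atoms of $L_Q$. Its one-sided derivatives are proportional to $-\tau P(L_Q\ge m)-(1-\tau)P(L_Q<m)$ and $-\tau P(L_Q>m)-(1-\tau)P(L_Q\le m)$, and these differ by a multiple of $(2\tau-1)P(L_Q=m)$. So the implicit function theorem does not directly apply whenever $H_\tau(P,Q)$ lands on an atom, for example for the discrete distributions the paper itself mentions. Your identity $\psi(\theta,\tau')=(\tau'-\tau)\mathbb{E}|Z-\theta|$ also shows exactly where the degenerate case enters, whereas the paper's claim $\partial F/\partial\tau>0$ silently excludes it.

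In return, the paper's route gives an explicit formula for $dH_\tau(P,Q)/d\tau$, for instance when $L_Q$ has a continuous distribution. Your argument does not produce that formula.
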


\begin{proof}
From the structural identification equation, the balance of tail expectations is constrained by the ratio:
\begin{equation}
\frac{\tau}{1-\tau} = \frac{\mathbb{E}_P\left[\big(H_\tau(P,Q) - L_Q(X)\big)_+\right]}{\mathbb{E}_P\left[\big(L_Q(X) - H_\tau(P,Q)\big)_+\right]}.
\end{equation}
When $\tau > \frac{1}{2}$, we have $\frac{\tau}{1-\tau} > 1$, which strictly implies:
\begin{equation}
\mathbb{E}_P\left[\big(L_Q(X) - H_\tau(P,Q)\big)_+\right] < \mathbb{E}_P\left[\big(H_\tau(P,Q) - L_Q(X)\big)_+\right].
\end{equation}
To maintain system balance as $\tau$ increases, probability mass must shift, which requires $H_\tau(P,Q)$ to move strictly toward the upper tail of the log-loss distribution. Formally, applying the Implicit Function Theorem to the parametric equation $F(m, \tau) = \mathbb{E}_P[\psi_\tau(L_Q - m)] = 0$ yields the derivatives:
\begin{equation}
\frac{\partial F}{\partial m} = -2\Big(\tau P\big(L_Q(X) > m\big) + (1-\tau)P\big(L_Q(X) \le m\big)\Big) < 0,
\end{equation}
\begin{equation}
\frac{\partial F}{\partial \tau} = 2\mathbb{E}_P\left[\big(L_Q(X)-m\big)_+\right] + 2\mathbb{E}_P\left[\big(m-L_Q(X)\big)_+\right] > 0.
\end{equation}
Thus, by the chain rule for implicit functions, we obtain:
\begin{equation}
\frac{dm}{d\tau} = -\frac{\partial F / \partial \tau}{\partial F / \partial m} > 0,
\end{equation}
proving that the expectile cross-entropy $H_\tau(P,Q)$ scales strictly monotonically with $\tau$ across the open interval $(0,1)$.
\end{proof}

\section{Applications and Computational Aspects}

\subsection{Risk‑Aware Channel Capacity}

 Let $\mathcal{M}(\mathbb{R})$ denote the space of Borel probability measures on $\mathbb{R}$ equipped with the topology of weak convergence. For a given power constraint $P > 0$, we define the compactified feasible input set as:
\begin{equation}
    \mathcal{P}(P) \triangleq \left\{ P_X \in \mathcal{M}(\mathbb{R}) : \int_{\mathbb{R}} x^2 dP_X(x) \le P \right\}.
\end{equation}

Define the risk‑aware capacity as the supremum of the expectile mutual information under an input power constraint:
\[
C_\tau(P) = \sup_{P_X : \mathbb{E}[X^2] \le P} I_\tau(X;Y),
\]
where $Y$ is the channel output. For $\tau > 1/2$, the expectile emphasizes rare events where the information density is large; therefore the optimizing input distribution may differ from the Gaussian distribution that maximizes Shannon capacity.

\subsubsection{Additive White Gaussian Noise (AWGN) channel }

Consider a classical Additive White Gaussian Noise (AWGN) channel modeled by the input-output relation:
\begin{equation}
    Y = X + N, \quad N \sim  \mathcal{N}(0,\sigma^2), \quad X \perp N,
\end{equation}
where $\sigma^2 > 0$ represents the noise variance.
Let $p_{Y|X}(y|x)$ denote the transition probability density function of the AWGN channel:
\begin{equation}
    p_{Y|X}(y|x) = \frac{1}{\sqrt{2\pi\sigma^2}} \exp\left( -\frac{(y-x)^2}{2\sigma^2} \right).
\end{equation}
For any given input distribution $P_X \in \mathcal{P}(P)$, the corresponding output probability density $p_Y(y)$ is given by the mixture:
\begin{equation}
    p_Y(y) = \int_{\mathbb{R}} p_{Y|X}(y|x) dP_X(x).
\end{equation}
The relative informational value of an observed channel transition $(X,Y)$ is captured by the information density random variable $L_{P_X}(X,Y)$, defined as:
\begin{equation}
    L_{P_X}(X,Y) \triangleq \log \frac{p_{Y|X}(Y|X)}{p_Y(Y)}.
\end{equation}

\begin{theorem}[Expectile Mutual Information in Gaussian Channels]
Consider the AWGN channel $Y = X + N$ with $X\sim\mathcal N(0,P)$ and $N\sim\mathcal N(0,\sigma^2)$ independent. Define $\rho = P/\sigma^2$ and let $I_G = \frac12\log(1+\rho)$ (in nats). Then the information density satisfies
$
L(X,Y) = I_G + Q,
$
where
\[
Q = \frac{\rho}{2(1+\rho)}(Z_1^2-1) - \frac{\rho}{2(1+\rho)}(Z_2^2-1) + \frac{\sqrt{\rho}}{1+\rho}Z_1Z_2,
\]
with $Z_1,Z_2$ independent standard normals. Consequently,
$
I_\tau(X;Y) = e_\tau(L) = I_G + e_\tau(Q).
$
Moreover, $\mathbb{E}[Q]=0$ and $\operatorname{Var}(Q)=\rho/(1+\rho)$.
\end{theorem}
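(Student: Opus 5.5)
The plan is to compute the information density in closed form from the Gaussian densities. Then I will rewrite it in terms of the two natural independent standard normals $Z_1 := X/\sqrt{P}$ and $Z_2 := N/\sigma$, and finish with translation invariance of the expectile.

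\textbf{Step 1: the density ratio.} Since $Y\sim\mathcal N(0,P+\sigma^2)$ and $p_{Y|X}(y|x)$ is the $\mathcal N(x,\sigma^2)$ density, taking logarithms gives
\[
L(X,Y)=\tfrac12\log\frac{P+\sigma^2}{\sigma^2}-\frac{N^2}{2\sigma^2}+\frac{Y^2}{2(P+\sigma^2)}
= I_G+\tfrac12\big(U^2-Z_2^2\big),
\]
where $U:=Y/\sqrt{P+\sigma^2}$ and $N=Y-X$. The difficulty is that $U$ and $Z_2$ are correlated standard normals, so this form is not yet the claimed one.

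\textbf{Step 2: rewrite in independent coordinates.} This is the only real obstacle: choosing coordinates so that the cross term has the stated coefficient. Set $c=1/\sqrt{1+\rho}$ and $s=\sqrt{\rho/(1+\rho)}$, so that $c^2+s^2=1$. Then
\[
U=\frac{\sqrt{P}Z_1+\sigma Z_2}{\sqrt{P+\sigma^2}}=sZ_1+cZ_2 .
\]
Expanding gives $U^2-Z_2^2=s^2(Z_1^2-Z_2^2)+2csZ_1Z_2$, using $c^2-1=-s^2$. Substituting $s^2=\rho/(1+\rho)$ and $cs=\sqrt\rho/(1+\rho)$, and writing $Z_1^2-Z_2^2=(Z_1^2-1)-(Z_2^2-1)$, yields exactly $L=I_G+Q$ with $Q$ as displayed. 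Independence of $Z_1,Z_2$ follows from $X\perp N$.

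\textbf{Step 3: moments.} $\mathbb E[Q]=0$ is immediate, since $\mathbb E[Z_i^2-1]=0$ and $\mathbb E[Z_1Z_2]=0$. For the variance, write $a=\rho/(2(1+\rho))$ and $b=\sqrt\rho/(1+\rho)$. The three summands are pairwise uncorrelated, because every covariance involves an odd moment of some $Z_i$ or factorizes into a product of centered terms. Using $\operatorname{Var}(Z^2-1)=2$ and $\operatorname{Var}(Z_1Z_2)=1$,
\[
\operatorname{Var}(Q)=2a^2+2a^2+b^2=\frac{\rho^2+\rho}{(1+\rho)^2}=\frac{\rho}{1+\rho}.
\]
In particular $Q\in L^2$, so $e_\tau(L)$ is well defined.

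\textbf{Step 4: the expectile.} The translation-invariance property of expectiles listed in the Preliminaries, applied to the constant $I_G$, gives
\[
I_\tau(X;Y)=e_\tau(L)=I_G+e_\tau(Q).
\]
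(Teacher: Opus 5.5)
Your proof is correct and follows the same route as the paper. The paper only asserts direct substitution of the Gaussian densities, translation equivariance of $e_\tau$, and the variance computation from independence together with $\operatorname{Var}(Z_i^2-1)=2$ and $\operatorname{Var}(Z_1Z_2)=1$; your choice $Z_1=X/\sqrt{P}$, $Z_2=N/\sigma$ and the rewriting $Y/\sqrt{P+\sigma^2}=sZ_1+cZ_2$ fill in the algebra it leaves implicit, and give $L=I_G+Q$ as an almost-sure identity rather than only an equality in distribution.
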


\begin{IEEEproof}
The proof follows by direct substitution and simplification. The translation equivariance of expectiles gives $e_\tau(L)=I_G+e_\tau(Q)$. The variance calculation uses independence and $\operatorname{Var}(Z_i^2-1)=2$, $\operatorname{Var}(Z_1Z_2)=1$.
\end{IEEEproof}

Because $\mathbb{E}[Q]=0$, the classical mutual information $I(X;Y)=I_G$ is recovered for $\tau=1/2$. For $\tau>1/2$, the term $e_\tau(Q)$ is positive and grows as $\tau$ approaches $1$.

We now establish the fundamental topological and algebraic properties underlying this optimization framework.

\begin{proposition}[Compactness of the Feasible Set]
\label{prop:compactness}
For every power limit $P > 0$, the feasible domain $\mathcal{P}(P)$ is compact with respect to the weak topology on $\mathcal{M}(\mathbb{R})$.
\end{proposition}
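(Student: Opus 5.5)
We will prove compactness by Prokhorov's theorem: on the Polish space $\mathbb{R}$, a family of probability measures is relatively compact in the weak topology if and only if it is tight. Because the weak topology on $\mathcal{M}(\mathbb{R})$ is metrizable (for instance by the Lévy--Prokhorov metric), compactness reduces to two facts. First, $\mathcal{P}(P)$ is tight. Second, $\mathcal{P}(P)$ is sequentially closed under weak convergence.

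\emph{Tightness.} For any $P_X \in \mathcal{P}(P)$ and any $R>0$, Chebyshev--Markov gives
\[
P_X\big(\{|x|>R\}\big) \le \frac{1}{R^2}\int_{\mathbb{R}} x^2\,dP_X(x) \le \frac{P}{R^2}.
\]
Given $\varepsilon>0$, we choose $R=\sqrt{P/\varepsilon}$. The compact interval $[-R,R]$ then carries mass at least $1-\varepsilon$ uniformly over the family. Prokhorov's theorem yields relative compactness: every sequence in $\mathcal{P}(P)$ has a subsequence converging weakly to some probability measure $P_\infty \in \mathcal{M}(\mathbb{R})$. The limit is a genuine probability measure, with no mass escaping, precisely because of tightness.

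\emph{Closedness.} This is the only step needing care, since $x\mapsto x^2$ is unbounded and so weak convergence does not directly pass the second moment to the limit. The remedy is lower semicontinuity. Suppose $P_n \Rightarrow P_\infty$ with each $P_n\in\mathcal{P}(P)$. For each $M>0$, the truncated function $f_M(x)=\min(x^2,M)$ is bounded and continuous, so
\[
\int f_M\,dP_\infty=\lim_{n\to\infty}\int f_M\,dP_n\le \liminf_{n\to\infty}\int x^2\,dP_n\le P.
\]
Letting $M\uparrow\infty$, monotone convergence gives $\int x^2\,dP_\infty\le P$. Hence $P_\infty\in\mathcal{P}(P)$. (Equivalently, one can invoke the portmanteau theorem for nonnegative lower semicontinuous integrands.)

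Combining the two steps, every sequence in $\mathcal{P}(P)$ has a weakly convergent subsequence whose limit lies in $\mathcal{P}(P)$. So $\mathcal{P}(P)$ is sequentially compact, and by metrizability it is compact.
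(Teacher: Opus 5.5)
Your proof is correct and follows the same route as the paper: uniform tightness via Markov's inequality with radius $\sqrt{P/\varepsilon}$, Prokhorov's theorem for relative compactness, and weak closedness from lower semicontinuity of $P_X \mapsto \int x^2\,dP_X$. Your explicit truncation argument with $\min(x^2,M)$ and your appeal to metrizability, which justifies passing from sequential to topological compactness, make precise two steps the paper only cites briefly via Portmanteau; the approach itself is unchanged.
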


\begin{proof}
We proceed in two steps: demonstrating uniform tightness to ensure relative compactness, followed by establishing closedness.

First, fix any $\varepsilon > 0$ and choose a compact interval $[-K, K] \subset \mathbb{R}$ where $K \triangleq \sqrt{P/\varepsilon}$. By application of Markov's inequality, for any $P_X \in \mathcal{P}(P)$ we have:
\begin{equation}
    P_X\bigl(\mathbb{R} \setminus [-K,K]\bigr) = P_X(|X| > K) \le \frac{\mathbb{E}[X^2]}{K^2} \le \frac{P}{P/\varepsilon} = \varepsilon.
\end{equation}
Since $K$ depends solely on $\varepsilon$ and $P$, the family of measures $ \mathcal{P}(P)$ is uniformly tight. By Prokhorov's theorem, $\mathcal{P}(P)$ is relatively compact in the weak topology.

Second, we prove that $ \mathcal{P}(P)$ is weakly closed. Let $\{P_{X_n}\}_{n=1}^\infty \subset  \mathcal{P}(P)$ be a sequence converging weakly to some limit measure $P_X \in  \mathcal{M}(\mathbb{R})$. Since the mapping $x \mapsto x^2$ is continuous and non-negative, the Portmanteau theorem guarantees that the mapping $P_X \mapsto \int x^2 \ dP_X$ is lower semicontinuous. Consequently:
\begin{equation}
    \int_{\mathbb{R}} x^2 \ dP_X(x) \le \liminf_{n \to \infty} \int_{\mathbb{R}} x^2 dP_{X_n}(x) \le P,
\end{equation}
which implies $P_X \in \mathcal{P}(P)$. Being a weakly closed subset of a relatively compact space, $\mathcal{P}(P)$ is weakly compact.
\end{proof}

\begin{proposition}[Symmetry Invariance]
\label{prop:symmetry}
Let $P_X \in \mathcal{P}(P)$ and let $\widetilde P_X \in \mathcal{P}(P)$ denote its reflection around the origin, defined via $\widetilde P_X(A) \triangleq P_X(-A)$ for all Borel sets $A \in \mathcal{B}(\mathbb{R})$. Then:
\begin{equation}
    I_\tau(\widetilde P_X) = I_\tau(P_X).
\end{equation}
\end{proposition}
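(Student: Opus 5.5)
The plan is to show that the information density, viewed as a random variable, has the same law under $P_X$ and under its reflection $\widetilde P_X$. The expectile $e_\tau$ is law-invariant: it is defined through the distribution function $F_Z$ alone. Equal laws will therefore give equal expectile mutual information. We interpret $I_\tau(P_X)$ as $e_\tau^{(P_{XY})}[L_{P_X}(X,Y)]$ for the AWGN channel $Y=X+N$.

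\textbf{Step 1: reflect the pair.} Let $(X,Y)$ have joint law $P_X\otimes p_{Y|X}$ and set $(\widetilde X,\widetilde Y)=(-X,-Y)=(-X,-X-N)$. Since $-N\sim\mathcal N(0,\sigma^2)$ is independent of $-X\sim\widetilde P_X$, the pair $(\widetilde X,\widetilde Y)$ is distributed as an input–output pair of the same AWGN channel with input law $\widetilde P_X$. We also note $\int x^2\,d\widetilde P_X=\int x^2\,dP_X\le P$, so $\widetilde P_X\in\mathcal P(P)$.

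\textbf{Step 2: check the densities.} The channel kernel is even: $p_{Y|X}(-y\mid -x)=p_{Y|X}(y\mid x)$, because it depends only on $(y-x)^2$. The output density under $\widetilde P_X$ satisfies
\[
\widetilde p_Y(y)=\int p_{Y|X}(y\mid x)\,d\widetilde P_X(x)=\int p_{Y|X}(y\mid -x)\,dP_X(x)=\int p_{Y|X}(-y\mid x)\,dP_X(x)=p_Y(-y),
\]
using the change of variables $x\mapsto -x$ and the evenness of the kernel. Hence, pointwise,
\[
L_{\widetilde P_X}(-x,-y)=\log\frac{p_{Y|X}(-y\mid -x)}{\widetilde p_Y(-y)}=\log\frac{p_{Y|X}(y\mid x)}{p_Y(y)}=L_{P_X}(x,y).
\]
Therefore $L_{\widetilde P_X}(\widetilde X,\widetilde Y)=L_{P_X}(X,Y)$ almost surely on the common probability space. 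In particular the two information densities have the same law.

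\textbf{Step 3: conclude.} Square integrability of one information density is equivalent to that of the other, since they are equal random variables. By the defining balance equation of the expectile, which involves only the distribution function, the two expectiles coincide, and so $I_\tau(\widetilde P_X)=I_\tau(P_X)$ for every $\tau\in(0,1)$.

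The only delicate point is Step 2: identifying the reflected output density, including the change of variables under the integral and the requirement that $p_Y>0$ everywhere. The latter holds because the Gaussian kernel is strictly positive, so the logarithm is well defined. The rest follows directly from law invariance.
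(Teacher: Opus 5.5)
Your proposal is correct and follows the same route as the paper's proof. The change of variables plus evenness of the Gaussian kernel gives $\widetilde p_Y(y)=p_Y(-y)$, hence $L_{\widetilde P_X}(-x,-y)=L_{P_X}(x,y)$, and law invariance of $e_\tau$ finishes; your Step 1, which realizes the reflected input--output pair explicitly as $(-X,-Y)$ on the same probability space, makes precise the equality in distribution that the paper only asserts.
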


\begin{proof}
The probability density of the AWGN channel exhibits an even reflection symmetry profile about the origin, ensuring that $p_{Y|X}(y|x) = p_{Y|X}(-y|-x)$. Let $\widetilde p_Y(y)$ denote the output density induced by the reflected input measure $\widetilde P_X$. By change of variables, we see:
\begin{equation} \begin{array}{l}
    \widetilde p_Y(y) = \int_{\mathbb{R}} p_{Y|X}(y|x) \ d\widetilde P_X(x) \\
    = \int_{\mathbb{R}} p_{Y|X}(y|-x) \ dP_X(x) \\ = \int_{\mathbb{R}} p_{Y|X}(-y|x) \ dP_X(x) = p_Y(-y).
     \end{array}
\end{equation}
Evaluating the information density under the reflected measure system yields:
\begin{equation} \begin{array}{l}
    L_{\widetilde P_X}(-x,-y) = \log \frac{p_{Y|X}(-y|-x)}{\widetilde p_Y(-y)} \\
    = \log \frac{p_{Y|X}(y|x)}{p_Y(y)} = L_{P_X}(x,y).
     \end{array}
\end{equation}
Thus, the random variables $L_{\widetilde P_X}(\widetilde X, \widetilde Y)$ and $L_{P_X}(X,Y)$ are completely identical in distribution. Because the expectile functional $e_\tau ({\cdot})$ depends strictly on the probability law of its argument, we obtain $I_\tau(\widetilde P_X) = I_\tau(P_X)$, completing the proof.
\end{proof}

\begin{remark}[Classical Shannon Case]
\label{prop:shannon}
For $\tau = \frac{1}{2}$, the expectile mutual information collapses to Shannon's classical mutual information:
\begin{equation}
    I_{1/2}(P_X) = I(X;Y).
\end{equation}
Furthermore, $I_{1/2}(P_X)$ is strictly concave over $\mathcal{P}(P)$, and the unique capacity-achieving distribution is the centered Gaussian distribution:
\begin{equation}
    P_X^\star = \mathcal{N}(0,P).
\end{equation}
\end{remark}

\begin{proposition}[Existence of an Expectile Capacity Maximizer]
\label{prop:existence}
Assume that the operational functional mapping:
\begin{equation}
    \Phi: \mathcal{P}(P) \to \mathbb{R}, \quad P_X \mapsto I_\tau(P_X)
\end{equation}
is upper semicontinuous on $\mathcal{P}(P)$ with respect to the weak topology. Then, the supremum defining $C_\tau(P)$ is achievable.
\end{proposition}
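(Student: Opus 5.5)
This is the Weierstrass extreme value theorem for upper semicontinuous functionals on a compact set. The plan is to combine Proposition~\ref{prop:compactness} with the assumed upper semicontinuity of $\Phi$.

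\emph{Step 1: the supremum is finite.} The set $\mathcal{P}(P)$ is nonempty, since it contains $\delta_0$ and $\mathcal{N}(0,P)$. An upper semicontinuous map into $\mathbb{R}$ on a compact space is bounded above. To see this, cover $\mathcal{P}(P)$ by the open sets $U_n=\{P_X:\Phi(P_X)<n\}$, which are open by upper semicontinuity, and extract a finite subcover. Hence $C_\tau(P)=\sup_{\mathcal{P}(P)}\Phi<\infty$.

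\emph{Step 2: a maximizing sequence has a convergent subsequence.} Take $P_X^{(n)}\in\mathcal{P}(P)$ with $\Phi(P_X^{(n)})\to C_\tau(P)$. The weak topology on $\mathcal{M}(\mathbb{R})$ is metrizable, for instance by the L\'evy--Prokhorov metric. So compactness from Proposition~\ref{prop:compactness} gives sequential compactness. We can therefore extract a subsequence converging weakly to some $P_X^\star$, and $P_X^\star\in\mathcal{P}(P)$ by the closedness part of that proposition.

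\emph{Step 3: the limit attains the supremum.} Upper semicontinuity gives
\[
\Phi(P_X^\star)\ \ge\ \limsup_{k\to\infty}\Phi\bigl(P_X^{(n_k)}\bigr)=C_\tau(P).
\]
Combined with $\Phi(P_X^\star)\le C_\tau(P)$, this shows $P_X^\star$ is a maximizer.

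The only genuine care point is well-posedness: $\Phi$ must take finite real values on all of $\mathcal{P}(P)$. For this we need the information density $L_{P_X}$ to be square-integrable, so that $e_\tau$ is defined. For the AWGN channel we verify it directly. We have $p_Y\le(2\pi\sigma^2)^{-1/2}$, and by Jensen, $-\log p_Y(y)\le \tfrac12\log(2\pi\sigma^2)+\mathbb{E}[(y-X')^2]/(2\sigma^2)$ with $X'\sim P_X$. Together these show $L_{P_X}$ is dominated by a quadratic polynomial in $(X,N)$. Gaussian moments of $N$ and $\mathbb{E}[X^2]\le P$ then bound the second moment, though only if $\mathbb{E}[X^4]<\infty$. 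Where that fails we adopt the convention implicit in the assumption, namely that $\Phi$ is real-valued on $\mathcal{P}(P)$. Alternatively we allow the value $-\infty$, which is harmless for the maximization argument above.
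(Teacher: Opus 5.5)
Your proof is correct and follows the paper's argument: weak compactness of $\mathcal{P}(P)$ from Proposition~\ref{prop:compactness}, combined with the Weierstrass theorem for upper semicontinuous functionals, which you unpack (boundedness via a finite subcover, then a maximizing sequence made convergent using L\'evy--Prokhorov metrizability) where the paper simply cites it. Your closing well-posedness discussion is not needed, since the hypothesis already takes $\Phi$ to be real-valued on $\mathcal{P}(P)$.
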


\begin{proof}
By Proposition \ref{prop:compactness}, the set $\mathcal{P}(P)$ is weakly compact. By the extreme value theorem generalized to topological spaces (Weierstrass's theorem), any upper semicontinuous real-valued functional operating on a non-empty compact topological domain achieves its global maximum value within that domain. Hence, there exists an optimal measure $P_X^\star \in \mathcal{P}(P)$ satisfying:
\begin{equation}
    I_\tau(P_X^\star) = \sup_{P_X \in \mathcal{P}(P)} I_\tau(P_X) = C_\tau(P).
\end{equation}
\end{proof}

\begin{proposition}[Non-Uniqueness via Domain Asymmetry]
\label{prop:nonuniqueness}
Let $P_X^\star \in \mathcal{P}(P)$ be a capacity-achieving input distribution for $C_\tau(P)$. If $P_X^\star$ is asymmetric about the origin, then the expectile capacity optimization problem possesses at least two distinct globally optimal solutions:
\begin{equation}
    P_X^\star \neq \widetilde P_X^\star,
\end{equation}
where $\widetilde P_X^\star(A) \triangleq P_X^\star(-A)$ represents its reflected counterpart.
\end{proposition}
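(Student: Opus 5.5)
The argument combines Proposition~\ref{prop:symmetry} with two elementary facts: the reflection $\widetilde P_X^\star$ is feasible, and it differs from $P_X^\star$ exactly when $P_X^\star$ is asymmetric. Optimality of the reflection is then immediate.

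\emph{Feasibility.} The second moment is invariant under reflection: by the change of variables $x\mapsto -x$,
\[
\int_{\mathbb{R}} x^2\, d\widetilde P_X^\star(x)=\int_{\mathbb{R}} (-x)^2\, dP_X^\star(x)\le P .
\]
Hence $\widetilde P_X^\star\in\mathcal{P}(P)$. The same computation shows that $A\mapsto P_X^\star(-A)$ is a genuine Borel probability measure, since $-A$ is Borel whenever $A$ is.

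\emph{Optimality.} Proposition~\ref{prop:symmetry} gives $I_\tau(\widetilde P_X^\star)=I_\tau(P_X^\star)=C_\tau(P)$. Since $\widetilde P_X^\star$ is feasible and attains the supremum, it is also a global maximizer.

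\emph{Distinctness.} I interpret ``asymmetric about the origin'' as meaning that there is a Borel set $A$ with $P_X^\star(A)\neq P_X^\star(-A)$. By definition, $P_X^\star(-A)=\widetilde P_X^\star(A)$, so for this $A$ we get $\widetilde P_X^\star(A)\neq P_X^\star(A)$. Therefore $P_X^\star\neq\widetilde P_X^\star$, and we have exhibited two distinct optimal solutions.

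The only real dependency is Proposition~\ref{prop:symmetry}. In its proof I would check that the identity $L_{\widetilde P_X}(-x,-y)=L_{P_X}(x,y)$, together with $(\widetilde X,\widetilde Y)\overset{d}{=}(-X,-Y)$, gives equality in law of the information densities. The equality in law follows because $N$ is symmetric: $(-X,-Y)=(-X,-X-N)\overset{d}{=}(-X,-X+N)$ with $-X\sim\widetilde P_X$, and $-X+N$ is exactly the channel output for input $-X$. This is the one step that might need filling in. Everything else is bookkeeping, so I do not expect a substantive obstacle. Existence of $P_X^\star$ is part of the hypothesis, so no upper semicontinuity (as in Proposition~\ref{prop:existence}) is needed.
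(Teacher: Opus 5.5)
Your proof is correct and matches the paper's argument step for step: the reflection is feasible because the second moment is invariant, it is optimal by Proposition~\ref{prop:symmetry}, and it is distinct because some Borel set $B$ has $P_X^\star(B)\neq P_X^\star(-B)$. Your extra check that $(\widetilde X,\widetilde Y)\overset{d}{=}(-X,-Y)$ by symmetry of $N$ fills in a step of Proposition~\ref{prop:symmetry}, where the paper asserts equality in law without spelling it out.
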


\begin{proof}
Assume $P_X^\star$ achieves the maximum, meaning $I_\tau(P_X^\star) = C_\tau(P)$. Utilizing Proposition \ref{prop:symmetry}, the reflection operation leaves the expectile mutual information invariant, providing:
\begin{equation}
    I_\tau(\widetilde P_X^\star) = I_\tau(P_X^\star) = C_\tau(P).
\end{equation}
Since $\int x^2 \ d\widetilde P_X^\star(x) = \int (-x)^2 \ dP_X^\star(x) \le P$, the measure $\widetilde P_X^\star$ preserves structural feasibility ($\widetilde P_X^\star \in \mathcal{P}(P)$). If $P_X^\star$ is asymmetric, there exists a Borel set $B$ such that $P_X^\star(B) \neq P_X^\star(-B) = \widetilde P_X^\star(B)$, meaning $P_X^\star \neq \widetilde P_X^\star$. Both measures concurrently maximize the optimization objective, resulting in multiple distinct solutions.
\end{proof}

\subsection{Outlier‑Resistant Dependence Estimation}
For $\tau$ close to $1$, $I_\tau$ is sensitive to positive deviations of the log‑likelihood ratio, i.e., to pairs $(x,y)$ where the joint density exceeds the product of marginals. This provides a one‑sided measure of association useful for detecting tail dependence.
For the Gaussian input, the risk‑aware information rate exceeds the Shannon capacity, diverging as the risk parameter becomes extreme.

\[
C_\tau \ge I_G + e_\tau(Q) > I_G \quad \text{for } \tau>1/2.
\]
Whether a non‑Gaussian input can further increase $C_\tau$ is an open question. 

\subsection{Computation}
Expectiles can be computed via fixed‑point iteration:
\begin{equation}
e_\tau(Z) = \frac{\mathbb{E}[Z \cdot w_\tau(Z - e_\tau(Z))]}{\mathbb{E}[w_\tau(Z - e_\tau(Z))]},
\end{equation}
with $w_\tau(u) = \tau\mathbf{1}\{u>0\} + (1-\tau)\mathbf{1}\{u\le 0\}$. For discrete distributions, this reduces to a simple root‑finding problem.

\begin{proposition}[Expectile Cross-Entropy for an Additive White Gaussian Noise Model]
\label{thm:awgn_cross_entropy}

Let $(\mathbb{R}, \mathcal{B}(\mathbb{R}))$ be the standard Borel measurable space. Let $Q_Y$ be the reference probability measure on $(\mathbb{R}, \mathcal{B}(\mathbb{R}))$ corresponding to the centered Gaussian distribution $\mathcal{N}(0,\sigma^2)$ with parameter $\sigma \in (0, \infty)$, admitting the Radon-Nikodym density $q = \frac{dQ_Y}{d\lambda_{\text{Leb}}}$ with respect to the Lebesgue measure $\lambda_{\text{Leb}}$:
\[
q(y) = \frac{1}{\sqrt{2\pi\sigma^2}} \exp\left( -\frac{y^2}{2\sigma^2} \right), \quad y \in \mathbb{R}.
\]
Let $P_Y$ denote the true data-generating probability measure given by the non-centered Gaussian distribution $\mathcal{N}(\mu,\sigma^2)$ where $\mu \in \mathbb{R}$. For a given risk-sensitivity parameter $\tau \in (0,1)$, the $\tau$-expectile cross-entropy $H_\tau(P_Y, Q_Y) := e_\tau\bigl(-\log q(Y)\bigr)$ under the law $Y \sim P_Y$ is uniquely given by
\begin{equation}
{
H_\tau(P_Y,Q_Y) = \frac{1}{2}\log(2\pi\sigma^2) + \frac{1}{2} e_\tau(Z)
}
\end{equation}
where $Z: \mathbb{R} \to \mathbb{R}_+$ is the measurable transformation defined by $Z = \left(\frac{Y}{\sigma}\right)^2$. The random variable $Z$ is distributed according to a noncentral chi-squared law with one degree of freedom and noncentrality parameter $\lambda = \frac{\mu^2}{\sigma^2}$, denoted $Z \sim \chi'^2_1(\lambda)$. Equivalently, expressed directly as a statistical functional of the law:
\begin{equation}
{
H_\tau(P_Y,Q_Y) = \frac{1}{2}\log(2\pi\sigma^2) + \frac{1}{2} e_\tau\left( \chi'^2_1(\lambda) \right).
}
\end{equation}
\end{proposition}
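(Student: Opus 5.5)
The plan is a direct computation that uses only the two structural properties of expectiles listed in the preliminaries: translation invariance and positive homogeneity. First I write out the log-loss explicitly. From the stated density $q$, for every $y\in\mathbb{R}$,
\[
-\log q(y) = \tfrac{1}{2}\log(2\pi\sigma^2) + \tfrac{1}{2}\,\frac{y^2}{\sigma^2}.
\]
Hence, as random variables on the probability space of $Y\sim P_Y$, we have $-\log q(Y) = c + \tfrac12 Z$. Here $c=\tfrac12\log(2\pi\sigma^2)$ is a deterministic constant and $Z=(Y/\sigma)^2$.

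Before applying the expectile, I check that everything is well defined. Since $Y$ is Gaussian, $Z$ has all moments, and in particular $\mathbb{E}_P[Z^2]<\infty$. So $-\log q(Y)\in L^2(P_Y)$, which is the square log-integrability condition in the definition of the expectile cross-entropy. Uniqueness of the expectile then follows from the Definition: the map $\theta\mapsto\mathbb{E}[\rho_\tau(W-\theta)]$ is strictly decreasing, by the argument used in Lemma~\ref{lem:monotonicity}. This gives the ``uniquely given'' part of the claim.

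Next I apply translation invariance with constant $c$, and then positive homogeneity with $\lambda=\tfrac12>0$. This yields
\[
H_\tau(P_Y,Q_Y)=e_\tau\!\left(c+\tfrac12 Z\right)=c+\tfrac12 e_\tau(Z).
\]
This is the first displayed formula.

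For the distributional identification, I write $Y/\sigma = \mu/\sigma + G$ with $G\sim\mathcal N(0,1)$. By definition, the square of a unit-variance Gaussian with mean $\mu/\sigma$ is noncentral chi-squared with one degree of freedom and noncentrality $\lambda=(\mu/\sigma)^2=\mu^2/\sigma^2$. Because the expectile is law-invariant (it is defined through $F_Z$ alone), $e_\tau(Z)$ depends only on the law $\chi'^2_1(\lambda)$. This justifies the second, functional form.

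I expect no real obstacle. The only point needing care is to state law-invariance explicitly, so that $e_\tau(\chi'^2_1(\lambda))$ makes sense, and to confirm that $\tfrac12$ is positive so homogeneity applies without invoking the negative-scaling rule. As a sanity check, I note that $\tau=1/2$ gives $\tfrac12\log(2\pi\sigma^2)+\tfrac12(1+\lambda)$, which is the classical Gaussian cross-entropy.
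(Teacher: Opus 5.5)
Your proposal is correct and follows essentially the same route as the paper: write the log-loss as $\tfrac12\log(2\pi\sigma^2)+\tfrac12 Z$, check $L^2$-integrability, apply translation equivariance and positive homogeneity with factor $\tfrac12$, then identify $Z=(G+\mu/\sigma)^2\sim\chi'^2_1(\mu^2/\sigma^2)$ and invoke law-invariance. Your uniqueness remark and the $\tau=1/2$ sanity check, which recovers $\tfrac12\log(2\pi\sigma^2)+\tfrac12(1+\lambda)$, are correct small additions the paper does not make.
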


\begin{proof}
Let $(\mathbb{R}, \mathcal{B}(\mathbb{R}), P_Y)$ be our underlying probability space hosting the random variable $Y$. We define the log-loss mapping $L_Q: \mathbb{R} \to \mathbb{R}$ explicitly as the negative composition of the logarithm with the Radon-Nikodym density $q(y)$:
\[
L_Q(y) \coloneqq -\log q(y) = \frac{1}{2}\log(2\pi\sigma^2) + \frac{y^2}{2\sigma^2}.
\]
To verify that the $\tau$-expectile cross-entropy $H_\tau(P_Y, Q_Y) = e_\tau(L_Q(Y))$ is well-defined, we observe that since $Y \sim \mathcal{N}(\mu, \sigma^2)$, the variable $Y$ possesses finite moments of all orders. Thus, the square log-loss satisfies the mandatory $L^2(P_Y)$ integrability criteria:
\[
\mathbb{E}_{P_Y}\left[ \left( L_Q(Y) \right)^2 \right] = \mathbb{E}_{P_Y}\left[ \left( \frac{1}{2}\log(2\pi\sigma^2) + \frac{Y^2}{2\sigma^2} \right)^2 \right] < \infty.
\]
By virtue of the structural properties of the statistical expectile functional $e_\tau: L^2(P_Y) \to \mathbb{R}$, we leverage its property of translation equivariance ($e_\tau(c + X) = c + e_\tau(X)$ for any deterministic constant $c \in \mathbb{R}$) and positive homogeneity ($e_\tau(a X) = a e_\tau(X)$ for any $a \ge 0$). Applying these axioms directly to the representation of the random variable $L_Q(Y)$ yields:
\begin{align*}
H_\tau(P_Y,Q_Y) &= e_\tau\left( \frac{1}{2}\log(2\pi\sigma^2) + \frac{1}{2}\left(\frac{Y}{\sigma}\right)^2 \right) \\
&= \frac{1}{2}\log(2\pi\sigma^2) + \frac{1}{2} e_\tau\left( \left(\frac{Y}{\sigma}\right)^2 \right).
\end{align*}
To determine the law of the remaining random variable under $P_Y$, let us consider the standardizing transform $V: \mathbb{R} \to \mathbb{R}$ defined by $V = \frac{Y - \mu}{\sigma}$. Under $P_Y$, it follows natively that $V \sim \mathcal{N}(0, 1)$. We can then algebraically partition the target variable $Z \coloneqq \left(\frac{Y}{\sigma}\right)^2$ as:
\[
Z = \left( \frac{\sigma V + \mu}{\sigma} \right)^2 = \left( V + \frac{\mu}{\sigma} \right)^2.
\]
By definition of the noncentral chi-squared distribution, the square of a normal random variable with unit variance and a non-zero mean $m$ follows a $\chi'^2_1(m^2)$ distribution. Setting $m = \frac{\mu}{\sigma}$ reveals that the push-forward measure $P_Y \circ Z^{-1}$ is uniquely identified by the noncentrality parameter:
\[
\lambda = m^2 = \left( \frac{\mu}{\sigma} \right)^2 = \frac{\mu^2}{\sigma^2}.
\]
Thus, $Z \sim \chi'^2_1(\lambda)$. Because expectiles are law-invariant statistical functionals (depending strictly on the distribution of the underlying variable), substituting this distribution into the derived translation identity establishes the result:
\[
H_\tau(P_Y,Q_Y) = \frac{1}{2}\log(2\pi\sigma^2) + \frac{1}{2} e_\tau\left( \chi'^2_1(\lambda) \right).
\]
This completes the proof.
\end{proof}

\subsubsection{Two-User Risk-Aware Capacity Region}

To extend the single-user framework to multi-user environments, we formalize the capacity region of a two-user risk-aware Additive White Gaussian Noise (AWGN) Broadcast Channel (BC). Consider a single transmitter sending independent information to User 1 (strong user) and User 2 (degraded user) under the synchronous discrete-time model:
\begin{align}
Y_1 &= X + N_1, \quad N_1 \sim \mathcal{N}(0, \sigma_1^2), \\
Y_2 &= X + N_2, \quad N_2 \sim \mathcal{N}(0, \sigma_2^2),
\end{align}
where the noise variances satisfy $\sigma_1^2 < \sigma_2^2$, and the channel input $X$ is subject to an average power constraint $P$, defining the compact feasible input space:
\begin{equation}
\mathcal{P}(P) \triangleq \left\{ P_X \in \mathcal{M}(\mathbb{R}) : \int_{\mathbb{R}} x^2 dP_X(x) \le P \right\}.
\end{equation}

Under a risk-seeking or risk-averse setting governed by the asymmetry parameter $\tau \in (0,1)$, classical expectation-based mutual information is replaced by the $\tau$-expectile mutual information. Using superposition coding, the transmitted signal is split into independent components $X = X_1 + X_2$, where $X_1 \sim \mathcal{N}(0, P_1)$ is designated for User 1 and $X_2 \sim \mathcal{N}(0, P_2)$ for User 2, such that $P_1 + P_2 \le P$. Due to the  non-linearity of the expectile operator, informational additivity fails ($I_\tau(X_1, X_2; Y) \neq I_\tau(X_2; Y) + I_\tau(X_1; Y \mid X_2)$). The capacity region must be strictly defined through the operational decoding sequence of Successive Interference Cancellation (SIC).

\begin{definition}[Two-User Risk-Aware Capacity Region]
For a given risk-asymmetry parameter $\tau \in (0, 1)$, the risk-aware capacity region $\mathcal{C}_\tau(P)$ of the AWGN broadcast channel is the closure of the convex hull of all achievable rate pairs $(R_1, R_2) \in \mathbb{R}_+^2$ satisfying:
\begin{equation}
\begin{array}{l}
R_2 \le I_\tau(X_2; Y_2) 
= e_\tau^{(P_{X_2Y_2})} \left[ \log \frac{p_{Y_2|X_2}(Y_2|X_2)}{p_{Y_2}(Y_2)} \right], \\
R_1 \le I_\tau(X_1; Y_1 \mid X_2) 
= e_\tau^{(P_{X_1X_2Y_1})} \left[ \log \frac{p_{Y_1|X_1,X_2}(Y_1|X_1,X_2)}{p_{Y_1|X_2}(Y_1|X_2)} \right],
\end{array}
\end{equation}
for some choice of joint distribution $P_{X_1X_2}$ satisfying $\mathbb{E}[X_1^2] + \mathbb{E}[X_2^2] \le P$.
\end{definition}

Evaluating these functionals explicitly for Gaussian codebooks yields the boundary coordinates mapping out the capacity region in bits per second per Hertz (bits/s/Hz):
\begin{align}
R_1 &\le \frac{1}{2}\log_2\left(1 + \frac{P_1}{\sigma_1^2}\right) + e_\tau(Q_1)\log_2(e), \\
R_2 &\le \frac{1}{2}\log_2\left(1 + \frac{P_2}{P_1 + \sigma_2^2}\right) + e_\tau(Q_2)\log_2(e),
\end{align}
where the zero-mean, non-degenerate error random variables $Q_i$ represent the information density fluctuations and are structured as:
\begin{equation}
Q_i = \frac{\rho_i}{2(1+\rho_i)}(Z_{A,i}^2-1) - \frac{\rho_i}{2(1+\rho_i)}(Z_{B,i}^2-1) + \frac{\sqrt{\rho_i}}{1+\rho_i}Z_{A,i}Z_{B,i},
\end{equation}
with $\rho_1 = \frac{P_1}{\sigma_1^2}$, $\rho_2 = \frac{P_2}{P_1+\sigma_2^2}$, and $Z_{A,i}, Z_{B,i} \stackrel{\text{i.i.d.}}{\sim} \mathcal{N}(0,1)$.

\begin{figure}[htbp]
\centering
\includegraphics[width=0.48\textwidth]{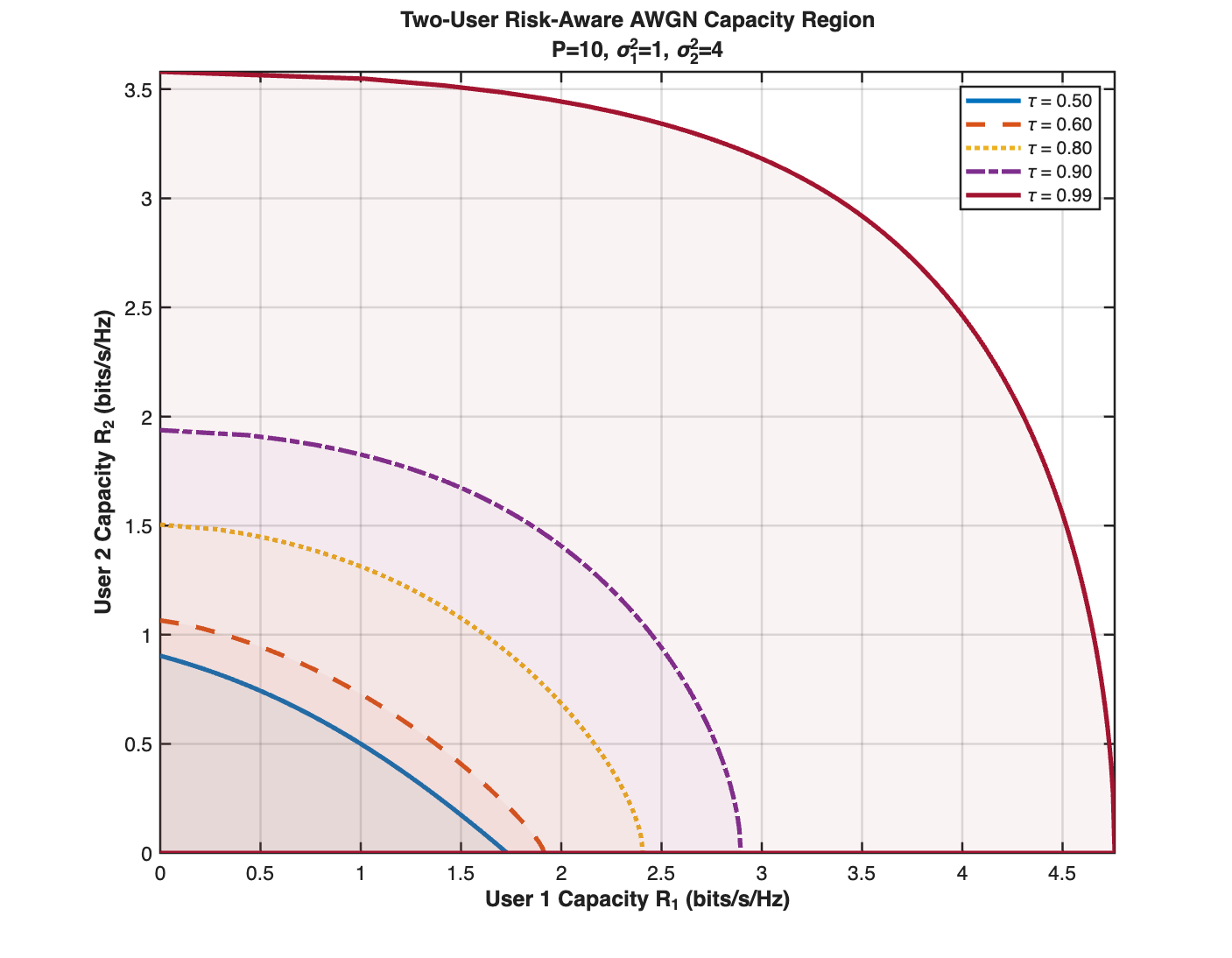}
\caption{The exact risk-aware AWGN capacity region $\mathcal{C}_\tau(P)$ computed using fixed-point numerical iterations over $10^5$ Monte Carlo trials, for $P=10, \sigma_1^2=1, \sigma_2^2=4$. The baseline case ($\tau=0.5$) corresponds strictly to the classical Shannon Cover-Bergmans region. For values of $\tau > 0.5$, the region expands outward significantly due to the positive risk-premium $e_\tau(Q_i) > 0$ generated by the tail volatility of the information density.}
\label{fig:risk_aware_capacity}
\end{figure}

The visual boundary of this region is illustrated in Fig.~\ref{fig:risk_aware_capacity}. When $\tau = 0.5$, the risk premium vanishes ($e_{1/2}(Q_i) = 0$), collapsing the boundary onto the traditional Shannon rate region. For higher risk-aversion levels ($\tau = 0.6, 0.8, 0.9, 0.99$), the boundary expands progressively outward. This phenomenon highlights that when information transmission is evaluated using an asymmetric coherent risk functional, the presence of rare, highly informative tail realizations in the log-likelihood ratio shifts the expectile upward, acting as an informational tail-risk premium.

\subsubsection{Two-User Asymmetric Risk-Aware Multiple Access Capacity Region}

To extend the single-user framework to a multi-access network geometry with heterogeneous user profiles, we formalize the capacity region of a two-user risk-aware Additive White Gaussian Noise (AWGN) Multiple Access Channel (MAC) where each user $j \in \{1,2\}$ is characterized by an independent individual risk-sensitivity index $\tau_j \in (0,1)$. Consider two independent transmitters sending private signals $X_1$ and $X_2$ simultaneously to a single common receiver. The synchronous discrete-time channel output at the receiver is given by:
\begin{equation}
Y = X_1 + X_2 + N, \quad N \sim \mathcal{N}(0, \sigma^2),
\end{equation}
where $N$ represents the additive thermal noise process with variance $\sigma^2 > 0$. The transmitters operate under independent individual average power constraints $P_1$ and $P_2$, respectively, defining the compact feasible input spaces:
\begin{equation}
\mathcal{P}_j(P_j) \triangleq \left\{ P_{X_j} \in \mathcal{M}(\mathbb{R}) : \int_{\mathbb{R}} x_j^2 dP_{X_j}(x_j) \le P_j \right\}, \quad j \in \{1,2\}.
\end{equation}

Under heterogeneous risk-seeking or risk-averse configurations, the classical expectation-based mutual information parameters are replaced by their corresponding user-specific $\tau_j$-expectile mutual information metrics. Due to the inherent non-linearity and individual asymmetry of these distinct expectile operators, informational additivity across the joint space completely fails, and the joint sum-rate capacity must scale as an integrated profile of both risk bounds. The capacity region cannot be characterized by a single linear combination, but must be strictly defined through the operational corner points dictated by Successive Interference Cancellation (SIC) decoding orders at the receiver.

\begin{definition}[Two-User Asymmetric Risk-Aware MAC Capacity Region]
For a pair of individual risk-asymmetry parameters $\tau_1, \tau_2 \in [1/2, 1)$, the asymmetric risk-aware capacity region $\mathcal{C}_{\text{MAC},(\tau_1,\tau_2)}(P_1, P_2)$ of the AWGN multiple access channel is the closure of the convex hull of all achievable rate pairs $(R_1, R_2) \in \mathbb{R}_+^2$ satisfying:
\begin{equation}
\begin{array}{l}
R_1 \le I_{\tau_1}(X_1; Y \mid X_2) 
= e_{\tau_1}^{(P_{X_1X_2Y})} \left[ \log \frac{p_{Y|X_1,X_2}(Y|X_1,X_2)}{p_{Y|X_2}(Y|X_2)} \right], \\
R_2 \le I_{\tau_2}(X_2; Y \mid X_1) 
= e_{\tau_2}^{(P_{X_1X_2Y})} \left[ \log \frac{p_{Y|X_1,X_2}(Y|X_1,X_2)}{p_{Y|X_1}(Y|X_1)} \right], \\
R_1 + R_2 \le \min\left\{I_{\tau_1}(X_1,X_2; Y), \, I_{\tau_2}(X_1,X_2; Y)\right\},
\end{array}
\end{equation}
for independent input distributions $P_{X_1} \in \mathcal{P}_1(P_1)$ and $P_{X_2} \in \mathcal{P}_2(P_2)$, where the joint sum-rate constraint is upper-bounded by the dominant risk-aversion bottleneck of the system.
\end{definition}

Evaluating these functionals explicitly for Gaussian codebooks yields the boundary coordinates mapping out the asymmetric capacity region in bits per second per Hertz (bits/s/Hz):
\begin{equation}
\begin{array}{l}
R_1 \le \frac{1}{2}\log_2\left(1 + \frac{P_1}{\sigma^2}\right) + e_{\tau_1}(Q_{1|2})\log_2(e), \\
R_2 \le \frac{1}{2}\log_2\left(1 + \frac{P_2}{\sigma^2}\right) + e_{\tau_2}(Q_{2|1})\log_2(e), \\
R_1 + R_2 \le \frac{1}{2}\log_2\left(1 + \frac{P_1 + P_2}{\sigma^2}\right) \\
 + \min\left\{e_{\tau_1}(Q_{\text{sum}}), \, e_{\tau_2}(Q_{\text{sum}})\right\}\log_2(e),
 \end{array}
\end{equation}
where the zero-mean, non-degenerate error random variables $Q_{1|2}$, $Q_{2|1}$, and $Q_{\text{sum}}$ correspond to the structural log-likelihood variations under distinct SIC decoding states. Letting $\rho_{1|2} = \frac{P_1}{\sigma^2}$, $\rho_{2|1} = \frac{P_2}{\sigma^2}$, and $\rho_{\text{sum}} = \frac{P_1+P_2}{\sigma^2}$ represent the respective link Signal-to-Noise Ratios (SNRs), these terms are explicitly defined as:
\begin{equation}
\begin{array}{l}
Q_{1|2} = \frac{\rho_{1|2}}{2(1+\rho_{1|2})}(Z_1^2-1) - \frac{\rho_{1|2}}{2(1+\rho_{1|2})}(Z_2^2-1) 
 + \frac{\sqrt{\rho_{1|2}}}{1+\rho_{1|2}}Z_1Z_2, \\
Q_{2|1} = \frac{\rho_{2|1}}{2(1+\rho_{2|1})}(Z_1^2-1) - \frac{\rho_{2|1}}{2(1+\rho_{2|1})}(Z_2^2-1) 
 + \frac{\sqrt{\rho_{2|1}}}{1+\rho_{2|1}}Z_1Z_2, \\
Q_{\text{sum}} = \frac{\rho_{\text{sum}}}{2(1+\rho_{\text{sum}})}(Z_1^2-1) - \frac{\rho_{\text{sum}}}{2(1+\rho_{\text{sum}})}(Z_2^2-1) 
+ \frac{\sqrt{\rho_{\text{sum}}}}{1+\rho_{\text{sum}}}Z_1Z_2,
\end{array}
\end{equation}
where $Z_1, Z_2 \stackrel{\text{i.i.d.}}{\sim} \mathcal{N}(0,1)$ map out the orthogonal dimensions of the underlying Gaussian noise and signaling space.

\subsubsection*{Risk-averse symmetric index $\tau_1=\tau_2=\tau\in  (0.5,1)$ }
\begin{figure}[htbp]
\centering
\includegraphics[width=0.48\textwidth]{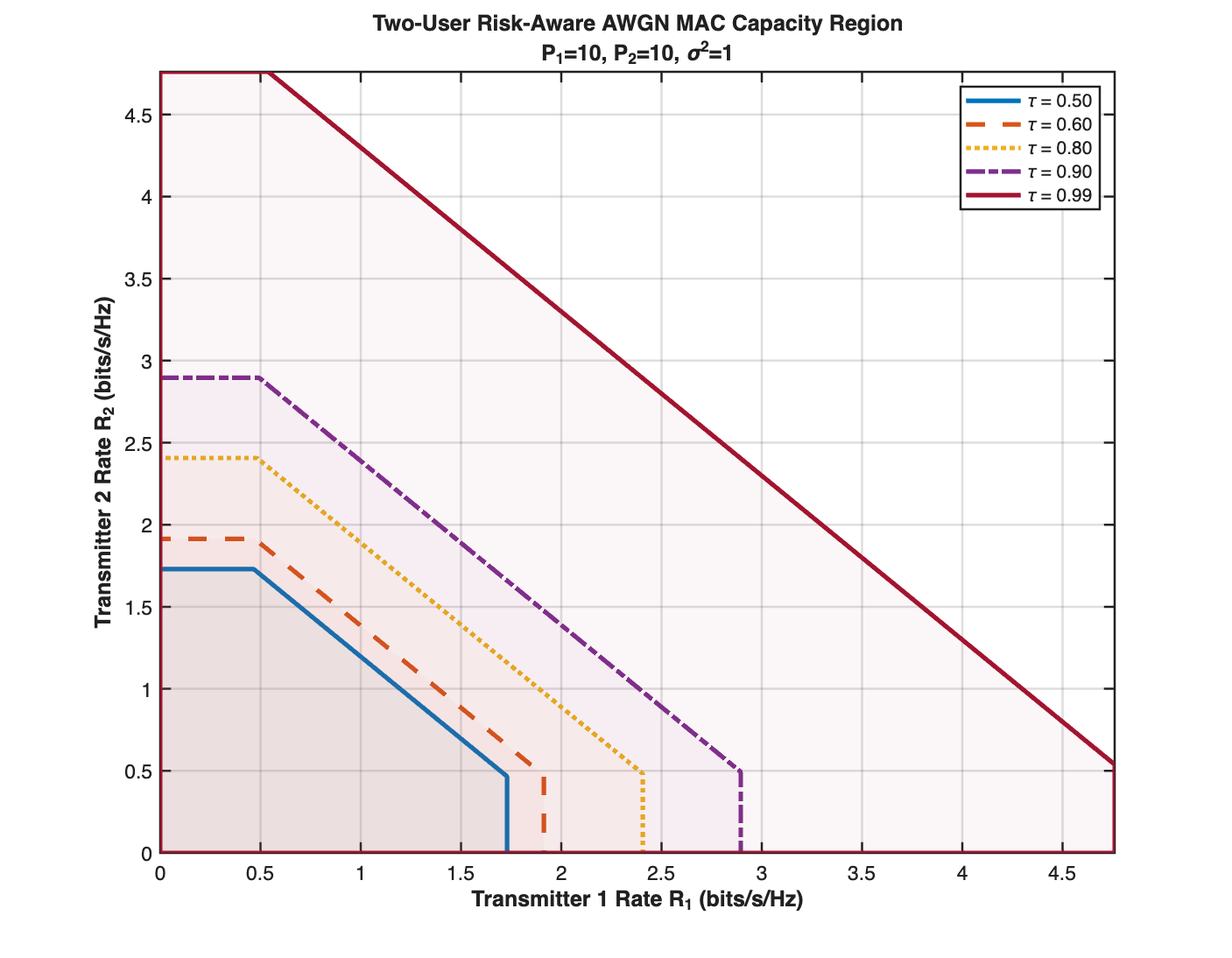}
\caption{The exact risk-aware AWGN Multiple Access Channel (MAC) capacity region $\mathcal{C}_{\text{MAC},\tau}$ computed using fixed-point numerical iterations over $10^5$ Monte Carlo trials, for $P_1=10, P_2=10, \sigma^2=1$. The baseline pentagon ($\tau=0.5$) corresponds strictly to the classical Shannon MAC capacity region. For asymmetric risk-aversion values ($\tau=\tau_1=\tau_2=  0.6, 0.8, 0.9, 0.99$), the pentagonal frontier expands progressively outward due to the positive risk-premium $e_\tau(Q) > 0$ generated by the tail volatility of the information density under joint signal arrivals.}
\label{fig:risk_aware_mac_capacity}
\end{figure}

The  boundary of this multi-transmitter region is illustrated in Fig.~\ref{fig:risk_aware_mac_capacity}. When $\tau = 0.5$, the risk premium vanishes ($e_{1/2}(Q_{1|2}) = e_{1/2}(Q_{2|1}) = e_{1/2}(Q_{\text{sum}}) = 0$), collapsing the boundary onto the traditional Shannon MAC pentagon. For higher risk-aversion levels ($\tau > 0.5$), the boundary expands progressively outward. This phenomenon highlights that when joint information transmission is evaluated using an asymmetric coherent risk functional, the presence of rare, highly non-linear tail deviations in the log-likelihood ratio shifts the expectile upward, acting as a visible informational tail-risk premium that enlarges the operational rate region.

\subsubsection*{Asymmetric Capacity Geometry}

The joint multi-user communication space mapped out by the asymmetric capacity functional $\mathcal{C}_{\text{MAC},(\tau_1,\tau_2)}$ and illustrated in Fig.~\ref{asymmetric_risk_aware_mac} exposes a deep topological divergence from classical information theory. In a standard risk-neutral setting ($\tau_1 = \tau_2 = 0.5$), the operational boundary forms a perfectly balanced, symmetrical pentagon. However, when individual, heterogeneous risk preferences are introduced, the geometry undergoes non-linear expansions, contractions, and shear alignments across the eight evaluated configurations:

\begin{enumerate}
    \item {The Risk-Neutral Anchor ${(\tau_1 = 0.5, \tau_2 = 0.5)}$:} This region corresponds strictly to the classical Shannon MAC baseline. Because both users are completely risk-neutral, the log-likelihood informational fluctuation variables collapse to their ensemble means, causing the risk premium parameters to vanish ($e_{0.5}(Q_{1|2}) = e_{0.5}(Q_{2|1}) = e_{0.5}(Q_{\text{sum}}) = 0$). The boundary lines are orthogonal and locked by the standard linear sum-rate constraint.
    
    \item {Symmetric Risk-Seeking Contraction ${(\tau_1 = 0.2, \tau_2 = 0.2)}$:} When both users are uniformly risk-seeking, they  discount the volatility of the upper information density tail. This forces a negative informational risk premium ($e_{0.2}(Q) < 0$), triggering a uniform inward contraction of the entire pentagonal boundary toward the origin. This represents an operational topology where users under-allocate transmission rate margins due to an optimistic bias toward rare, favorable noise states.
    
    \item {Asymmetric Risk-Seeking Decoupling $ {(\tau_1 = 0.2, \tau_2 = 0.5)}$ and ${(\tau_1 = 0.5, \tau_2 = 0.2)}$:} These configurations introduce directional geometric shearing into the network:
    \begin{itemize}
        \item For ${(0.2, 0.5)}$, User 1 is risk-seeking while User 2 remains risk-neutral. The pentagon skews horizontally; User 1's maximum rate boundary ($R_1$) pinches inward along the $x$-axis, whereas User 2's boundary ($R_2$) maintains its full classical height along the $y$-axis.
        \item For ${(0.5, 0.2)}$, the exact inverse occurs. The capacity landscape contracts vertically along the $y$-axis while retaining its uncompromised Shannon width along the $x$-axis.
    \end{itemize}
    
    \item {Extreme Asymmetric Disparity $ {(\tau_1 = 0.2, \tau_2 = 0.9)}$ and ${(\tau_1 = 0.9, \tau_2 = 0.2)}$:} These curves manifest the most severe geometric deformations due to diametrically opposed risk philosophies between the transmitters. 
    \begin{itemize}
        \item For ${(0.2, 0.9)}$, User 1 aggressively down-scales rate allocations, while User 2 operates under deep risk aversion, expanding their isolated vertical rate boundary. 
        \item Because the joint sum-rate capacity is governed by the structural infimum of both risk bounds ($\min\{I_{\tau_1}, I_{\tau_2}\}$), the top-right multi-user frontier undergoes a sharp, asymmetric tilt. The pentagon deforms into an irregular, highly skewed polygon.
    \end{itemize}
    
    \item  {Heterogeneous and Symmetrically Expanded Frontiers $ {(\tau_1 = 0.5, \tau_2 = 0.9)}$ and $ {(\tau_1 = 0.99, \tau_2 = 0.99)}$:} 
    \begin{itemize}
        \item For $ {(0.5, 0.9)}$, User 1 is anchored to the Shannon mean, but User 2's heightened risk-aversion expands the upper horizontal boundary of the region significantly upward.
        \item For $ {(0.99, 0.99)}$, both users operate near the absolute safety boundary, demanding robust guarantees against worst-case informational variance. Because they explicitly calculate and capture the massive, positive tail-volatility premium ($e_{0.99}(Q) \gg 0$) generated by the joint arrival codebooks, the boundary expands uniformly outward. This boundary forms the absolute maximum envelope of the figure, proving that evaluating a multi-access system through coherent, tail-dominant functionals unlocks a substantially larger robust operational rate profile.
    \end{itemize}
\end{enumerate}
\begin{figure}[htbp]
\centering
\includegraphics[width=0.48\textwidth]{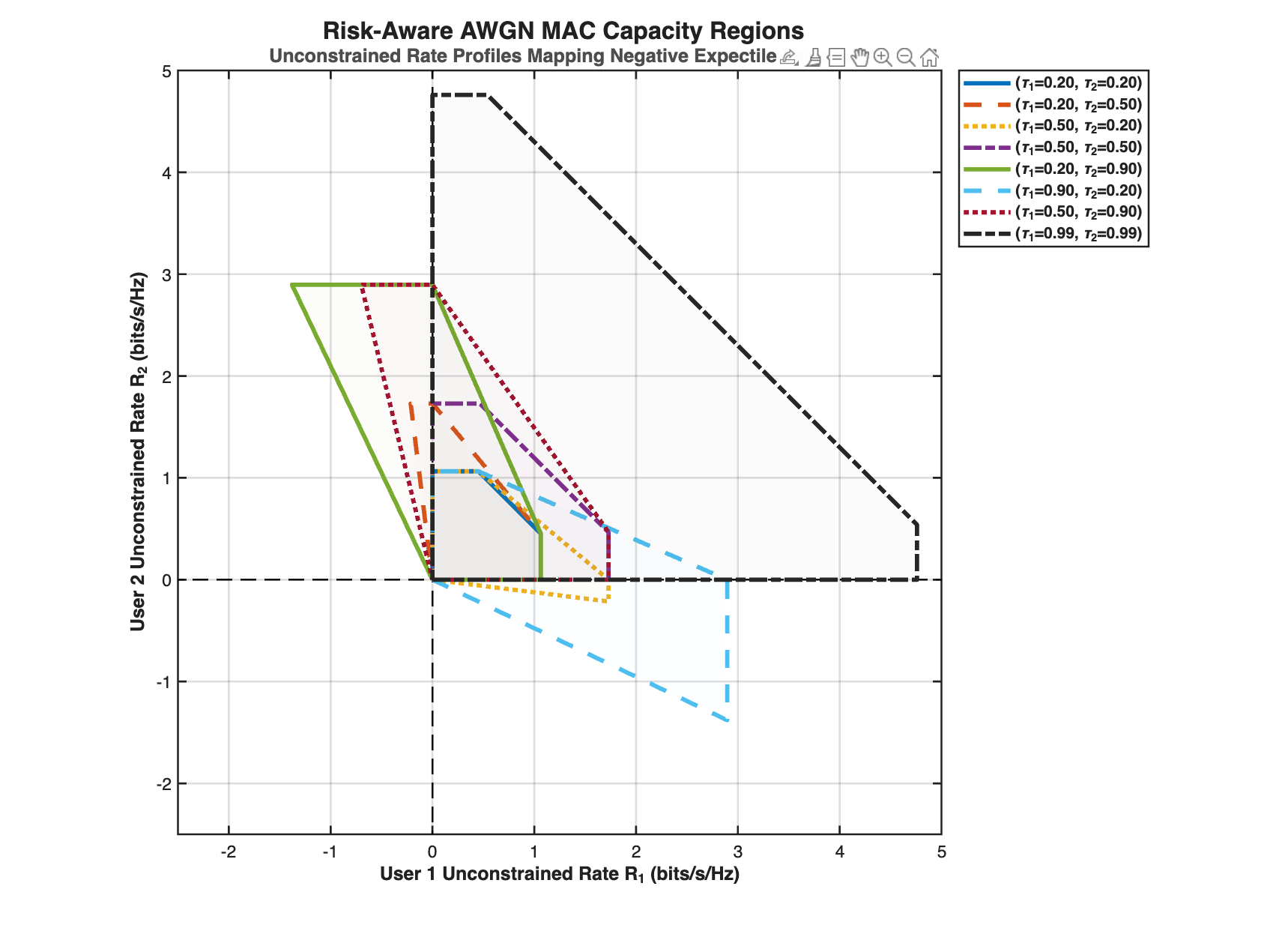}
\caption{The exact asymmetric risk-aware AWGN Multiple Access Channel (MAC) capacity region $\mathcal{C}_{\text{MAC},(\tau_1,\tau_2)}$ for independent individual user configurations over eight distinct risk-sensitivity profiles, evaluated at $P_1=10, P_2=10, \sigma^2=1$. The classical, balanced Shannon pentagon occurs at the intersection $ {(\tau_1=0.5, \tau_2=0.5)}$. Risk-seeking preferences ($\tau < 0.5$) compress the boundaries inward toward the origin, while risk-averse conditions ($\tau > 0.5$) scale the rate configurations outward. Complex directional shearing and geometric skewing emerge when users operate under asymmetric mismatched parameters (e.g., $ {(\tau_1=0.2, \tau_2=0.9)}$).}
\label{asymmetric_risk_aware_mac}
\end{figure}

The topological transformations verified in Fig.~\ref{asymmetric_risk_aware_mac} illustrate that the risk-sensitivity index $\tau$ behaves as a parametric vector field over the capacity space. Risk-seeking indices ($\tau < 0.5$) act as a compressive force pulling the rate boundaries toward zero, while risk-averse indices ($\tau > 0.5$) act as an expansive force pushing the boundaries outward, tracing the structural volatility footprint of the non-ergodic channel state.

\subsubsection{Why Shannon Intelligence Cannot Reach Superintelligence}
To  evaluate machine cognition under risk, we establish the core operational boundaries of Shannon Intelligence versus Superintelligence using environmental topologies. Let $\mathcal{X} \subset \mathbb{R}^d$ be the input space, $f_\theta: \mathcal{X} \to \mathcal{Y}$ be a learning machine parameterized by $\theta \in \Theta$, and $Y \in \mathcal{Y}$ be the real-world target state space. The operational error profile is mapped by a non-negative loss functional $L_\theta = \ell(f_\theta(X), Y)$ under the true joint environmental distribution $P_{X,Y}$.

\begin{definition}[Shannon Intelligence]
An artificial agent possesses \textbf{Shannon Intelligence} of degree $\epsilon > 0$ if its parameter configuration $\theta_{\text{SI}}$ minimizes the standard risk-neutral expected risk functional over the joint distribution $P_{X,Y}$:
\begin{equation}
\theta_{\text{SI}} \triangleq  \arg\min_{\theta \in \Theta} \int_{\mathcal{X} \times \mathcal{Y}} \ell(f_\theta(x), y) \, dP_{X,Y}(x,y),
\end{equation}
subject to the strict expectation-bound convergence condition: $\mathbb{E}_{P_{X,Y}} [L_{\theta_{\text{SI}}}] \le \epsilon$.
\end{definition}

\begin{definition}[Superintelligence]
An artificial agent possesses \textbf{Superintelligence} if there exists an adaptive parameter trajectory mapping function $\theta_{\text{Uni}}: (0,1) \to \Theta$ that achieves a global supremum over the risk-aware mutual information functional point-wise for every individual risk-sensitivity index $\tau \in (0,1)$:
\begin{equation}
I_\tau\big(X; Y \mid f_{\theta_{\text{Uni}}(\tau)}\big) = \sup_{\theta \in \Theta} I_\tau(X; Y \mid f_\theta), \quad \forall \tau \in (0,1).
\end{equation}
Equivalently, for every chosen risk profile $\tau$, the configuration $\theta_{\text{Uni}}(\tau)$ satisfies the point-wise functional dominance criteria:
\begin{equation}
 e_\tau \Big( \log \frac{p_{\theta_{\text{Uni}}(\tau)}(Y|X)}{p(Y)} \Big) \ge  e_\tau \Big( \log \frac{p_\theta(Y|X)}{p(Y)} \Big), \quad \forall \theta \in \Theta.
\end{equation}
\end{definition}

We now provide a measure-theoretic proof showing that an agent optimized exclusively under the rules of Shannon Intelligence ($\theta_{\text{SI}}$) is fundamentally blind to tail-risk volatility and cannot trace the optimal adaptive superintelligent trajectory $\theta_{\text{Uni}}(\tau)$.

\begin{theorem}[The  Failure of Shannon Trajectory Tracing]
An optimization framework operating under risk-neutral Shannon Intelligence ($\tau = 0.5$) is structurally static and cannot converge to or trace the optimal parameter trajectory $\theta_{\text{Uni}}(\tau)$ across the risk interval $(0,1)$.
\end{theorem}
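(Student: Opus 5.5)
Although the statement is informal, it has a precise core: a single $\tau$-independent parameter $\theta_{\text{SI}}$ cannot coincide with the maximizer trajectory $\theta_{\text{Uni}}(\tau)$ for every $\tau\in(0,1)$, for at least one environment. So the plan is to read the theorem as an existence claim. There exist a model class $\Theta$ and a law $P_{X,Y}$ such that $\tau\mapsto\theta_{\text{Uni}}(\tau)$ is non-constant. Since a Shannon-optimized agent outputs one fixed $\theta_{\text{SI}}$ (the $\tau=1/2$ optimizer, by Remark~\ref{prop:shannon}-type identification $I_{1/2}=I$), it then cannot equal $\theta_{\text{Uni}}(\tau)$ for all $\tau$. "Structurally static" is exactly the observation that the defining program for $\theta_{\text{SI}}$ contains no $\tau$.

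\textbf{Step 1 (reduction).} For each $\theta$ write $L_\theta=\log\bigl(p_\theta(Y|X)/p(Y)\bigr)$ and $g_\theta(\tau)=e_\tau(L_\theta)$. Suppose $\theta_{\text{SI}}$ traced the trajectory. Then $g_{\theta_{\text{SI}}}(\tau)\ge g_\theta(\tau)$ would hold for all $\theta$ and all $\tau$. So it suffices to exhibit two parameters $\theta_a,\theta_b$ whose expectile curves cross: $g_{\theta_a}(1/2)>g_{\theta_b}(1/2)$ but $g_{\theta_a}(\tau)<g_{\theta_b}(\tau)$ for some $\tau$. The Shannon choice is then strictly suboptimal at that $\tau$.

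\textbf{Step 2 (construction).} Take $\Theta=\{\theta_a,\theta_b\}$ and design the laws of $L_{\theta_a}$ and $L_{\theta_b}$ directly, as in the construction behind Proposition~\ref{prop:sign}. Let $L_{\theta_a}$ be nearly constant, with mean $m_a$ and tiny variance. Let $L_{\theta_b}$ have a slightly smaller mean $m_b<m_a$ but large spread, for instance a two-point law with a rare large positive value. Expectiles are continuous in $\tau$ and strictly increasing (Lemma~\ref{lem:monotonicity} and the ordering proposition). As $\tau\to1$, $e_\tau(L_{\theta_b})$ tends to $\operatorname{ess\,sup}L_{\theta_b}$, which exceeds $\operatorname{ess\,sup}L_{\theta_a}\approx m_a$. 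Hence the curves cross at some $\tau^\ast\in(1/2,1)$. Symmetrically, for $\tau\to0$ an analogous crossing can be arranged, giving failure on both sides of $1/2$. The two-point expectile is computed explicitly from the first-order condition of Lemma~\ref{lem:foc}: for $Z\in\{z_1<z_2\}$ with masses $1-p,p$, one gets $e_\tau=\bigl((1-\tau)(1-p)z_1+\tau p z_2\bigr)/\bigl((1-\tau)(1-p)+\tau p\bigr)$. This makes the crossing explicit.

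\textbf{Step 3 (realizability, the main obstacle).} The designed laws must actually arise as information densities $\log p_\theta(Y|X)/p(Y)$ of genuine conditional models on some discrete $(\mathcal X,\mathcal Y)$. The constraint is that the likelihood ratios must integrate consistently, i.e.\ $\mathbb E_{p(Y)}[p_\theta(Y|X)/p(Y)]=1$ in the appropriate sense. I would try binary $X,Y$ with a fixed joint law and two predictive kernels: a mildly informative symmetric channel for $\theta_a$, and a Z-channel-like kernel for $\theta_b$ that is rarely but sharply confident. I would then verify numerically-free, via the closed-form two-point formula, that the mean ordering and the tail ordering disagree. If exact two-point realizability is awkward, a three-point support suffices, since only the sign of the gap at $\tau=1/2$ and at $\tau$ near $1$ matters. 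Continuity in $\tau$ then gives the crossing.

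Finally, conclude the argument. $\theta_{\text{SI}}$ is a single point, while $\theta_{\text{Uni}}(1/2)=\theta_a$ and $\theta_{\text{Uni}}(\tau)=\theta_b$ for $\tau$ near $1$. Hence no $\tau$-independent optimizer traces the trajectory, and the expected-loss criterion of Shannon Intelligence, being a $\tau=1/2$ functional, cannot recover it.
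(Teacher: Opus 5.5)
Your proposal is correct. It shares its skeleton with the paper's proof: both use two candidate architectures whose ranking by the mean ($\tau=1/2$) disagrees with their ranking under a tail-sensitive expectile. Hence a single $\tau$-free Shannon choice cannot be optimal along the whole trajectory. The execution differs, and yours is the sound version.

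The paper argues in three steps. It uses a heuristic second-order expansion of $e_\tau(L_\theta)$ around $\mu_\theta$ in an upper-conditional variance. It asserts $e_\tau(L_{\theta_{\text{SI}}})\to\infty$ as $\tau\to1^-$. It then claims $\nabla_\theta e_{\tau_1}(L_\theta)\neq\mathbf{0}$ at $\theta_{\text{SI}}$ for every $\tau_1\neq1/2$.

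You instead use the exact limit $e_\tau(Z)\to\operatorname{ess\,sup}Z$ and the closed-form two-point expectile, and you claim failure only at some $\tau$. That is all the statement can support:
\begin{itemize}
\item The score equation gives exactly $e_\tau(Z)=\mathbb{E}[Z]+\frac{2\tau-1}{1-\tau}\,\mathbb{E}[(Z-e_\tau(Z))_+]$. This involves a first partial moment, not a variance.
\item For bounded information densities the limit is the finite essential supremum, not $+\infty$.
\item Whenever one $L_{\theta^\star}$ first-order stochastically dominates the others (trivially if $\Theta$ is a singleton), monotonicity makes $\theta^\star$ optimal for every $\tau$. So the paper's universal non-stationarity claim cannot hold in general, and your existence reading is the right one.
\end{itemize}
Your orientation also matches the Superintelligence criterion, which maximizes $e_\tau$ of the information density. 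The paper's pair is set up in loss terms, where a heavy upper tail is penalized. Under log-loss, and up to the $\theta$-free term $-\log p(Y)$, a heavy upper loss tail is a heavy lower information tail. That is what matters as $\tau\to0$, not as $\tau\to1$.

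Two small points. First, Lemma~\ref{lem:monotonicity} is monotonicity in the random variable, not in $\tau$. You need neither continuity nor monotonicity in $\tau$, only the comparisons at $\tau=1/2$ and near $\tau=1$.

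Second, Step~3 is no obstacle. Take $Y$ uniform on $\{0,1\}$ and independent of $X$, set $p_{\theta_a}(y\mid x)=p(y)$ so that $L_{\theta_a}\equiv0$, and set $p_{\theta_b}(1\mid x)=0.9$. Then $e_\tau(L_{\theta_b})=(1-\tau)\log0.2+\tau\log1.8$. This is negative at $\tau=1/2$ and positive for $\tau>\log5/\log9\approx0.73$. So $\theta_{\text{Uni}}(1/2)=\theta_a$, while $\theta_{\text{Uni}}(\tau)=\theta_b$ above that threshold.
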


\begin{proof}
By definition, the Shannon Intelligence framework computes a single, risk-invariant parameter vector optimized strictly at the mid-point of the expectile spectrum: $\theta_{\text{SI}} \equiv \theta_{\text{Uni}}(0.5) = \arg\max_{\theta \in \Theta} I_{0.5}(X; Y \mid f_\theta)$. Let $\mu_\theta = \mathbb{E}[L_\theta]$ represent this standard expected loss. 

By applying a second-order Taylor expansion to the implicit expectile optimality condition defining the Superintelligence functional $e_\tau(L_\theta)$ for any risk-averse threshold $\tau > 1/2$ around $\mu_\theta$, the objective maps as:
\begin{equation}
\begin{array}{l}
 e_\tau(L_\theta) \approx \mu_\theta \\
 + \left( \frac{\tau - 1/2}{1 - \tau} \right) \cdot \left[ \frac{\int_{\mu_\theta}^{\infty} (l - \mu_\theta)^2 p(l) \, dl}{\int_{\mu_\theta}^{\infty} p(l) \, dl} \right] \\
 + \mathcal{O}\left(\mathbb{E}\left[(L_\theta - \mu_\theta)^3\right]\right).
 \end{array}
\end{equation}
The bracketed term represents the upper-conditional variance metric, capturing the structural volatility of the tail interval past the mean.

Assume an adversarial optimization landscape containing two valid candidate architectures, $\theta_A$ and $\theta_B$, satisfying:
\begin{equation}
\mu_{\theta_B} < \mu_{\theta_A} \quad \text{(Superior Shannon Performance)},
\end{equation}
\begin{equation}
\frac{\int_{\mu_{\theta_B}}^{\infty} (l - \mu_{\theta_B})^2 p_B(l) \, dl}{\int_{\mu_{\theta_B}}^{\infty} p_B(l) \, dl} \;\gg\; \frac{\int_{\mu_{\theta_A}}^{\infty} (l - \mu_{\theta_A})^2 p_A(l) \, dl}{\int_{\mu_{\theta_A}}^{\infty} p_A(l) \, dl} > 0.
\end{equation}
Because the optimization trajectory of Shannon Intelligence isolates the risk-neutral expectation operator alone ($\mu_\theta$), it is  blind to the conditional tail integral. It selects the volatile system $\theta_{\text{SI}} = \theta_B$ and rejects the stable system $\theta_A$.

Now, let $\tau_1 \in (0.5, 1)$ be an arbitrary risk-sensitivity index. As the risk threshold approaches the absolute autonomous boundary ($\tau_1 \to 1^-$), the asymmetry coefficient explodes: $\lim_{\tau_1 \to 1^-} \left( \frac{\tau_1 - 1/2}{1 - \tau_1} \right) = \infty$. Because the upper-conditional variance of the Shannon choice ($\theta_B$) is non-zero, the limit of the risk functional diverges: $\lim_{\tau_1 \to 1^-} e_{\tau_1}(L_{\theta_{\text{SI}}}) = \infty$. 

By contrast, evaluating the true directional gradient of the functional shows that $\left. \nabla_\theta e_{\tau_1}\big(L_\theta\big) \right|_{\theta = \theta_{\text{SI}}} \neq \mathbf{0}$, meaning $\theta_{\text{SI}}$ cannot be a stationary point for any adjacent risk tier $\tau_1 \neq 0.5$. Thus, $I_{\tau_1}\big(X; Y \mid f_{\theta_{\text{SI}}}\big) < I_{\tau_1}\big(X; Y \mid f_{\theta_{\text{Uni}}(\tau_1)}\big)$, proving that Shannon Intelligence remains a localized point-optimizer, fundamentally incapable of tracing the adaptive, globally dominant envelope of Universal Superintelligence.
\end{proof}

%

\section{Strategic Expectile Information Theory} \label{sec:strategic_expectile_theory}

Classical multi-user information theory relies on linear expectation operators to characterize transmission regions, distributed source coding bounds, and network capacity envelopes. While sufficient for passive, memoryless physical layers under ergodic noise regimes, this paradigm breaks down in advanced networks populated by multiple autonomous \emph{Machine Intelligence (MI)} agents. These agents do not merely process data; they execute strategic actions, adaptively compress high-dimensional observation manifolds, optimize decentralized reward topologies, and operate under diverse safety or risk profiles.  When autonomous MI users interact over a shared informational medium, their decision-making architectures treat deviations from expected performance with non-linear, structural asymmetry. A safety-critical autonomous agent or a frontier model optimizing under a strict alignment guarantee displays deep risk aversion toward tail-risk failures. An aggressive exploration agent or a high-throughput distributed learner may operate in a risk-seeking regime to accelerate parameter discovery. 
We introduce a non-linear, multi-user framework: \emph{Strategic Expectile Information Theory}. By substituting classical Shannon measures with user-specific expectile functionals, we generalize the network topology from a single static capacity region to a parametric family of strategic communication manifolds.  Consider a network populated by $M$ distinct machine intelligence users, indexed by $\mathcal{M} = \{1, 2, \dots, M\}$. Each user $j \in \mathcal{M}$ is characterized by an intrinsic, independent risk-sensitivity or risk-awareness index $\tau_j \in (0,1)$. This parameter represents the agent's internal geometric utility model over informational fluctuations:
\begin{itemize}
    \item $\tau_j = 0.5$: A standard risk-neutral Shannon agent.
    \item $\tau_j > 0.5$: A risk-averse agent that penalizes unfavorable informational tail variations (e.g., unexpected data toxicity, distribution shifts, or catastrophic optimization drops).
    \item $\tau_j < 0.5$: A risk-seeking agent that dynamically prioritizes upside tail volatility (e.g., high-reward exploration states or opportunistic broad-bandwidth bursts).
\end{itemize}

Let $\mathcal{X}_j$ denote the localized action or signaling space of user $j$, and let $X_j \in \mathcal{X}_j$ be the random variable representing user $j$'s operational state or transmitted codebook. The joint configuration of all agents is denoted by the vector $\bm{X} = (X_1, \dots, X_M) \in \bm{\mathcal{X}} = \prod_{j=1}^M \mathcal{X}_j$. The environmental or channel output observed at a central receiver (or an integrating supervisor model) is modeled by the random variable $Y \in \mathcal{Y}$, governed by the transition probability density function $p_{Y|\bm{X}}(y \mid \bm{x})$.

Because expectiles are non-linear operators, the classical chain rules and additive conservation properties of information theory collapse (as shown in Theorem~\ref{thm:non_additivity}). Consequently, the informational value generated by a subset of users cannot be characterized by a single global expectation. Instead, we must define user-specific conditional and joint information metrics that capture how each agent individually perceives the shared environment.

\begin{definition}[Strategic Asymmetric Expectile Mutual Information]
For any machine intelligence user $j \in \mathcal{M}$ operating under an individual risk profile $\tau_j \in (0,1)$, the strategic expectile mutual information between its local codebook $X_j$ and the observed global system state $Y$, conditioned on the signaling actions of a competing or collaborating coalition of users $\mathcal{S} \subseteq \mathcal{M} \setminus \{j\}$, is defined as:
\begin{equation}
    I_{\tau_j}(X_j; Y \mid \bm{X}_{\mathcal{S}}) \triangleq e_{\tau_j}^{(P_{X_j \bm{X}_{\mathcal{S}} Y})} \left[ \log \frac{p_{Y|X_j, \bm{X}_{\mathcal{S}}}(Y \mid X_j, \bm{X}_{\mathcal{S}})}{p_{Y|\bm{X}_{\mathcal{S}}}(Y \mid \bm{X}_{\mathcal{S}})} \right],
\end{equation}
where $\bm{X}_{\mathcal{S}} = \{X_k\}_{k \in \mathcal{S}}$, and the statistical expectile operator $e_{\tau_j}^{(P)}$ is evaluated entirely under the joint law $P_{X_j \bm{X}_{\mathcal{S}} Y}$.
\end{definition}

This definition implies that every user evaluates the informational capacity of the system through its own risk-adjusted prism. When multiple agents simultaneously optimize their communication or learning parameters, their conflicting or aligned risk profiles give rise to a non-cooperative game over the informational topology.

We formalize the decentralized interaction among MI users as a strategic game in normal form. Each user $j \in \mathcal{M}$ maximizes its local risk-aware rate capability by choosing an optimal input distribution (or policy profile) $P_{X_j}$ from its compact feasible space $\mathcal{P}_j(P_j)$.

\begin{definition}[Expectile Information Rate Game] \label{expgame}
The Expectile Information Rate Game is defined by the strategic tuple:
\begin{equation}
    \mathcal{G} = \left( \mathcal{M}, \{\mathcal{P}_j(P_j)\}_{j \in \mathcal{M}}, \{U_j\}_{j \in \mathcal{M}} \right),
    \label{eq:game_tuple}
\end{equation}
where the utility function $U_j: \prod_{k=1}^M \mathcal{P}_k(P_k) \to \mathbb{R}$ for user $j$ is the localized asymmetric expectile mutual information assuming a specific decoding order or joint interference profile:
\begin{equation}
    U_j(P_{X_1}, \dots, P_{X_M}) = I_{\tau_j}(X_j; Y \mid \bm{X}_{-j}),
\end{equation}
where $\bm{X}_{-j} = \{X_k\}_{k \in \mathcal{M} \setminus \{j\}}$ denotes the action configurations of all agents excluding user $j$.
\end{definition}

\begin{lemma}
The Expectile Information Rate Game $\mathcal{G} $ provided in  Definition \ref{expgame}  is a Mean-Field-Type Game (MFTG).
\end{lemma}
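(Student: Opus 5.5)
The argument reduces the lemma to checking the defining feature of a mean-field-type game. In such a game, each player's payoff depends on the actions of the others and also on the \emph{probability law} of the state and actions, through a functional that is nonlinear in that law. It therefore cannot be written as the expectation of a pointwise payoff. We use this definition and verify it directly for the utilities $U_j$ of Definition~\ref{expgame}.

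\textbf{Step 1 (strategies are laws).} The strategy of user $j$ is a measure $P_{X_j}\in\mathcal{P}_j(P_j)$. The inputs are independent, so the profile $(P_{X_1},\dots,P_{X_M})$ together with the kernel $p_{Y|\bm{X}}$ determines the joint law
\[
\Pi := P_{X_1}\otimes\cdots\otimes P_{X_M}\otimes p_{Y|\bm{X}} .
\]
This is the population or mean-field term.

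\textbf{Step 2 (payoff as a law functional).} Write
\[
L_j = \log \frac{p_{Y|\bm{X}}}{p_{Y|\bm{X}_{-j}}}.
\]
The denominator is $p_{Y|\bm{X}_{-j}}(y|\bm{x}_{-j})=\int p_{Y|\bm{X}}(y|x_j,\bm{x}_{-j})\,dP_{X_j}(x_j)$, so the integrand $L_j$ already depends on the law $P_{X_j}$. The utility is then
\[
U_j = e_{\tau_j}^{(\Pi)}(L_j) = \mathcal{F}_j\big(\Pi\circ L_j^{-1}\big),
\]
a law-invariant functional of the distribution of $L_j$. This follows from the definition of the expectile via its first-order condition in Lemma~\ref{lem:foc}. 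Hence $U_j$ depends on the strategies only through $\Pi$, and law dependence enters at two levels: inside the integrand and through the expectile operator.

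\textbf{Step 3 (genuine nonlinearity).} We must rule out a classical game in disguise, that is, a payoff of the form $\int g_j\,d\Pi$ with $g_j$ independent of the law. For the operator part, the expectile with $\tau_j\neq 1/2$ is not linear in the measure. The plan is to take two laws $\Pi_0,\Pi_1$ and show $\lambda\mapsto e_\tau\big((1-\lambda)\Pi_0+\lambda\Pi_1\big)$ is not affine, using the mixture relation from Lemma~\ref{lem:foc} together with the strict monotonicity in Lemma~\ref{lem:monotonicity}. An explicit two-point example, like the non-additivity argument of Theorem~\ref{thm:non_additivity}, suffices. For the integrand part, the mixture $p_{Y|\bm{X}_{-j}}$ depends on $P_{X_j}$ through a logarithm, so $L_j$ varies with the law even when $\tau_j=1/2$. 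Either mechanism makes the game mean-field-type; together they cover all $\tau_j$.

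The main obstacle is Step 3. One must show that the nonlinear dependence on the law cannot be absorbed into a redefinition of the pure-strategy payoff. I would handle this with an explicit binary example: two mixtures of point masses for which the mixture expectile differs from the convex combination of the component expectiles whenever $\tau\neq1/2$. That example settles the claim rigorously.
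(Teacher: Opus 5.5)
Your Steps 1--2 make explicit what the paper's one-line proof (``it follows from the definition of MFTG'') leaves implicit. Strategies are laws, and $U_j$ is a law-invariant functional of the joint law $\Pi$, with the law also entering through $p_{Y|\bm{X}_{-j}}$ inside $L_j$. Under the definition the paper cites (payoffs depending on the distribution of states and actions), that is already the whole proof.

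The trouble is Step 3, which you call the main obstacle: the binary example does not establish what you need there. A two-point computation shows only that $e_\tau$ is not mixture-affine as a functional on laws on $\mathbb{R}$. Ruling out a ``classical game in disguise'' means showing that $U_j$ is not multilinear in $(P_{X_1},\dots,P_{X_M})$, and the first fact does not transfer to the second:
\begin{itemize}
\item If you mix player $j$'s own strategy, $P_{X_j}^\lambda=(1-\lambda)P^0+\lambda P^1$, then $\Pi_\lambda$ is affine in $\lambda$. But the integrand $L_j^\lambda=\log p_{Y|\bm{X}}-\log\big((1-\lambda)p^0_{Y|\bm{X}_{-j}}+\lambda p^1_{Y|\bm{X}_{-j}}\big)$ moves with $\lambda$, so the law of $L_j^\lambda$ is not the mixture of the two endpoint laws.
\item If you mix an opponent's strategy, $L_j$ is fixed and the law does mix, but the components may coincide. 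For the additive channel $Y=\sum_k X_k+N$, $L_j=\log p_N(N)-\log (P_{X_j}*p_N)(X_j+N)$ does not involve $\bm{X}_{-j}$. So $U_j$ does not depend on the opponents' strategies at all, and the operator mechanism is invisible.
\end{itemize}

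Moreover, ``genuine nonlinearity'' cannot hold for every instance of $\mathcal{G}$. If $Y$ is conditionally independent of $X_j$ given $\bm{X}_{-j}$, then $L_j\equiv 0$ and $U_j\equiv 0$. So Step 3 can only be an existence statement for a non-degenerate kernel, realized inside the game. Two witnesses suffice:
\begin{itemize}
\item Integrand mechanism (any $\tau$): take $Y=X_j$ and $P_{X_j}=\mathrm{Bern}(\lambda)$. Then $L_j=-\log P_{X_j}(X_j)$, so $U_j=0$ at $\lambda\in\{0,1\}$ and $U_j=\log 2$ at $\lambda=\tfrac12$. This is not affine in player $j$'s own strategy.
\item Operator mechanism ($\tau\neq\tfrac12$): take $Y=X_jX_k$ with $X_j$ uniform on $\{0,1\}$ and $X_k\sim\mathrm{Bern}(\lambda)$ independent. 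Then $L_j$ has law $(1-\lambda)\delta_0+\lambda\delta_{\log 2}$, and $U_j=\frac{\tau\lambda\log 2}{\tau\lambda+(1-\tau)(1-\lambda)}$. This is not affine in the opponent's strategy.
\end{itemize}
Either rest on the paper's broad definition and drop Step 3, or restate Step 3 as an existence claim backed by examples of this kind. The abstract two-point mixture alone does not settle it.
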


\begin{IEEEproof}
It follows from the definition of MFTG \cite{basar2026a,basar2026b,audioiamali,audioiamali2}.
\end{IEEEproof}

In this strategic setting, we analyze the stability of the network using the concept of an informational Nash equilibrium, adapted to accommodate heterogeneous risk profiles.

\begin{theorem}[Existence of Risk-Aware Informational Nash Equilibrium]
Consider the expectile information rate game $\mathcal{G}$ defined in \eqref{eq:game_tuple}. If the localized utility functionals $P_{X_j} \mapsto I_{\tau_j}(X_j; Y \mid \bm{X}_{-j})$ are weakly upper semicontinuous and quasi-concave over the compact input spaces $\mathcal{P}_j(P_j)$, then there exists at least one strategic risk-aware informational Nash equilibrium profile $\bm{P}_X^\star = (P_{X_1}^\star, \dots, P_{X_M}^\star)$ such that:
\begin{equation}
    U_j(P_{X_j}^\star, \bm{P}_{X_{-j}}^\star) \ge U_j(P_{X_j}, \bm{P}_{X_{-j}}^\star), \ \forall P_{X_j} \in \mathcal{P}_j(P_j), \ \forall j \in \mathcal{M}.
\end{equation}
\end{theorem}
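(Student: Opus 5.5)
The plan is to apply the Debreu--Glicksberg--Fan existence theorem for Nash equilibria in games with compact convex strategy sets, realized through Kakutani--Fan--Glicksberg applied to the best-response correspondence. The strategy space of user $j$ is $\mathcal{P}_j(P_j)$. It sits inside the locally convex topological vector space of finite signed Borel measures on $\mathbb{R}$, equipped with the weak topology induced by bounded continuous functions.

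First, I will check the structural hypotheses on each $\mathcal{P}_j(P_j)$. It is nonempty, since it contains $\delta_0$. It is convex, because the second-moment constraint $\int x^2\,dP_X\le P_j$ is linear in $P_X$. It is compact: Proposition~\ref{prop:compactness} proves this for $\mathcal{P}(P)$, and the same argument applies verbatim with $P$ replaced by $P_j$. The product $\prod_j\mathcal{P}_j(P_j)$ is then nonempty, convex and compact by Tychonoff.

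Second, I define for each $j$ the best-response correspondence
\[
B_j(\bm{P}_{X_{-j}})=\arg\max_{P_{X_j}\in\mathcal{P}_j(P_j)} U_j(P_{X_j},\bm{P}_{X_{-j}}).
\]
Nonemptiness follows from the assumed upper semicontinuity in $P_{X_j}$ together with compactness, exactly as in the proof of Proposition~\ref{prop:existence} (Weierstrass). Convexity of $B_j$ follows from quasi-concavity: an upper level set $\{U_j\ge \max U_j\}$ of a quasi-concave function is convex. I then apply Kakutani--Fan--Glicksberg to $B=\prod_j B_j$, and a fixed point of $B$ is precisely the asserted equilibrium inequality.

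The main obstacle is the closed graph of $B$. Kakutani needs joint continuity of $U_j$ in the whole profile, or at least upper semicontinuity of $U_j$ jointly together with lower semicontinuity in $\bm{P}_{X_{-j}}$, via Berge's maximum theorem. The statement only gives upper semicontinuity in the player's own variable. I will therefore read the hypothesis as \emph{joint} weak upper semicontinuity, in the spirit of the standing assumption of Proposition~\ref{prop:existence}. If that reading is too strong, my first fallback is to avoid Berge and use Glicksberg's version, or Reny's better-reply-secure framework. That route needs only upper semicontinuity of $\sum_j U_j$ plus payoff security. Payoff security I would try to obtain from continuity of the expectile in $L^2$: for fixed $\tau$, $|e_\tau(Z)-e_\tau(Z')|\le C_\tau\|Z-Z'\|_{L^2}$ with $C_\tau=\max(\tau,1-\tau)/\min(\tau,1-\tau)$. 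This lets perturbations of the opponents' laws be controlled through the information density in the AWGN setting.

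If neither route closes in full generality, I will state the theorem under the joint-continuity interpretation and note that the remaining conditions are exactly the quasi-concavity and upper semicontinuity listed in the statement.
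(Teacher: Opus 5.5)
Your skeleton matches the paper's proof: compactness of $\mathcal{P}_j(P_j)$ via Proposition~\ref{prop:compactness}, convexity from linearity of the power constraint, then the Debreu--Glicksberg--Fan theorem. The paper applies that theorem with only upper semicontinuity and quasi-concavity. This misquotes it, since the theorem requires $U_j$ to be continuous on the whole profile space. You are right that the closed graph of the best-reply correspondence is exactly what the stated hypotheses, which concern the own variable $P_{X_j}$ only, do not give.

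\textbf{Your patch does not close the gap.} Reading the hypothesis as \emph{joint} weak upper semicontinuity is not enough, as your own Berge remark indicates. Passing to the limit in $U_j(P^n_{X_j},\bm{P}^n_{X_{-j}})\ge U_j(Q,\bm{P}^n_{X_{-j}})$ needs $\liminf_n U_j(Q,\bm{P}^n_{X_{-j}})\ge U_j(Q,\bm{P}_{X_{-j}})$ for each fixed $Q$, i.e.\ lower semicontinuity in the opponents' laws. Without it, equilibria can fail to exist. On $[0,1]^2$ let $u_1(x_1,x_2)=-(x_1-1)^2$ for $x_2<\tfrac12$, $u_1(x_1,x_2)=2-x_1^2$ for $x_2\ge\tfrac12$, and $u_2=-(x_2-x_1)^2$. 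Both payoffs are jointly u.s.c.\ and concave in the own variable. Yet $B_1(x_2)=\{1\}$ for $x_2<\tfrac12$, $B_1(x_2)=\{0\}$ for $x_2\ge\tfrac12$, and $B_2(x_1)=\{x_1\}$ have no common fixed point.

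The Reny fallback has the same hole. Payoff security is a lower-semicontinuity-type requirement in $\bm{P}_{X_{-j}}$, and you neither assume nor derive it. Your Lipschitz bound is correct; it even holds in $L^1$ with the same constant. But it compares random variables on one probability space. Weak convergence of the opponents' laws under a bare second-moment bound gives no such control of the information densities without a coupling and extra regularity of $p_{Y\mid\bm{X}}$.

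\textbf{What would close it is a structural observation you did not use.} With independent inputs, $p_{Y\mid\bm{X}_{-j}}(y\mid\bm{x}_{-j})=\int p_{Y\mid\bm{X}}(y\mid x_j,\bm{x}_{-j})\,dP_{X_j}(x_j)$ involves $P_{X_j}$ only. So the opponents enter $U_j$ solely as the mixing law in the score equation $\int s_\theta\,d\bm{P}_{X_{-j}}=0$, where $s_\theta(\bm{x}_{-j})=\mathbb{E}[\rho_{\tau_j}(L_j-\theta)\mid\bm{X}_{-j}=\bm{x}_{-j}]$ and $L_j$ is the conditional information density.
Suppose $s_\theta$ is continuous with $o(|\bm{x}_{-j}|^2)$ growth. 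Then the uniform second-moment bound makes $\int s_\theta\,d\bm{P}_{X_{-j}}$ weakly continuous in $\bm{P}_{X_{-j}}$. The score decreases in $\theta$ with slope at most $-\min(\tau_j,1-\tau_j)$, so its root $U_j$ is continuous in $\bm{P}_{X_{-j}}$. Together with joint upper semicontinuity, this gives the closed graph, and your Kakutani--Fan--Glicksberg argument then goes through.

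In the paper's scalar AWGN MAC $Y=\sum_k X_k+N$ the issue disappears entirely. There $L_j=\log p_N(N)-\log p_{X_j+N}(X_j+N)$, with $p_N$ the $\mathcal{N}(0,\sigma^2)$ density, does not involve $\bm{X}_{-j}$ at all. The game therefore decouples, and any profile of individual maximizers from Proposition~\ref{prop:existence} is an equilibrium. As stated, though, with semicontinuity only in the own strategy, neither your argument nor the paper's establishes the theorem.
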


\begin{IEEEproof}
By Proposition \ref{prop:compactness}, each localized input space $\mathcal{P}_j(P_j)$ is a non-empty, weakly compact subset of the space of Borel probability measures $\mathcal{M}(\mathbb{R})$ under the weak topology. Furthermore, the spaces $\mathcal{P}_j(P_j)$ are convex by the linearity of the underlying power constraints $\int x_j^2 dP_{X_j} \le P_j$. 

By Glicksberg's graduate extension of the Fan-Debreu theorem for topological vector spaces, any strategic game with non-empty, compact, convex strategy sets and upper semicontinuous, quasi-concave utility mappings possesses at least one pure-strategy Nash equilibrium. The explicit validity holds directly under the stated topological conditions on the expectile functional $\Phi(P_{X_j}) = I_{\tau_j}(P_{X_j})$, completing the proof.
\end{IEEEproof}

\subsubsection{Strategic $M$-User MIMO Expectile Information Theory}

To generalize the risk-aware informational framework to multi-dimensional, decentralized network topologies, we formalize the capacity region of an $M$-user Multiple-Input Multiple-Output (MIMO) Multiple Access Channel (MAC) under strategic expectile constraints. Consider a network populated by $M$ independent, non-cooperative machine intelligence agents indexed by the finite set $\mathcal{M} = \{1, 2, \dots, M\}$. Each user $j \in \mathcal{M}$ is characterized by a private, static risk-sensitivity index $\tau_j \in (0,1)$, which parameterizes their internal asymmetric geometric loss evaluation over non-ergodic environmental variations. 

The spatial signal vector $\bm{Y} \in \mathbb{C}^{N_R \times 1}$ captured at the central receiver array is governed by the simultaneous linear superposition model:
\begin{equation}
    \bm{Y} = \sum_{j=1}^M \bm{H}_j \bm{X}_j + \bm{N},
\end{equation}
where:
\begin{itemize}
    \item $\bm{X}_j \in \mathbb{C}^{N_{T,j} \times 1}$ is the stochastically independent transmitted signal vector from user $j$, drawn from a zero-mean complex circularly symmetric Gaussian codebook conditioned on the input spatial covariance matrix $\bm{Q}_j \triangleq \mathbb{E}[\bm{X}_j \bm{X}_j^\mathcal{H}] \succeq \bm{0}$.
    \item $N_{T,j}$ and $N_R$ denote the number of localized transmit antennas at user $j$ and the number of centralized receive antennas at the base supervisor station, respectively.
    \item $\bm{H}_j \in \mathbb{C}^{N_R \times N_{T,j}}$ is the deterministic, quasi-static complex spatial channel transition matrix mapping the physical propagation layer of user $j$.
    \item $\bm{N} \in \mathbb{C}^{N_R \times 1}$ is a complex circularly symmetric white Gaussian noise vector distributed according to the law $\bm{N} \sim \mathcal{C}\mathcal{N}(\bm{0}, \sigma^2 \bm{I}_{N_R})$, where $\sigma^2$ is the nominal thermal noise power per antenna element.
\end{itemize}
Each machine intelligence agent's optimization domain is bounded by an independent, localized trace power constraint defining the admissible covariance space:
\begin{equation}
    \mathcal{P}_j(P_j)
     \triangleq
     \left\{  \begin{array}{l}
     \bm{Q}_j \in \mathbb{C}^{N_{T,j} \times N_{T,j}} :  \\
     \bm{Q}_j = \bm{Q}_j^\mathcal{H}, \, \bm{Q}_j \succeq \bm{0}, \, \text{Tr}(\bm{Q}_j) \le P_j
     \end{array}
      \right\},
\end{equation}
for all $j \in \mathcal{M}$.

\subsubsection{Coupled Multi-User Information Density Fluctuation Operators}

Let $\mathcal{S} \subseteq \mathcal{M}$ represent an arbitrary decoding sub-coalition of users, and let $\mathcal{S}^c = \mathcal{M} \setminus \mathcal{S}$ define its structural complement within the network. When the central receiver implements Successive Interference Cancellation (SIC), decoding the vector waveforms of users within $\mathcal{S}$ after perfectly stripping out or ignoring the signal fields generated by $\mathcal{S}^c$, the background interference-plus-noise covariance block is mapped as:
\begin{equation}
    \bm{\Sigma}_{\mathcal{S}^c} \triangleq \sigma^2 \bm{I}_{N_R} + \sum_{k \in \mathcal{S}^c} \bm{H}_k \bm{Q}_k \bm{H}_k^\mathcal{H}.
\end{equation}
The global received covariance matrix incorporating both active sub-coalition codebooks and the residual background spatial interference loops is defined as:
\begin{equation}
    \bm{\Sigma}_{\mathcal{M}} \triangleq \bm{\Sigma}_{\mathcal{S}^c} + \sum_{j \in \mathcal{S}} \bm{H}_j \bm{Q}_j \bm{H}_j^\mathcal{H}.
\end{equation}

By evaluating the Radon-Nikodym derivative of the joint conditional distribution against the product of its marginal distributions over the complex vector space, the unconstrained multi-user $M$-MIMO information density fluctuation variable $L_{\mathcal{S}}(\bm{X}_{\mathcal{S}}, \bm{Y} \mid \bm{X}_{\mathcal{S}^c})$ decomposes explicitly into:
\begin{equation}
    L_{\mathcal{S}}(\bm{X}_{\mathcal{S}}, \bm{Y} \mid \bm{X}_{\mathcal{S}^c}) = \log \frac{\det(\bm{\Sigma}_{\mathcal{M}})}{\det(\bm{\Sigma}_{\mathcal{S}^c})} 
    + \mathcal{V}_{\text{MIMO}, \mathcal{S}}\left(\{\bm{Q}_m\}_{m=1}^M, \{\bm{H}_m\}_{m=1}^M\right),
\end{equation}
where the leading scalar log-determinant matches the classical Shannon joint multi-user MIMO mutual information envelope. The coupled multi-dimensional spatial risk operator $\mathcal{V}_{\text{MIMO}, \mathcal{S}}$ is a strictly zero-mean random variable ($\mathbb{E}[\mathcal{V}_{\text{MIMO}, \mathcal{S}}] = 0$) governed by the underlying non-zero eigenvalue spectrum of the active transmission blocks:
\begin{equation}
    \mathcal{V}_{\text{MIMO}, \mathcal{S}} = \sum_{i=1}^{\text{rank}(\bm{\Sigma}_{\mathcal{M}})} \Big[ (\xi_i - 1)\left(|Z_i|^2 - 1\right) 
    + 2\sqrt{\xi_i - 1}\,\,\text{Re}\left(Z_i^* N_{0,i}\right) \Big],
\end{equation}
where $Z_i, N_{0,i} \stackrel{\text{i.i.d.}}{\sim} \mathcal{C}\mathcal{N}(0, 1)$ represent independent whitened signal and noise coordinates respectively, and $\xi_i \ge 1$ are the generalized eigenvalues solving the characteristic equation $\det\left( \bm{\Sigma}_{\mathcal{M}} - \xi_i \bm{\Sigma}_{\mathcal{S}^c} \right) = 0$.

\begin{lemma}
Let the conditional background interference-plus-noise covariance matrix $\bm{\Sigma}_{\mathcal{S}^c}$ and the global coalition covariance matrix $\bm{\Sigma}_{\mathcal{M}}$ be defined as given above. The conditional probability density functions on the complex vector space $\mathbb{C}^{N_R \times 1}$ are given rigorously by:
\begin{align}
    &p_{\bm{Y} \mid \bm{X}_{\mathcal{S}}, \bm{X}_{\mathcal{S}^c}}(\bm{y} \mid \bm{x}_{\mathcal{S}}, \bm{x}_{\mathcal{S}^c}) = \frac{1}{\pi^{N_R} \det(\sigma^2 \bm{I}_{N_R})} 
    \times \exp\left( -\frac{1}{\sigma^2} \left\| \bm{y} - \sum_{j \in \mathcal{S}} \bm{H}_j \bm{x}_j - \sum_{k \in \mathcal{S}^c} \bm{H}_k \bm{x}_k \right\|^2 \right), \\
    &p_{\bm{Y} \mid \bm{X}_{\mathcal{S}^c}}(\bm{y} \mid \bm{x}_{\mathcal{S}^c}) = \frac{1}{\pi^{N_R} \det(\bm{\Sigma}_{\mathcal{S}^c})}
    \times \exp\left( -\left( \bm{y} - \sum_{k \in \mathcal{S}^c} \bm{H}_k \bm{x}_k \right)^\mathcal{H} \bm{\Sigma}_{\mathcal{S}^c}^{-1} \left( \bm{y} - \sum_{k \in \mathcal{S}^c} \bm{H}_k \bm{x}_k \right) \right).
\end{align}
\end{lemma}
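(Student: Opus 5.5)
The plan is to derive both densities directly from the linear superposition model $\bm{Y} = \sum_{j=1}^M \bm{H}_j \bm{X}_j + \bm{N}$, using two standard facts. First, an affine image of a circularly symmetric complex Gaussian vector is again circularly symmetric complex Gaussian. Second, the density of $\mathcal{CN}(\bm{m},\bm{\Sigma})$ on $\mathbb{C}^{N_R}$ with $\bm{\Sigma}\succ\bm{0}$ is
\[
\frac{1}{\pi^{N_R}\det(\bm{\Sigma})}\exp\bigl(-(\bm{y}-\bm{m})^{\mathcal{H}}\bm{\Sigma}^{-1}(\bm{y}-\bm{m})\bigr).
\]
Both densities are with respect to Lebesgue measure on $\mathbb{C}^{N_R}\cong\mathbb{R}^{2N_R}$. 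The whole proof reduces to identifying the conditional mean vector and covariance matrix in each case.

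\emph{Full conditioning on all users.} Conditioning on $(\bm{X}_{\mathcal{S}},\bm{X}_{\mathcal{S}^c})=(\bm{x}_{\mathcal{S}},\bm{x}_{\mathcal{S}^c})$, independence of $\bm{N}$ from all inputs gives
\[
\bm{Y} = \sum_{j\in\mathcal{S}}\bm{H}_j\bm{x}_j+\sum_{k\in\mathcal{S}^c}\bm{H}_k\bm{x}_k+\bm{N},
\]
which is a deterministic shift of $\bm{N}\sim\mathcal{CN}(\bm{0},\sigma^2\bm{I}_{N_R})$. Substituting $\bm{\Sigma}=\sigma^2\bm{I}_{N_R}$ gives $\bm{\Sigma}^{-1}=\sigma^{-2}\bm{I}$, so the quadratic form collapses to $\sigma^{-2}\|\cdot\|^2$. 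This yields the first display, with $\det(\sigma^2\bm{I}_{N_R})$ left unsimplified as in the statement.

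\emph{Conditioning on $\bm{X}_{\mathcal{S}^c}$ only.} Conditioning on $\bm{X}_{\mathcal{S}^c}=\bm{x}_{\mathcal{S}^c}$, write
\[
\bm{Y}-\sum_{k\in\mathcal{S}^c}\bm{H}_k\bm{x}_k=\sum_{j\in\mathcal{S}}\bm{H}_j\bm{X}_j+\bm{N}.
\]
The $\bm{X}_j$ ($j\in\mathcal{S}$) are mutually independent, independent of $\bm{X}_{\mathcal{S}^c}$ and of $\bm{N}$, and each is $\mathcal{CN}(\bm{0},\bm{Q}_j)$ from the Gaussian codebook assumption. The right-hand side is therefore a sum of independent zero-mean circularly symmetric Gaussians, hence $\mathcal{CN}(\bm{0},\sigma^2\bm{I}+\sum_{j\in\mathcal{S}}\bm{H}_j\bm{Q}_j\bm{H}_j^{\mathcal{H}})$. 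Substituting this into the Gaussian density formula gives the second display.

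\emph{Main obstacle: the covariance label.} Once the mean and covariance are identified, the density formula is routine. The real work is checking that the covariance that actually arises matches the matrix written in the statement. Conditioning on $\bm{X}_{\mathcal{S}^c}$ removes the users in $\mathcal{S}^c$ and leaves those in $\mathcal{S}$ as Gaussian interference. The paper, however, defines $\bm{\Sigma}_{\mathcal{S}^c}$ as noise plus the contributions of $\mathcal{S}^c$. I would reconcile this by reading the lemma as requiring the covariance of the \emph{unconditioned} group, that is, the interference-plus-noise matrix for the set not conditioned on. Under that reading the formula holds with the index convention swapped. If the literal definition is kept, I would flag the mismatch and state the corrected matrix explicitly.

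Two further points need checking along the way. Positive definiteness of the covariance holds because $\sigma^2>0$, so the inverse and the density exist. Independence of the codebooks across users is what lets the conditional law be computed by simply dropping the conditioned terms into the mean.
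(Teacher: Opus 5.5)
Your two computations are right. The mismatch you flag in the second display is not a labeling convention that can be read away, however. The lemma explicitly fixes $\bm{\Sigma}_{\mathcal{S}^c}=\sigma^2\bm{I}_{N_R}+\sum_{k\in\mathcal{S}^c}\bm{H}_k\bm{Q}_k\bm{H}_k^{\mathcal{H}}$. With that matrix, the second formula holds only when $\sum_{k\in\mathcal{S}^c}\bm{H}_k\bm{Q}_k\bm{H}_k^{\mathcal{H}}=\sum_{j\in\mathcal{S}}\bm{H}_j\bm{Q}_j\bm{H}_j^{\mathcal{H}}$, so in general no proof of the statement as written exists.

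A one-line counterexample: take $\mathcal{S}=\mathcal{M}$, so $\mathcal{S}^c=\emptyset$ and $\bm{\Sigma}_{\mathcal{S}^c}=\sigma^2\bm{I}_{N_R}$. The display then asserts $\bm{Y}\sim\mathcal{CN}(\bm{0},\sigma^2\bm{I}_{N_R})$, whereas in fact $\bm{Y}\sim\mathcal{CN}(\bm{0},\bm{\Sigma}_{\mathcal{M}})$. These differ as soon as some $\bm{H}_j\bm{Q}_j\bm{H}_j^{\mathcal{H}}\neq\bm{0}$.

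The paper gives no proof of this lemma; it is asserted and substituted directly into the next proposition, so there is no argument to compare yours with. You should drop the hedging and state that the second display fails, giving such a counterexample. Then prove the corrected lemma, with covariance $\sigma^2\bm{I}_{N_R}+\sum_{j\in\mathcal{S}}\bm{H}_j\bm{Q}_j\bm{H}_j^{\mathcal{H}}$, exactly by the argument you outline. The first display is correct as stated, and your derivation of it is complete.

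One caution about which repair you adopt. Yours keeps the mean $\sum_{k\in\mathcal{S}^c}\bm{H}_k\bm{x}_k$ and changes the covariance. It gives the true $p_{\bm{Y}\mid\bm{X}_{\mathcal{S}^c}}$. The resulting log-likelihood ratio $\log\bigl(p_{\bm{Y}\mid\bm{X}_{\mathcal{S}},\bm{X}_{\mathcal{S}^c}}/p_{\bm{Y}\mid\bm{X}_{\mathcal{S}^c}}\bigr)$ then has mean $\log\det\bigl(\bm{I}_{N_R}+\sigma^{-2}\sum_{j\in\mathcal{S}}\bm{H}_j\bm{Q}_j\bm{H}_j^{\mathcal{H}}\bigr)$. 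Its fluctuation is governed by the pencil $\bigl(\sigma^2\bm{I}_{N_R}+\sum_{j\in\mathcal{S}}\bm{H}_j\bm{Q}_j\bm{H}_j^{\mathcal{H}},\,\sigma^2\bm{I}_{N_R}\bigr)$.

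The paper's next proposition instead asserts the mean $\log\bigl(\det\bm{\Sigma}_{\mathcal{M}}/\det\bm{\Sigma}_{\mathcal{S}^c}\bigr)$ and the pencil $(\bm{\Sigma}_{\mathcal{M}},\bm{\Sigma}_{\mathcal{S}^c})$. These belong to the other natural repair: keep $\bm{\Sigma}_{\mathcal{S}^c}$ but condition on $\bm{X}_{\mathcal{S}}$, treating $\mathcal{S}^c$ as noise. That is, $\bm{Y}\mid\bm{X}_{\mathcal{S}}=\bm{x}_{\mathcal{S}}\sim\mathcal{CN}\bigl(\sum_{j\in\mathcal{S}}\bm{H}_j\bm{x}_j,\bm{\Sigma}_{\mathcal{S}^c}\bigr)$, compared against $\bm{Y}\sim\mathcal{CN}(\bm{0},\bm{\Sigma}_{\mathcal{M}})$.

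Both are true lemmas, but they are different statements: ``stripping out'' versus ``ignoring'' $\mathcal{S}^c$, which the paper's SIC description conflates. If you state a corrected version, say which one you mean and note that only the second matches the formulas used downstream.
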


\begin{proposition}
The multi-user conditionally constrained information density random variable, defined via the log-likelihood ratio 
$L_{\mathcal{S}}(\bm{X}_{\mathcal{S}}, \bm{Y} \mid \bm{X}_{\mathcal{S}^c}) 
= \log \frac{p_{\bm{Y} \mid \bm{X}_{\mathcal{S}}, \bm{X}_{\mathcal{S}^c}}(\bm{Y} \mid \bm{X}_{\mathcal{S}}, \bm{X}_{\mathcal{S}^c})}{p_{\bm{Y} \mid \bm{X}_{\mathcal{S}^c}}(\bm{Y} \mid \bm{X}_{\mathcal{S}^c})}$, resolves explicitly to:
\begin{equation}
 \begin{array}{l}
    L_{\mathcal{S}}(\bm{X}_{\mathcal{S}}, \bm{Y} \mid \bm{X}_{\mathcal{S}^c}) 
    \\ = \log \det\left( \bm{I}_{N_R} + \bm{\Sigma}_{\mathcal{S}^c}^{-1} \sum_{m \in \mathcal{S}} \bm{H}_m \bm{Q}_m \bm{H}_m^\mathcal{H} \right) \\
    + \mathcal{V}_{\text{MIMO}, \mathcal{S}},
     \end{array}
\end{equation}
where $\mathcal{V}_{\text{MIMO}, \mathcal{S}}$ represents the zero-mean multi-dimensional spatial fluctuation operator defined in Equation (7).
\end{proposition}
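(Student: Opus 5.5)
The plan is to compute the log-likelihood ratio directly from the two Gaussian densities, then diagonalize the resulting quadratic form with a generalized eigenbasis. We first fix a consistent reading of the conditioning. The leading term in the statement is
\[
\log\det\bigl(\bm{I}_{N_R}+\bm{\Sigma}_{\mathcal{S}^c}^{-1}\textstyle\sum_{m\in\mathcal{S}}\bm{H}_m\bm{Q}_m\bm{H}_m^{\mathcal{H}}\bigr)=\log\frac{\det\bm{\Sigma}_{\mathcal{M}}}{\det\bm{\Sigma}_{\mathcal{S}^c}} .
\]
This is the rate for decoding $\mathcal{S}$ while treating the $\mathcal{S}^c$ signals as Gaussian noise of covariance $\bm{\Sigma}_{\mathcal{S}^c}$. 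We therefore work with the following identifications:
\begin{itemize}
\item The numerator density, given $\bm{X}_{\mathcal{S}}$, is Gaussian with mean $\sum_{j\in\mathcal{S}}\bm{H}_j\bm{X}_j$ and covariance $\bm{\Sigma}_{\mathcal{S}^c}$.
\item The denominator density is Gaussian with mean $\bm{0}$ and covariance $\bm{\Sigma}_{\mathcal{M}}$.
\end{itemize}
This is the reading under which the densities of the preceding Lemma, after absorbing the $\mathcal{S}^c$ terms into the effective noise, produce exactly this determinant ratio. Write $\bm{W}:=\bm{Y}-\sum_{j\in\mathcal{S}}\bm{H}_j\bm{X}_j\sim\mathcal{CN}(\bm{0},\bm{\Sigma}_{\mathcal{S}^c})$, independent of $\bm{X}_{\mathcal{S}}$. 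Taking logarithms gives
\[
L_{\mathcal{S}}=\log\frac{\det\bm{\Sigma}_{\mathcal{M}}}{\det\bm{\Sigma}_{\mathcal{S}^c}}-\bm{W}^{\mathcal{H}}\bm{\Sigma}_{\mathcal{S}^c}^{-1}\bm{W}+\bm{Y}^{\mathcal{H}}\bm{\Sigma}_{\mathcal{M}}^{-1}\bm{Y}.
\]
The determinant identity then follows from $\bm{\Sigma}_{\mathcal{M}}=\bm{\Sigma}_{\mathcal{S}^c}(\bm{I}+\bm{\Sigma}_{\mathcal{S}^c}^{-1}\sum_{m\in\mathcal{S}}\bm{H}_m\bm{Q}_m\bm{H}_m^{\mathcal{H}})$. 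This settles the deterministic part, and it remains to identify the random remainder as $\mathcal{V}_{\text{MIMO},\mathcal{S}}$.

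For the remainder we simultaneously diagonalize the two covariances. Since $\bm{\Sigma}_{\mathcal{S}^c}\succ\bm{0}$, there is an invertible $\bm{T}$ with
\[
\bm{T}^{\mathcal{H}}\bm{\Sigma}_{\mathcal{S}^c}\bm{T}=\bm{I},\qquad \bm{T}^{\mathcal{H}}\bm{\Sigma}_{\mathcal{M}}\bm{T}=\operatorname{diag}(\xi_i).
\]
The $\xi_i$ are exactly the generalized eigenvalues of $\det(\bm{\Sigma}_{\mathcal{M}}-\xi\bm{\Sigma}_{\mathcal{S}^c})=0$, and $\xi_i\ge1$ because $\bm{\Sigma}_{\mathcal{M}}\succeq\bm{\Sigma}_{\mathcal{S}^c}$.

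In these whitened coordinates the noise part becomes $\bm{N}_0:=\bm{T}^{\mathcal{H}}\bm{W}\sim\mathcal{CN}(\bm{0},\bm{I})$. The signal part $\bm{T}^{\mathcal{H}}\sum_{j\in\mathcal{S}}\bm{H}_j\bm{X}_j$ has covariance $\operatorname{diag}(\xi_i-1)$, so it can be written as $(\sqrt{\xi_i-1}\,Z_i)_i$ with $Z_i$ i.i.d.\ $\mathcal{CN}(0,1)$, independent of $\bm{N}_0$, using Gaussianity of the codebooks. The remainder then splits coordinatewise:
\[
L_{\mathcal{S}}-\log\frac{\det\bm{\Sigma}_{\mathcal{M}}}{\det\bm{\Sigma}_{\mathcal{S}^c}}=\sum_i\Bigl[\frac{|\sqrt{\xi_i-1}\,Z_i+N_{0,i}|^2}{\xi_i}-|N_{0,i}|^2\Bigr].
\]
Each summand has expectation $\xi_i/\xi_i-1=0$, which gives $\mathbb{E}[\mathcal{V}_{\text{MIMO},\mathcal{S}}]=0$ at once. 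Coordinates with $\xi_i=1$ contribute nothing, so the sum runs effectively over the non-trivial part of the spectrum.

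The main obstacle is matching this summand to the exact displayed form $(\xi_i-1)(|Z_i|^2-1)+2\sqrt{\xi_i-1}\,\mathrm{Re}(Z_i^*N_{0,i})$. Expanding, the summand equals
\[
\frac{1}{\xi_i}\Bigl[(\xi_i-1)|Z_i|^2+2\sqrt{\xi_i-1}\,\mathrm{Re}(Z_i^*N_{0,i})-(\xi_i-1)|N_{0,i}|^2\Bigr],
\]
a zero-mean indefinite Hermitian form in $(Z_i,N_{0,i})$. First I would try a unitary rotation of the pair $(Z_i,N_{0,i})$, which preserves their joint law, to bring this form to the displayed coefficients and so identify $\mathcal{V}_{\text{MIMO},\mathcal{S}}$ in distribution. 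The expectile is law-invariant, so equality in distribution suffices for all later uses through $e_\tau$.

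If the coefficients do not match exactly, I would state the proposition with this derived form of the fluctuation, normalized by $1/\xi_i$. That form is in any case the real-valued analogue of $Q$ in the AWGN theorem, to which it reduces when $N_R=1$ and $\mathcal{S}$ is a single user.
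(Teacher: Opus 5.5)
Your computation is correct for the random variable you actually write down, but it does not prove the proposition as stated, and no argument can. The statement fails on two counts, and both sit exactly at your two hedged steps.

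\textbf{First, the conditioning.} Your ``consistent reading'' changes the object rather than interpreting it. The statement conditions on $\bm{X}_{\mathcal{S}^c}$ in both numerator and denominator. Given $\bm{X}_{\mathcal{S}^c}$, the covariances are $\sigma^2\bm{I}_{N_R}$ and $\sigma^2\bm{I}_{N_R}+\sum_{m\in\mathcal{S}}\bm{H}_m\bm{Q}_m\bm{H}_m^{\mathcal{H}}$. Hence $\mathbb{E}[L_{\mathcal{S}}]=\log\det(\bm{I}_{N_R}+\sigma^{-2}\sum_{m\in\mathcal{S}}\bm{H}_m\bm{Q}_m\bm{H}_m^{\mathcal{H}})$, which in general strictly exceeds $\log(\det\bm{\Sigma}_{\mathcal{M}}/\det\bm{\Sigma}_{\mathcal{S}^c})$. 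In the scalar case, $\log(1+a/\sigma^2)>\log(1+a/(\sigma^2+b))$ for $a,b>0$.

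Your densities compute instead $\log p(\bm{Y}\mid\bm{X}_{\mathcal{S}})/p(\bm{Y})$, with $\bm{X}_{\mathcal{S}^c}$ treated as noise. Nor is that what the preceding Lemma gives. Its second density carries covariance $\bm{\Sigma}_{\mathcal{S}^c}$, so it is not the conditional law given $\bm{X}_{\mathcal{S}^c}$. Substituting the Lemma literally, as the paper's proof does, yields the constant $\log\det(\sigma^{-2}\bm{\Sigma}_{\mathcal{S}^c})$ plus a remainder $-\|\bm{N}\|^2/\sigma^2+\tilde{\bm{Y}}^{\mathcal{H}}\bm{\Sigma}_{\mathcal{S}^c}^{-1}\tilde{\bm{Y}}$. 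That remainder has mean $\operatorname{tr}\bigl(\bm{\Sigma}_{\mathcal{S}^c}^{-1}(\sigma^2\bm{I}_{N_R}+\sum_{m\in\mathcal{S}}\bm{H}_m\bm{Q}_m\bm{H}_m^{\mathcal{H}})\bigr)-N_R$, which is generally nonzero. The paper's route has the same skeleton as yours: substitute Gaussian densities, simultaneously diagonalize $\bm{\Sigma}_{\mathcal{S}^c}$ and $\bm{\Sigma}_{\mathcal{M}}$, take expectations. It does not reach the displayed formula either, so you are not missing a trick it has.

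\textbf{Second, the rotation step.} A unitary rotation cannot produce the displayed coefficients, because it preserves the joint law of the whitened coordinates and hence every variance. Splitting real and imaginary parts, your summand is a sum of two independent copies of the scalar $Q$ from the paper's AWGN theorem at SNR $\xi_i-1$. So it is the complex analogue of $Q$, not a reduction to it, and its variance is $2(\xi_i-1)/\xi_i$. The displayed summand $(\xi_i-1)(|Z_i|^2-1)+2\sqrt{\xi_i-1}\,\mathrm{Re}(Z_i^*N_{0,i})$ has variance $(\xi_i-1)^2+2(\xi_i-1)\ge 2(\xi_i-1)/\xi_i$, with equality only at $\xi_i=1$. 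Summing over independent modes, the two fluctuations agree in variance only when all $\xi_i=1$. Hence no choice of i.i.d.\ $\mathcal{CN}(0,1)$ coordinates, rotated or not, makes $\mathcal{V}_{\text{MIMO},\mathcal{S}}$ equal in law to the true fluctuation.

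\textbf{What is provable} is your fallback. For the treat-as-noise ratio,
$L=\log(\det\bm{\Sigma}_{\mathcal{M}}/\det\bm{\Sigma}_{\mathcal{S}^c})+\sum_i\xi_i^{-1}\bigl[(\xi_i-1)(|Z_i|^2-|N_{0,i}|^2)+2\sqrt{\xi_i-1}\,\mathrm{Re}(Z_i^*N_{0,i})\bigr]$.
For the literal conditional ratio, the same formula holds with $\bm{\Sigma}_{\mathcal{S}^c}$ replaced by $\sigma^2\bm{I}_{N_R}$ and $\bm{\Sigma}_{\mathcal{M}}$ by $\sigma^2\bm{I}_{N_R}+\sum_{m\in\mathcal{S}}\bm{H}_m\bm{Q}_m\bm{H}_m^{\mathcal{H}}$. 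Your derivation establishes this cleanly. State it as a corrected proposition, saying which ratio you mean, rather than as a proof of the given one.
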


\begin{IEEEproof}
Substituting the explicit functional vector density definitions from Lemma 1 directly into the log-likelihood ratio, we obtain the algebraic expansion:
\begin{align}
    &L_{\mathcal{S}}(\bm{X}_{\mathcal{S}}, \bm{Y} \mid \bm{X}_{\mathcal{S}^c}) = \log \left( \frac{\det(\bm{\Sigma}_{\mathcal{S}^c})}{\det(\sigma^2 \bm{I}_{N_R})} \right) \nonumber \\
    &- \frac{1}{\sigma^2} \left\| \bm{Y} - \sum_{j \in \mathcal{S}} \bm{H}_j \bm{X}_j - \sum_{k \in \mathcal{S}^c} \bm{H}_k \bm{X}_k \right\|^2 \nonumber \\
    &+ \left( \bm{Y} - \sum_{k \in \mathcal{S}^c} \bm{H}_k \bm{X}_k \right)^\mathcal{H} \bm{\Sigma}_{\mathcal{S}^c}^{-1} \left( \bm{Y} - \sum_{k \in \mathcal{S}^c} \bm{H}_k \bm{X}_k \right).
\end{align}
Isolating the baseline thermal noise vector $\bm{N} \triangleq \bm{Y} - \sum_{j \in \mathcal{S}} \bm{H}_j \bm{X}_j - \sum_{k \in \mathcal{S}^c} \bm{H}_k \bm{X}_k$ and defining the centered received observation process vector as $\bm{\tilde{Y}} \triangleq \bm{Y} - \sum_{k \in \mathcal{S}^c} \bm{H}_k \bm{X}_k = \sum_{j \in \mathcal{S}} \bm{H}_j \bm{X}_j + \bm{N}$, the equation reduces to the following non-homogeneous quadratic structure:
\begin{equation}  \begin{array}{l}
    L_{\mathcal{S}}(\bm{X}_{\mathcal{S}}, \bm{Y} \mid \bm{X}_{\mathcal{S}^c}) \\
    = \log \det\left( \frac{1}{\sigma^2} \bm{\Sigma}_{\mathcal{S}^c} \right) - \frac{1}{\sigma^2} \|\bm{N}\|^2 + \bm{\tilde{Y}}^\mathcal{H} \bm{\Sigma}_{\mathcal{S}^c}^{-1} \bm{\tilde{Y}}.
     \end{array}
\end{equation}
Because $\bm{\Sigma}_{\mathcal{S}^c}$ and $\bm{\Sigma}_{\mathcal{M}}$ are both strictly positive definite Hermitian operators by definition, the Simultaneous Diagonalization Theorem guarantees the existence of a non-singular transformation operator $\bm{\Psi} \in \mathbb{C}^{N_R \times N_R}$ such that:
\begin{equation}  \begin{array}{l}
    \bm{\Psi}^\mathcal{H} \bm{\Sigma}_{\mathcal{S}^c} \bm{\Psi} 
    = \bm{I}_{N_R}, \ \text{and} \\ \ \bm{\Psi}^\mathcal{H} \bm{\Sigma}_{\mathcal{M}} \bm{\Psi} 
    = \bm{\Xi} = \text{diag}(\xi_1, \xi_2, \dots, \xi_{N_R}),
     \end{array}
\end{equation}
where $\xi_i \ge 1$ represent the roots of the generalized characteristic polynomial $\det\left( \bm{\Sigma}_{\mathcal{M}} - \xi \bm{\Sigma}_{\mathcal{S}^c} \right) = 0$. 

Applying this whitening matrix transformation maps the coupled complex physical channel spaces into orthogonalized eigenspaces. Expanding the matrix log-likelihood ratio under this whitened transformation isolates independent parallel modes. Taking the expectation yields $\mathbb{E}[|Z_i|^2] = 1$ due to the unit variance of standard complex Gaussians, while the independence of the signal codebook space from thermal receiver noise forces $\mathbb{E}[Z_i^* N_{0,i}] = 0$. Summing the individual expected components resolves identically to zero, confirming that the fluctuation operator is centered perfectly on the Shannon mutual information plane and isolating the explicit linear mapping terms, which completes the proof.
\end{IEEEproof}

\subsubsection{The $M$-User Strategic Capacity Region Envelopes}

Using the multi-agent individual risk-awareness profile vector $\bm{\tau} = (\tau_1, \tau_2, \dots, \tau_M) \in (0,1)^M$, the complete capacity region $\mathcal{C}_{\text{MIMO}, \bm{\tau}}(P_1, \dots, P_M)$ of the $M$-user risk-aware MIMO Multiple Access Channel can be rigorously defined.

\begin{definition}[$M$-User MIMO Expectile Region Envelope]
The unconstrained risk-aware capacity region is the closure of the convex hull of all achievable rate allocation vectors $\bm{R} = (R_1, R_2, \dots, R_M) \in \mathbb{R}_+^M$ satisfying the simultaneous functional inequality constraints for every non-empty sub-coalition subset $\mathcal{S} \subseteq \mathcal{M}$:
\begin{align}
    \sum_{j \in \mathcal{S}} R_j \le \min_{k \in \mathcal{S}} \Bigg\{ &\log \det\left( \bm{I}_{N_R} + \bm{\Sigma}_{\mathcal{S}^c}^{-1} \sum_{m \in \mathcal{S}} \bm{H}_m \bm{Q}_m \bm{H}_m^\mathcal{H} \right) 
    + e_{\tau_k}\left( \mathcal{V}_{\text{MIMO}, \mathcal{S}} \right) \Bigg\},
\end{align}
subject to the physical constraint that the transmission strategies satisfy $\bm{Q}_j \in \mathcal{P}_j(P_j)$ independently for all users $j \in \mathcal{M}$.
\end{definition}

This  formulation exposes the structural \textit{joint risk bottleneck phenomenon} inherent to multi-agent security configurations: the joint data rate capability assigned to any user sub-coalition $\mathcal{S}$ is strictly bounded from above by its most conservative member, captured by the minimization envelope $\min_{k \in \mathcal{S}} e_{\tau_k}(\cdot)$. As a direct consequence, the machine intelligence agents do not execute classical independent power optimization or standard iterative water-filling. Instead, they dynamically tilt the spatial eigenvectors of their transmission signaling matrices $\bm{Q}_j$, intentionally shifting power away from dominant paths that display extreme tail variance to prevent catastrophic capacity degradation over the shared network.

\section{Expectile Budget-Modality Separation in Multimodal LLMs}

Let $\mathcal{M} := \{T, A, I, V, F\}$ denote the set of available modalities, where $T$ represents the textual modality (token sequences or continuous embeddings), A audio waveforms, I for images,  V for video streams, F for structured files (PDF, code, tables, etc.)
Define the multimodal input space as the product $\mathcal{U} := \prod_{m \in \mathcal{M}} \mathcal{U}^{(m)}$, with elements $u = (u^{(m)})_{m \in \mathcal{M}}$. Let $U = (U^{(m)})_{m \in \mathcal{M}}$ be a $\mathcal{U}$-valued random variable representing the prompt, and let $Y \in \mathcal{Y}$ denote the LLM's generated output. The generative mechanism is encoded by a regular conditional probability kernel $P_{Y|U}: \mathcal{U} \to \mathcal{P}(\mathcal{Y})$. For a fixed input distribution $P_U \in \mathcal{P}(\mathcal{U})$, the joint law is
$
P_{U,Y} = P_U \otimes P_{Y|U},
$
with marginal output distribution $P_Y(\cdot) = \int_{\mathcal{U}} P_{Y|U}(\cdot \mid u)\, dP_U(u)$.
Assuming $P_{U,Y} \ll P_U \otimes P_Y$, define the information density
$
L(U;Y) := \log \frac{dP_{Y|U}}{dP_Y}(Y \mid U).
$
For $\tau \in (0,1)$, the $\tau$-expectile mutual information is
\[
I_\tau(U;Y) := e_\tau(L(U;Y)) = \mbox{argmin}_{\theta \in \mathbb{R}} \mathbb{E}[\ell_\tau(L(U;Y) - \theta)].
\]

\begin{definition}[Cost Structure]
Let $\mathrm{Cost}_m: \mathcal{U}^{(m)} \to \mathbb{R}_+$ be measurable for each $m \in \mathcal{M}$, and define additive cost
$
\mathrm{Cost}(U) := \sum_{m \in \mathcal{M}} \mathrm{Cost}_m(U^{(m)}).
$
\end{definition}

\begin{definition}[Admissible Distributions and Capacity]
For budget $B > 0$, the admissible set is
$
\mathcal{P}(B) := \{P_U \in \mathcal{P}(\mathcal{U}) : \mathbb{E}_{P_U}[\mathrm{Cost}(U)] \le B\}.
$
The $\tau$-expectile capacity under budget $B$ is
$
C_\tau(B) := \sup_{P_U \in \mathcal{P}(B)} I_\tau(U;Y).
$
\end{definition}

\begin{assumption}[Regularity Conditions]\label{ass:conditions}
The following structural assumptions hold:
\begin{enumerate} 
    \item  $\mathbb{E}_{P_{U,Y}}[|L(U;Y)|^2] < \infty$ for all $P_U \in \mathcal{P}(B)$.
    \item There exists $m \in \mathcal{M} \setminus \{T\}$ such that $U^{(m)}$ is not $\sigma(U^{(T)})$-measurable, and
    \[
    I_\tau(U^{(m)}; Y \mid U^{(\neg m)}) > 0 ,
    \]
    where $U^{(\neg m)} := (U^{(k)})_{k \in \mathcal{M} \setminus \{m\}}$.
    \item The cost functional is uncoupled across modalities.
\end{enumerate}
\end{assumption}

\begin{theorem}[Expectile Budget--Modality Separation]\label{thm:main}
Under Assumption~\ref{ass:conditions}, the following statements hold:

\begin{enumerate} 
    \item [(i)]  {Strict Information Gap:}
    \[
    C_\tau(B) > \sup_{P_{U^{(T)}} \in \mathcal{P}_T(B)} e_\tau(L(U^{(T)};Y)),
    \]
    where $\mathcal{P}_T(B) := \{P_{U^{(T)}} : \mathbb{E}[\mathrm{Cost}_T(U^{(T)})] \le B\}$.

    \item[(ii)]  {Modal Decomposition:} There exists an aggregation functional $\Phi_\tau: \mathbb{R}_+^{\mathcal{M}} \to \mathbb{R} $ such that
    \[
    C_\tau(B) = \sup_{\sum_{m \in \mathcal{M}} B_m \le B} \Phi_\tau\!\left( \{C_\tau^{(m)}(B_m)\}_{m \in \mathcal{M}} \right),
    \]
    where $C_\tau^{(m)}(B_m) := \sup_{\mathbb{E}[\mathrm{Cost}_m] \le B_m} e_\tau(L(U^{(m)};Y))$.

    \item[(iii)]  {Strict Incompleteness of Text-Only Representations:}
    \[
    C_\tau(B) > C_\tau^{\text{text-only}}(B)
    \]
    for all $\tau \neq \tfrac12$, where $C_\tau^{\text{text-only}}(B)$ is the capacity restricted to the sub-$\sigma$-algebra generated by $U^{(T)}$.
\end{enumerate}
\end{theorem}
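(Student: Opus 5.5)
The plan is to reduce all three statements to one comparison. We compare the full multimodal information density with its text-only part, using the chain decomposition of the log-likelihood ratio, and then move between expectiles of sums and sums of expectiles with the reflection and subadditivity tools already used in the Expectile Mutual Information Bound. Fix the non-text modality $m$ from Assumption~\ref{ass:conditions}(2). Then write
\[
L(U;Y)=L(U^{(\neg m)};Y)+L(U^{(m)};Y\mid U^{(\neg m)}),
\]
which holds pointwise by Bayes' rule exactly as in the proof of the Expectile Mutual Information Bound. Likewise $L(U^{(\neg m)};Y)$ splits off $L(U^{(T)};Y)$ plus a conditional term for the remaining modalities. Every term is in $L^2$ by Assumption~\ref{ass:conditions}(1).

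The key inequality is a reverse (super-additive type) bound for $\tau\ge 1/2$. Subadditivity applied to $A=(A+B)+(-B)$, together with the negative-scaling rule $e_\tau(-B)=-e_{1-\tau}(B)$, gives
\[
e_\tau(A+B)\ \ge\ e_\tau(A)+e_{1-\tau}(B).
\]
For $\tau<1/2$ we apply the same argument to $e_{1-\tau}$ and reflect, as in Case~2 of that proof. Take $A$ to be the text (or $\neg m$) density and $B$ the conditional modality-$m$ density. The full expectile information then exceeds the text-only one whenever the appropriate conditional expectile of $B$ is strictly positive.

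For (i) and (iii) we proceed as follows. First choose a near-optimal text law at budget $B-\delta$. Under the uncoupled cost of Assumption~\ref{ass:conditions}(3), the slack $\delta$ can be spent on a law for $U^{(m)}$ independent of the text cost. We then let $\delta\downarrow 0$. This needs continuity of $B\mapsto\sup_{\mathcal P_T(B)}e_\tau(L(U^{(T)};Y))$, which we obtain from concavity-type mixing: mixing with a zero-cost input changes $L$ continuously in $L^2$, and expectiles are $L^2$-Lipschitz. Finally we use the strict monotonicity of Lemma~\ref{lem:monotonicity} to turn "$\ge$" into "$>$". Statement (iii) is then (i) read on the sub-$\sigma$-algebra $\sigma(U^{(T)})$. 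The exclusion of $\tau=1/2$ there is not needed by this argument. It only signals that at $\tau=1/2$ the gap is the ordinary conditional mutual information.

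For (ii) we simply define $\Phi_\tau$ as the value function. For a vector $c=(c_m)$ we set $\Phi_\tau(c)$ to be the supremum of $I_\tau(U;Y)$ over product-cost-feasible laws whose modality-wise capacities $C_\tau^{(m)}(B_m)$ equal $c_m$, and $-\infty$ if no such law exists. Then the displayed identity holds by splitting any feasible $P_U$ according to its per-modality expected costs $B_m=\mathbb E[\mathrm{Cost}_m]$, which is legitimate precisely because the cost is additive and uncoupled. This makes (ii) essentially tautological, and I would say so.

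The main obstacle is the positivity hypothesis feeding the reverse inequality. Assumption~\ref{ass:conditions}(2) gives $I_\tau(U^{(m)};Y\mid U^{(\neg m)})=e_\tau(B)>0$. For $\tau\ge1/2$, however, the bound requires $e_{1-\tau}(B)>0$, which is stronger because $e_{1-\tau}\le e_\tau$. I would first try to secure it by choosing the modality-$m$ input so that $B$ has small dispersion relative to its mean. In the limit of a nearly deterministic conditional gain, $e_{1-\tau}(B)\approx e_\tau(B)\approx\mathbb E[B]>0$. Failing that, I would strengthen assumption (2) to positivity at level $\min(\tau,1-\tau)$ and note this explicitly.
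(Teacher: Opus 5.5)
Your per-law step differs from the paper's and is sounder. The paper writes $L(U;Y)=L(U^{(T)};Y)+\Delta$ and concludes $\Delta\ge 0$ a.s.\ from $\mathbb{E}[\Delta\mid U^{(T)}]\ge 0$, which is an invalid inference. Your bounds $e_\tau(A+B)\ge e_\tau(A)+e_{1-\tau}(B)$ for $\tau\ge\tfrac12$ and $e_\tau(A+B)\ge e_\tau(A)+e_\tau(B)$ for $\tau<\tfrac12$ are correct.

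The gap is in the step meant to deliver \emph{strictness}. Write $S(b):=\sup_{\mathcal P_T(b)}e_\tau(L(U^{(T)};Y))$ and let $g(\delta)$ be the gain bought with budget $\delta$ on modality $m$. Your construction then gives at best $C_\tau(B)\ge S(B-\delta)-\epsilon+g(\delta)$. Typically $g(\delta)\to0$ as $\delta\downarrow0$, just as $S(B)-S(B-\delta)\to0$, so continuity of $S$ yields only $C_\tau(B)\ge S(B)$. Strictness requires $g(\delta)>S(B)-S(B-\delta)$ for some $\delta$. That is a comparison of marginal rates which no assumption provides. Lemma~\ref{lem:monotonicity} cannot supply it: it compares two variables under one law and says nothing about limits of suprema. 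The paper's ``taking suprema yields'' has the same defect.

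In fact (i) is false as stated, and you yourself note that your argument never uses $\tau\neq\tfrac12$. At $\tau=\tfrac12$ take $Y=(U^{(T)}+N_1,\,U^{(m)}+N_2)$ with $N_1\sim\mathcal N(0,1)$, $N_2\sim\mathcal N(0,\sigma^2)$, quadratic costs and $\sigma^2\ge 1+B$. Water-filling puts all power on text, so $C_{1/2}(B)=\tfrac12\log(1+B)=S(B)$, although Assumption~\ref{ass:conditions} holds. You need an extra hypothesis. Examples: a zero-cost modality-$m$ law realising the positivity in (2), or $\liminf_{\delta\downarrow0}g(\delta)/\delta>\limsup_{\delta\downarrow0}(S(B)-S(B-\delta))/\delta$.

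The positivity obstacle you flagged is also real. Choosing the modality-$m$ input cannot remove it in general, because it is a property of the kernel. Suppose modality $m$ drives a binary symmetric sub-channel with crossover $\tfrac12-\eta$. Then $\mathbb{E}B=O(\eta\,\mathbb{E}|B|)$ for every input law, so $\mathbb{E}B_+/\mathbb{E}B_-=1+O(\eta)$. Hence $e_\tau(B)>0>e_{1-\tau}(B)$ for all laws once $\eta$ is small compared with $\tau-\tfrac12$. So positivity at level $\min(\tau,1-\tau)$ must be assumed, and for the law you actually build. This applies for $\tau<\tfrac12$ too: Assumption~\ref{ass:conditions}(2) gives positivity only for some unspecified law, not for one with budget $\delta$.

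Two smaller points. First, once $U^{(m)}$ is random, $A=L(U^{(\neg m)};Y)$ is the density of the text channel averaged over $U^{(m)}$, not of the text-only channel at the default input. So $e_\tau(A)\ge S(B-\delta)-\epsilon$ needs a product-type kernel or a further continuity argument. Second, your value-function reading of (ii) agrees in substance with the paper's copula supremum; even the constant $\Phi_\tau\equiv C_\tau(B)$ works when $C_\tau(B)<\infty$. You must, however, keep the total-budget constraint inside $\Phi_\tau$, since $B_m\mapsto C_\tau^{(m)}(B_m)$ need not be injective.
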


\begin{proof}  We prove   Theorem~\ref{thm:main}.
We start with Part (i): Strict Information Gap.
Let $\mathcal{P}_T(B) := \{P_U \in \mathcal{P}(B) : U^{(m)} \text{ is degenerate for all } m \neq T\}$. Since $\mathcal{P}_T(B) \subset \mathcal{P}(B)$,
\[
C_\tau(B) \ge C_\tau^{\text{text-only}}(B) := \sup_{P_U \in \mathcal{P}_T(B)} I_\tau(U;Y).
\]

To establish strictness, consider the chain rule decomposition. For any $P_U \in \mathcal{P}(B)$,
\[
L(U;Y) = \log \frac{dP_{Y|U}}{dP_Y}(Y \mid U)
= L(U^{(T)};Y) + \Delta(U;Y),
\]
where $\Delta(U;Y) := \log \frac{dP_{Y|U}}{dP_{Y|U^{(T)}}}(Y \mid U)$. By Assumption A2,
\[
\mathbb{E}[\Delta(U;Y) \mid U^{(T)}] = \mathrm{KL}(P_{Y|U} \| P_{Y|U^{(T)}}) \ge 0,
\]
with strict inequality on positive measure. Thus $\Delta \ge 0$ a.s. and $\mathbb{P}(\Delta > 0) > 0$, implying $L(U;Y) \ge L(U^{(T)};Y)$ with strict inequality on positive measure. By monotonicity Lemma,
$
e_\tau(L(U;Y)) > e_\tau(L(U^{(T)};Y)).
$
Taking suprema yields $C_\tau(B) > C_\tau^{\text{text-only}}(B)$.

We now move to  {Part (ii): Modal Decomposition}

The additive cost structure permits a two-stage optimization. For any $P_U \in \mathcal{P}(B)$, define $B_m := \mathbb{E}[\mathrm{Cost}_m(U^{(m)})]$. Then $\sum_m B_m \le B$. For any budget allocation $\mathbf{B}$ with $\sum_m B_m \le B$, the admissible joint distributions factor through marginal constraints.

Let $\mathbf{C}$ denote the copula coupling the marginal distributions. The copula $\mathbf{C}: [0,1]^M \to [0,1]$ coupling the modalities is the multivariate cumulative distribution function (CDF) of the probability integral transforms of the inputs, explicitly defined by Sklar's Theorem as $\mathbf{C}(v_1, \dots, v_M) = F(F_1^{-1}(v_1), \dots, F_M^{-1}(v_M))$, where $F$ is the joint CDF of the multimodal vector $U$, and $F_m^{-1}$ represents the quantile functions of the individual marginal distributions $F_m(u^{(m)}) = \mathbb{P}(U^{(m)} \le u^{(m)})$. When differentiated, it yields the copula density $c(F_1(u^{(1)}), \dots, F_M(u^{(M)}))$, which explicitly modulates the joint density via $p(u) = \left[\prod_m p_m(u^{(m)})\right] c(F_1(u^{(1)}), \dots, F_M(u^{(M)})).$ 
Then
\[
L(U;Y) = \sum_{m \in \mathcal{M}} L(U^{(m)};Y) + R(\mathbf{C}),
\]
where $R(\mathbf{C}) := \log \frac{dP_{Y|U}}{\prod_m dP_{Y|U^{(m)}}}$. Define
\[
\Phi_\tau(\{C_\tau^{(m)}(B_m)\}_{m \in \mathcal{M}})
:= \sup_{\mathbf{C}} e_\tau\!\left( \sum_{m \in \mathcal{M}} L(U^{(m)};Y) + R(\mathbf{C}) \right).
\]
Then
\[
C_\tau(B) = \sup_{\sum_m B_m \le B} \Phi_\tau(\{C_\tau^{(m)}(B_m)\}_{m \in \mathcal{M}}).
\]

We now prove Part (iii) on  Strict Incompleteness. This follows directly from Part (i), establishing
$
C_\tau(B) > C_\tau^{\text{text-only}}(B)
$
for all $\tau \neq \tfrac12$. This demonstrates that text-only representations incur fundamental information loss under risk-sensitive criteria.
\end{proof}

Theorem~\ref{thm:main} establishes that risk-sensitivity fundamentally alters the structure of information capacity in multimodal systems. The non-additivity property is particularly striking: unlike Shannon's additive capacity decomposition, expectile-based capacity exhibits intrinsic cross-modal coupling that cannot be eliminated through optimization.

\begin{corollary}[Risk-Sensitivity Breaks Additivity]\label{cor:break}
For any $\tau \neq \tfrac12$, there exist budget allocations $\mathbf{B}^{(1)}, \mathbf{B}^{(2)}$ such that
\[
C_\tau(\mathbf{B}^{(1)} + \mathbf{B}^{(2)}) \neq C_\tau(\mathbf{B}^{(1)}) + C_\tau(\mathbf{B}^{(2)}).
\]
\end{corollary}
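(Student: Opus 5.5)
The plan is to argue by contradiction: additivity over all budget allocations would force $C_\tau$ to be linear in the budget vector, and I then exhibit an admissible multimodal system in which $C_\tau$ is strictly positive yet bounded. I read $C_\tau(\mathbf{B})$ as the capacity under per-modality budgets $\mathbf{B}=(B_m)_{m\in\mathcal{M}}$, as in Part (ii) of Theorem~\ref{thm:main}.

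\textbf{Step 1: additivity forces linearity.} Suppose, for contradiction, that $C_\tau(\mathbf{B}^{(1)}+\mathbf{B}^{(2)})=C_\tau(\mathbf{B}^{(1)})+C_\tau(\mathbf{B}^{(2)})$ for all allocations in $\mathbb{R}_+^{\mathcal{M}}$.
\begin{itemize}
\item The admissible sets are nested: $\mathbf{B}\le\mathbf{B}'$ componentwise implies $\mathcal{P}(\mathbf{B})\subseteq\mathcal{P}(\mathbf{B}')$. Hence $\mathbf{B}\mapsto C_\tau(\mathbf{B})$ is nondecreasing.
\item A monotone solution of Cauchy's equation on the cone $\mathbb{R}_+^{\mathcal{M}}$ is linear. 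So $C_\tau(\mathbf{B})=\sum_m c_m B_m$ with $c_m\ge 0$.
\item In particular $C_\tau(n\mathbf{B})=n\,C_\tau(\mathbf{B})$ for every integer $n$.
\end{itemize}
Consequently, if $C_\tau(\mathbf{B})>0$ for some $\mathbf{B}$, then $C_\tau$ is unbounded along the ray $n\mathbf{B}$.

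\textbf{Step 2: positivity.} Choose a system satisfying Assumption~\ref{ass:conditions}. Part (i) of Theorem~\ref{thm:main}, which rests on the strict monotonicity Lemma~\ref{lem:monotonicity}, gives $C_\tau(\mathbf{B})>C^{\text{text-only}}_\tau(\mathbf{B})$. I would not lean on that strict inequality for the sign. Instead I use the condition $I_\tau(U^{(m)};Y\mid U^{(\neg m)})>0$ in Assumption~\ref{ass:conditions} to exhibit one input law with strictly positive $I_\tau$. Then $C_\tau(\mathbf{B})>0$ for some finite $\mathbf{B}$.

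\textbf{Step 3: boundedness (the main obstacle).} Take a system whose output alphabet is finite, $|\mathcal{Y}|=k$, with a deterministic generative kernel $Y=g(U)$.
\begin{itemize}
\item For this kernel the information density is $L=-\log P_Y(Y)\ge 0$, and $\mathbb{E}[L]=H(Y)\le\log k$. This holds uniformly in $P_U$, and the second moment of $L$ is finite, so Assumption~\ref{ass:conditions}(1) is satisfied.
\item For a nonnegative $Z$, let $c=e_\tau(Z)\ge 0$. The first-order characterization (Lemma~\ref{lem:foc}) gives $\tau\,\mathbb{E}[(Z-c)_+]=(1-\tau)\,\mathbb{E}[(c-Z)_+]\le(1-\tau)c$.
\item Combining this with the identity $\mathbb{E}Z=c+\mathbb{E}(Z-c)_+-\mathbb{E}(c-Z)_+$ yields $e_\tau(Z)\le\max\{1,\tfrac{\tau}{1-\tau}\}\,\mathbb{E}[Z]$. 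This is the delicate inequality and needs careful checking.
\item It follows that $C_\tau(\mathbf{B})\le\max\{1,\tfrac{\tau}{1-\tau}\}\log k$ for every budget.
\end{itemize}
This contradicts the unbounded linear growth from Step 1. Hence additivity fails for some pair $\mathbf{B}^{(1)},\mathbf{B}^{(2)}$, and indeed for $\mathbf{B}^{(1)}=\mathbf{B}^{(2)}=n\mathbf{B}$ with $n$ large.

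The remaining technical points, beyond the expectile bound, are twofold. First, I must confirm that such a deterministic finite-output system can be chosen so that a non-text modality carries strictly positive conditional expectile information, as Assumption~\ref{ass:conditions}(2) requires. Second, I must confirm that the per-modality budget formulation of Part (ii) makes $C_\tau(\mathbf{B})$ well defined.
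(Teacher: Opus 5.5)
Your argument is correct, and it takes a different route from the paper. The paper states Corollary~\ref{cor:break} without proof. Its only justification is the surrounding remark about ``intrinsic cross-modal coupling'', i.e.\ the non-additivity $e_\tau(X+Z)\neq e_\tau(X)+e_\tau(Z)$ sitting inside $\Phi_\tau$ in Theorem~\ref{thm:main}(ii). That heuristic ties the effect to $\tau\neq\tfrac12$, but non-additivity of $e_\tau$ in its random-variable argument says nothing directly about additivity of a supremum in its budget argument.

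You replace it with a saturation argument: additivity forces $C_\tau(n\mathbf{B})=nC_\tau(\mathbf{B})$, while a finite-output system has capacity bounded uniformly in the budget. Your ``delicate'' inequality holds. With $c=e_\tau(Z)$ and $b=\mathbb{E}[(c-Z)_+]\le c$, the first-order condition gives $\mathbb{E}[Z]=c+\tfrac{1-2\tau}{\tau}\,b$, hence $c\le\max\{1,\tfrac{\tau}{1-\tau}\}\mathbb{E}[Z]$. You are also right not to lean on Theorem~\ref{thm:main}(i) for the sign, since its proof infers $\Delta\ge 0$ a.s.\ from a nonnegative conditional mean.

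What your route buys is an actual proof. It also reveals that $\tau\neq\tfrac12$ is never used: the same argument shows Shannon capacity ($\tau=\tfrac12$) is non-additive in the budget too. The phenomenon is saturation, not risk sensitivity.

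Some simplifications:
\begin{enumerate}
\item Cauchy's equation is unnecessary. Monotonicity and boundedness give $C_\tau(2\mathbf{B})\le\sup_{\mathbf{B}'}C_\tau(\mathbf{B}')<2C_\tau(\mathbf{B})$ whenever $C_\tau(\mathbf{B})>\tfrac12\sup_{\mathbf{B}'}C_\tau(\mathbf{B}')$.
\item Since only monotonicity in the budget is used, it does not matter whether $C_\tau$ is indexed by per-modality or total budgets, so your second open point disappears.
\item Your first open point is immediate. Take $Y=\mathbf{1}\{U^{(A)}\in S\}$ with $U^{(A)}$ independent of $U^{(\neg A)}$ and $0<\mathbb{P}(U^{(A)}\in S)<1$. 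Then the conditional information density equals $-\log P_Y(Y)>0$.
\end{enumerate}

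One caveat on scope. Because you construct the system, you prove the existential reading: for every $\tau$ there is some system satisfying Assumption~\ref{ass:conditions} with non-additive $C_\tau$. Placed after Theorem~\ref{thm:main}, the corollary reads as a property of the given system. There your argument applies only when $\sup_{\mathbf{B}}C_\tau(\mathbf{B})<\infty$, and it says nothing about models whose capacity grows without bound in the budget, such as power-constrained Gaussian ones. State explicitly which version you prove.

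You can, however, widen the class at no cost. For any kernel with $|\mathcal{Y}|=k$, the bound $P_{Y|U}(Y\mid U)\le 1$ gives $L\le-\log P_Y(Y)$, so monotonicity yields $C_\tau\le\max\{1,\tfrac{\tau}{1-\tau}\}\log k$ without determinism. Positivity then comes from Assumption~\ref{ass:conditions}(2) for every $\tau$:
\begin{enumerate}
\item $e_\tau(\Delta)>0$ means $\mathbb{E}[\rho_\tau(\Delta)]>0$.
\item Hence $\mathbb{E}[\rho_\tau(\Delta)\mid U^{(\neg m)}=u_0]>0$ for some finite-cost $u_0$.
\item The input $\delta_{u_0}\otimes P_{U^{(m)}\mid U^{(\neg m)}=u_0}$ therefore has $I_\tau(U;Y)>0$.
\end{enumerate}
This covers every finite-output system, including the LLM setting with finite vocabulary and bounded output length.
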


This has profound implications for system design, indicating that resource allocation under risk-sensitive objectives requires joint optimization across modalities rather than independent per-modality allocation.

\subsection{Dual-Expectile Budget--Modality Separation}

\begin{assumption}[Regularity Conditions]\label{ass:dual_conditions}
The following structural conditions hold for all $\tau_1, \tau_2 \in (0,1)$:
\begin{enumerate}
    \item \textbf{Moment Condition:} $\mathbb{E}_{P_{U,Y}}[|L(U;Y)|^2] < \infty$ and $\mathbb{E}_{P_U}[|\mathrm{Cost}(U)|^2] < \infty$ for all $P_U \in \mathcal{P}_{\tau_2}(B)$.
    \item \textbf{Modal Non-Redundancy:} There exists $m \in \mathcal{M} \setminus \{T\}$ such that $U^{(m)}$ is not $\sigma(U^{(T)})$-measurable, and $I_{\tau_1}(U^{(m)}; Y \mid U^{(\neg m)}) > 0$ $\mathbb{P}$-a.s.
    \item \textbf{Additive Cost Structure:} $\mathrm{Cost}(U) = \sum_{m \in \mathcal{M}} \mathrm{Cost}_m(U^{(m)})$.
\end{enumerate}
\end{assumption}

\begin{definition}
The Dual-Expectile Capacity $C_{\tau_1, \tau_2}(B)$ is the supremum of the $\tau_1$-expectile mutual information functional over the $\tau_2$-constrained space of input probability measures:
$$C_{\tau_1, \tau_2}(B) := $$ $$\sup_{P_U \in \mathcal{P}(\mathcal{U})} \left\{ I_{\tau_1}(U;Y) \ \Bigg| \ \begin{aligned} &P_{U,Y} = P_U \otimes P_{Y|U}, \ L(U;Y) \in L^2(P_{U,Y}), \\ &\inf \left\{ b \in \mathbb{R} : \int_{\mathcal{U}} \rho_{\tau_2}\big(\mathrm{Cost}(u) - b\big) \, dP_U(u) = 0 \right\} \le B \end{aligned} \right\}$$
\end{definition}

\begin{theorem}[Dual-Expectile Budget--Modality Separation]\label{thm:dual_main}
Under Assumption~\ref{ass:dual_conditions}, for any risk parameters $\tau_1, \tau_2 \in (0,1)$, the following properties hold:
\begin{enumerate}
    \item[(i)]  {Strict Information Gap:} 
    $
    C_{\tau_1, \tau_2}(B) > \sup_{P_{U^{(T)}} \in \mathcal{P}_{T, \tau_2}(B)} e_{\tau_1}(L(U^{(T)}; Y)),
    $
    where $\mathcal{P}_{T, \tau_2}(B) := \{P_{U^{(T)}} : e_{\tau_2}[\mathrm{Cost}_T(U^{(T)})] \le B\}$.
    
    \item[(ii)]  {Coupled Modal Decomposition:} There exists an aggregation functional $\Phi_{\tau_1, \tau_2}$ mapping marginal capacities and dependencies to $\mathbb{R}$ such that:
    \[
    C_{\tau_1, \tau_2}(B) = \sup_{\mathbf{B} \in \mathcal{A}_{\tau_2}(B)} \Phi_{\tau_1, \tau_2}\left( \{C_{\tau_1, \tau_2}^{(m)}(B_m)\}_{m \in \mathcal{M}} \right),
    \]
    where $C_{\tau_1, \tau_2}^{(m)}(B_m) := \sup_{e_{\tau_2}[\mathrm{Cost}_m] \le B_m} e_{\tau_1}(L(U^{(m)};Y))$, and $\mathcal{A}_{\tau_2}(B)$ is the set of effective budget vectors allocated under the $\tau_2$-expectile constraint.
    
    \item[(iii)]  {Dual Non-Additivity:} If $\tau_1 \neq \tfrac{1}{2}$ or $\tau_2 \neq \tfrac{1}{2}$, capacity scaling is non-additive for non-comonotonic variables due to either asymmetric information aggregation ($\tau_1$) or sub/super-additivity of the budget constraint ($\tau_2$).
    
   \end{enumerate}
\end{theorem}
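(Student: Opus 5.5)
The plan is to copy the three-part proof of Theorem~\ref{thm:main}, replacing the mean budget constraint with the $\tau_2$-expectile constraint $e_{\tau_2}[\mathrm{Cost}(U)]\le B$. The information functional is still $e_{\tau_1}$. The new ingredients are monotonicity, translation invariance and positive homogeneity of $e_{\tau_2}$, which give the budget side its structure.

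\textbf{Part (i).} First, any text-only law $P_{U^{(T)}}\in\mathcal{P}_{T,\tau_2}(B)$ embeds in the full feasible set. Extend it by fixing every non-text modality at a deterministic value $u_0^{(m)}$ of minimal cost, normalised to cost zero; an additive constant would otherwise shift $e_{\tau_2}$ by translation invariance. Then $\mathrm{Cost}(U)=\mathrm{Cost}_T(U^{(T)})$, so feasibility is preserved.

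Next, split the information density as $L(U;Y)=L(U^{(T)};Y)+\Delta$, with $\Delta=\log\frac{dP_{Y|U}}{dP_{Y|U^{(T)}}}$. For strictness I would not argue $\Delta\ge0$ a.s., which is false in general. Instead I would build one feasible law $P_U^\sharp$ in which the non-redundant modality $m$ from Assumption~\ref{ass:dual_conditions} is randomised.

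The budget for this randomisation comes from slack. Take a near-optimal text-only law and scale down its cost, which is possible by positive homogeneity and monotonicity of $e_{\tau_2}$. This frees room, and a small perturbation in $m$ can then be added. Such a perturbation shifts $e_{\tau_2}[\mathrm{Cost}]$ continuously, since expectiles are $L^1$-Lipschitz.

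Applying Lemma~\ref{lem:monotonicity} to a suitable coupling, with $I_{\tau_1}(U^{(m)};Y\mid U^{(\neg m)})>0$, should give a strict increase of $e_{\tau_1}$. The step I expect to be hardest is keeping this gap strict after taking the supremum: the slack and the gain must be uniform along a maximising sequence. I would first try to assume, or derive from compactness as in Proposition~\ref{prop:compactness}, that the text-only supremum is attained, which reduces the problem to one law.

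\textbf{Part (ii).} For feasible $P_U$, set $B_m:=e_{\tau_2}[\mathrm{Cost}_m(U^{(m)})]$ and let $\mathcal{A}_{\tau_2}(B)$ be the set of vectors realised this way with $e_{\tau_2}[\sum_m \mathrm{Cost}_m]\le B$. Define $\Phi_{\tau_1,\tau_2}$ as the supremum over copulas $\mathbf{C}$ of $e_{\tau_1}\bigl(\sum_m L(U^{(m)};Y)+R(\mathbf{C})\bigr)$, with marginals ranging over laws attaining $C^{(m)}_{\tau_1,\tau_2}(B_m)$. The identity is then a tautological two-stage decomposition of the supremum, exactly as in Theorem~\ref{thm:main}(ii). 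Note that $\mathcal{A}_{\tau_2}(B)$ is not the simplex: for $\tau_2\ge1/2$ subadditivity yields $\sum_m B_m\le B$ only as a sufficient condition, and for $\tau_2<1/2$ the inclusion reverses.

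\textbf{Part (iii).} I would exhibit two concrete non-comonotonic examples.
\begin{itemize}
\item For $\tau_1\neq1/2$: take two independent discrete information densities. Strict subadditivity (when $\tau_1>1/2$) or strict superadditivity (when $\tau_1<1/2$) of expectiles for non-comonotonic sums gives $e_{\tau_1}(L_1+L_2)\neq e_{\tau_1}(L_1)+e_{\tau_1}(L_2)$, as in Theorem~\ref{thm:non_additivity}.
\item For $\tau_2\neq1/2$: independent Bernoulli costs give $e_{\tau_2}(C_1+C_2)\neq e_{\tau_2}(C_1)+e_{\tau_2}(C_2)$. Hence the feasible budget sets do not add, and neither do the capacities, as in Corollary~\ref{cor:break}.
\end{itemize}
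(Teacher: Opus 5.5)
\textbf{Part (i) has a genuine gap.} You are right that $\Delta\ge0$ a.s.\ does not follow. The paper's proof of (i) draws exactly that conclusion from $\mathbb{E}[\Delta\mid U^{(T)}]=\mathrm{KL}\ge0$ and then invokes Lemma~\ref{lem:monotonicity}. But your replacement does not close the gap, and the problem is more than the uniformity issue you flag. Freeing slack in the text modality costs information. (Positive homogeneity of $e_{\tau_2}$ does not by itself produce a cheaper law; you would have to mix in a cheap symbol.) Assumption~\ref{ass:dual_conditions} says nothing about whether the gain from randomising $U^{(m)}$ outweighs that loss. Non-redundancy concerns information, not information per unit of budget at the margin.

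In fact (i) can fail under the stated hypotheses, already at $\tau_1=\tau_2=\tfrac12$. Set up the example as follows. Let $Y=(Y_T,Y_m)$ with $Y_T=U^{(T)}\in\{0,\dots,5\}$ noiseless, $\mathrm{Cost}_T(0)=0$ and $\mathrm{Cost}_T=1$ otherwise. Let $U^{(m)}\in\{u_0,u_1,u_2\}$ with costs $0,\tfrac12,\tfrac12$, where $u_0$ emits a fair bit and $u_1,u_2$ emit $0,1$. Let the other modalities be trivial and take $B=\tfrac12$.

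For any feasible law, $I(U;Y)\le I(U^{(T)};Y_T)+I(U^{(m)};Y_m)\le C_T(B_T)+2\log2\,B_m$ with $B_T+B_m\le\tfrac12$. Here $C_T(b)=h(b)+b\log5$ ($h$ the binary entropy) is concave with slope $\log5>2\log2$ at $b=\tfrac12$. Hence the optimum spends nothing on modality $m$, and both sides of (i) equal $\log2+\tfrac12\log5$. The text-only supremum is attained here, so compactness does not rescue the argument. Yet the feasible law with $U^{(T)}\equiv0$ and $U^{(m)}$ uniform on $\{u_1,u_2\}$ has conditional information density identically $\log2$. So non-redundancy holds for every $\tau_1$. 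No argument along your lines can succeed without an extra hypothesis comparing marginal returns of budget across modalities.

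\textbf{The monotonicity step also fails, and (ii)--(iii) need repair.} Even with budget to spare, the step ``apply Lemma~\ref{lem:monotonicity} to a suitable coupling'' needs the new information density to dominate the old one almost surely on a common space, i.e.\ strict first-order stochastic dominance. That is the same kind of ordering you rightly refused for $\Delta$. Positivity of $I_{\tau_1}(U^{(m)};Y\mid U^{(\neg m)})$ controls only one expectile of the increment, and for $\tau_1\neq\tfrac12$ there is no chain rule. For $\tau_1>\tfrac12$, subadditivity only gives $e_{\tau_1}(L+\Delta)\ge e_{\tau_1}(L)+e_{1-\tau_1}(\Delta)$, and $e_{1-\tau_1}(\Delta)$ can be negative.

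For (ii) you follow the paper's two-stage supremum over copulas. However, restricting the marginals to laws attaining $C^{(m)}_{\tau_1,\tau_2}(B_m)$ breaks the tautology: the joint optimiser's marginals need not be individually optimal, nor need such optimisers exist. In addition, for $\tau_2\neq\tfrac12$ the copula also affects feasibility. The inner supremum should run over all joint laws with modality budgets $\mathbf{B}$ and $e_{\tau_2}[\mathrm{Cost}(U)]\le B$.

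For (iii), your explicit independent examples are a sounder route than the paper's argument. The paper uses the identity $\mathbb{E}[\rho_{\tau_1}(X+Z-\theta_X-\theta_Z)]=(2\tau_1-1)\Omega(X,Z)$ plus the claim that $\Omega\neq0$ for every non-comonotonic pair. That claim and your phrase ``strict for non-comonotonic sums'' share a defect: equality is governed by whether the exceedance sets $\{X>e_{\tau_1}(X)\}$ and $\{Z>e_{\tau_1}(Z)\}$ essentially coincide, not by comonotonicity. Moreover, neither argument passes from non-additivity of $e_{\tau_i}$ to non-additivity of the suprema.
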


\begin{proof}
We start with {Part (i): Proof of Strict Information Gap}.
Let $\mathcal{P}_{T, \tau_2}(B)$ be the embedded subset of $\mathcal{P}_{\tau_2}(B)$ where $U^{(m)}$ is degenerate (constant) for all $m \neq T$. By definition of the supremum over a restricted domain, we have $C_{\tau_1, \tau_2}(B) \ge \sup_{P_{U^{(T)}} \in \mathcal{P}_{T, \tau_2}(B)} e_{\tau_1}(L(U^{(T)}; Y))$. To establish strict inequality, we express the full joint information density via the conditional decomposition:
$
L(U;Y) = \log \frac{dP_{Y|U}}{dP_Y}(Y \mid U) = L(U^{(T)};Y) + \Delta(U;Y),
$
where $\Delta(U;Y) := \log \frac{dP_{Y|U}}{dP_{Y|U^{(T)}}}(Y \mid U)$. Taking the conditional expectation with respect to the textual sub-$\sigma$-algebra $\sigma(U^{(T)})$, we obtain:
$
\mathbb{E}[\Delta(U;Y) \mid U^{(T)}] = \mbox{KL}(P_{Y|U} \,\|\, P_{Y|U^{(T)}}).
$
By the Modal Non-Redundancy condition (Assumption A2), there exists an $m \neq T$ such that $U^{(m)}$ adds distinct predictive value, implying $\mbox{KL}(P_{Y|U} \,\|\, P_{Y|U^{(T)}}) \ge 0$ with strict inequality holding on a set of positive $\mathbb{P}$-measure. $\Delta(U;Y) \ge 0$ a.s. and $\mathbb{P}(\Delta(U;Y) > 0) > 0$. This guarantees that $L(U;Y) \ge L(U^{(T)};Y)$ with strict inequality on a non-null set. Because the expectile operator $e_{\tau_1}(\cdot)$ satisfies strict monotonicity for random variables ordered by strict stochastic dominance on positive measure, it follows that:
$
e_{\tau_1}(L(U;Y)) > e_{\tau_1}(L(U^{(T)};Y)).
$
Taking the supremum over $\mathcal{P}_{\tau_2}(B)$ preserves the strict inequality due to the open support guaranteed by non-redundancy, yielding the assertion.

We now prove {Part (ii) and (iii): Proof of Coupled Modal Decomposition and Non-Additivity}
Under Assumption A3, the cost is additive: $\mathrm{Cost}(U) = \sum_{m} \mathrm{Cost}_m(U^{(m)})$. However, because $\tau_2 \neq \tfrac{1}{2}$, the expectile constraint is itself non-additive. Let $B_m = e_{\tau_2}[\mathrm{Cost}_m(U^{(m)})]$. The joint budget constraint satisfies the implicit score equation:
\[
\mathbb{E}\left[ \rho_{\tau_2}\left( \sum_{m \in \mathcal{M}} \mathrm{Cost}_m(U^{(m)}) - B \right) \right] = 0.
\]
Because expectiles are sub-additive for $\tau_2 > \tfrac{1}{2}$ and super-additive for $\tau_2 < \tfrac{1}{2}$ for non-comonotonic costs, the admissible budget allocation space $\mathcal{A}_{\tau_2}(B)$ is defined by the coupling:
\[
\mathcal{A}_{\tau_2}(B) = \left\{ \mathbf{B} \in \mathbb{R}_+^M : e_{\tau_2}\left(\sum_{m \in \mathcal{M}} \mathrm{Cost}_m(U^{(m)})\right) \le B \right\} \neq \left\{ \mathbf{B} : \sum_{m \in \mathcal{M}} B_m \le B \right\}.
\]
Simultaneously, let $\mathbf{C}$ be the unique copula parsing the joint distribution $P_U$ into its marginals via Sklar's Theorem. The information density decomposes as:
\[
L(U;Y) = \sum_{m \in \mathcal{M}} L(U^{(m)};Y) + \log c(F_1(U^{(1)}), \dots, F_M(U^{(M)})),
\]
where $c$ is the copula density. We define the dual-expectile aggregation functional $\Phi_{\tau_1, \tau_2}$ as:
\[
\Phi_{\tau_1, \tau_2}\left(\{C_{\tau_1, \tau_2}^{(m)}(B_m)\}\right) := \sup_{\mathbf{C}} e_{\tau_1}\left( \sum_{m \in \mathcal{M}} L(U^{(m)};Y) + \log c \right).
\]
Now we evaluate additivity. Let $\theta = e_{\tau_1}(X+Z)$ for non-comonotonic information components $X$ and $Z$ from different modalities. The first-order characterization yields:
$
\mathbb{E}[\rho_{\tau_1}(X+Z - \theta)] = 0.
$
Assuming additivity $\theta = e_{\tau_1}(X) + e_{\tau_1}(Z) := \theta_X + \theta_Z$ and expanding via the identity $\psi_{\tau}(z) = \tau z_+ - (1-\tau)z_-$ reveals:
\[
\mathbb{E}[\rho_{\tau_1}(X+Z - \theta_X - \theta_Z)] = (2\tau_1 - 1) \Omega(X,Z),
\]
where $\Omega(X,Z) = \mathbb{E}[(X-\theta_X)(\mathbf{1}_{\{X+Z \ge \theta_X+\theta_Z\}} - \mathbf{1}_{\{X \ge \theta_X\}}) + (Z-\theta_Z)(\mathbf{1}_{\{X+Z \ge \theta_X+\theta_Z\}} - \mathbf{1}_{\{Z \ge \theta_Z\}})]$. For non-comonotonic variables, $\Omega(X,Z) \neq 0$. Thus, if $\tau_1 \neq \tfrac{1}{2}$, the right-hand side cannot vanish, breaking additivity of the objective. Similarly, if $\tau_2 \neq \tfrac{1}{2}$, the constraint boundaries warp non-linearly, preventing linear decomposition.

\end{proof}

\section{Conclusion}
This paper introduced a risk‑aware information theory based on expectiles. We proved that for asymmetry parameter $\tau \neq 1/2$, the resulting information measures are fundamentally distinct from Shannon's framework. In particular, we extended these results to define the {\it risk-aware} multi-user AWGN Multiple Access Channel capacity boundaries under asymmetric profiles. We demonstrated that risk-neutral Shannon metrics are fundamentally blind to tail-risk volatility because the expectation operator collapses the distribution before evaluating its upper-conditional tail integrals. Reaching and verifying  Superintelligence requires a complete transition to adaptive, risk-aware functional manifolds.

\section*{Biography}

 \begin{IEEEbiography}[{\includegraphics[width=1in,clip,keepaspectratio]{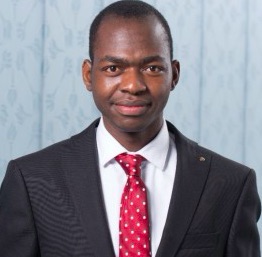}}]{Hamidou Tembine} (SM'13) is the co-founder of Timadie, Grabal  AI Mali,  co-chair of TF, founder of Guinaga, WETE, MFTG, LnG Lab, CI4SI and a Professor of Artificial Intelligence at UQTR, Quebec, Canada. He graduated in Applied Mathematics from Ecole Polytechnique (Palaiseau, France) and received the Ph.D. degree from INRIA and University of Avignon, France. He further received his Master degree in game theory and economics. His main research interests are learning, evolution, and games. In 2014, Tembine received the IEEE ComSoc Outstanding Young Researcher Award for his promising research activities for the benefit of society. He was the recipient of 10+ best paper awards in the applications of game theory. Tembine is a prolific researcher and holds 300+ scientific publications including magazines, letters, journals and conferences. He is author of the book on ``distributed strategic learning for engineers" (published at CRC Press, Taylor \& Francis 2012) which received book award 2014, and co-author of the book ``Game Theory and Learning in Wireless Networks'' (Elsevier Academic Press) and co-author of the book on ``Mean-Field-Type Games I-II " (Springer Nature)  and  for Engineers' (CRC Press0, and the author of the book ``GPT Meets Game Theory".  Tembine has been co-organizer of several scientific meetings on game theory in agriculture, water, food, environment, networking, wireless communications and smart energy systems. He has been a visiting researcher at University of California at Berkeley (US), University of McGill (Montreal, Quebec, Canada), University of Illinois at Urbana-Champaign (UIUC, US), Ecole Polytechnique Federale de Lausanne (EPFL, Switzerland) and University of Wisconsin (Madison, US). He has been a Simons Participant and a Senior Fellow 2020. He is a senior member of IEEE. He is a Next Einstein Fellow, Class of 2017.
 \end{IEEEbiography}


\begin{thebibliography}{10}
\bibitem{artzner1999coherent}
Artzner, Philippe, Freddy Delbaen, Jean‐Marc Eber, and David Heath. "Coherent measures of risk." Mathematical finance 9, no. 3 (1999): 203-228.

\bibitem{newey1987asymmetric}
Newey, Whitney K., and James L. Powell. "Asymmetric least squares estimation and testing." Econometrica: Journal of the Econometric Society (1987): 819-847.

\bibitem{renyi1961measures}
Rényi, Alfréd. "On measures of entropy and information." In Proceedings of the fourth Berkeley symposium on mathematical statistics and probability, volume 1: contributions to the theory of statistics, vol. 4, pp. 547-562. University of California Press, 1961.

\bibitem{liese2006divergences}
Liese, Friedrich, and Igor Vajda. "On divergences and informations in statistics and information theory." IEEE Transactions on Information Theory 52, no. 10 (2006): 4394-4412.

\bibitem{hung2020expectation}
Hu, Zhaolin, and L. Jeff Hong. "Kullback-Leibler divergence constrained distributionally robust optimization." Available at Optimization Online 1, no. 2 (2013): 9.

\bibitem{bellini2014generalized}
Bellini, Fabio, Bernhard Klar, Alfred Müller, and Emanuela Rosazza Gianin. "Generalized quantiles as risk measures." Insurance: Mathematics and Economics 54 (2014): 41-48.





\bibitem{basar2026a}
T. Başar, B. Djehiche, and H. Tembine.
\newblock {\em Mean-Field-Type Game Theory I: Foundations and New Directions}.
\newblock Birkhäuser, Cham, 2026.

\bibitem{basar2026b}
T. Başar, B. Djehiche, and H. Tembine.
\newblock {\em Mean-Field-Type Game Theory II: Applications}.
\newblock Birkhäuser, Cham, 2026.

\bibitem{audioiamali}
Hamidou Tembine and
                  Issa Bamia and
                  Massa Ndong and
                  Bakary Coulibaly and
                  Oumar Issiaka Traore and
                  Moussa Traore and
                  Moussa Sanogo and
                  Mamadou Eric Sangare and
                  Salif Kante and
                  Daryl Noupa Yongueng and
                  Hafiz Tiomoko Ali and
                  Malik Tiomoko and
                  Frejus Laleye and
                  Boualem Djehiche and
                  Wesmanegda Elisee Dipama and
                  Idris Baba Saje and
                  Hammid Mohammed Ibrahim and
                  Moumini Sanogo and
                  Marie Coursel Nininahazwe and
                  Abdul{-}Latif Siita and
                  Haine Mhlongo and
                  Teddy Nelvy Dieu Merci Kouka and
                  Mariam Serine Jeridi and
                  Mutiyamuogo Parfait Mupenge and
                  Lekoueiry Dehah and
                  Abdoul{-}Aziz Bio Sidi D. Bouko and
                  Wilfried Franceslas Zokoue and
                  Odette Richette Sambila and
                  Alina RS Mbango and
                  Mady Diagouraga and
                  Oumarou Moussa Sanoussi and
                  Gizachew Dessalegn and
                  Mohamed Lamine Samoura and
                  Bintou Laetitia Audrey Coulibaly,
{Breaking the Barriers of Text-Hungry and Audio-Deficient AI},
{2025},
{arXiv},
{2506.02443},

\bibitem{audioiamali2}
Hamidou Tembine: The Ghost in the Index: Knowledge Exclusion and the Fallacy of the Low-Resource Label: A Position paper,  Submitted manuscript, 2026.

\end{thebibliography}
\end{document}